%% Template for the submission to:
%%   The Annals of Statistics [AOS]
%%
%%%%%%%%%%%%%%%%%%%%%%%%%%%%%%%%%%%%%%%%%%%%%%
%% In this template, the places where you   %%
%% need to fill in your information are     %%
%% indicated by '???'.                      %%
%%                                          %%
%% Please do not use \input{...} to include %%
%% other tex files. Submit your LaTeX       %%
%% manuscript as one .tex document.         %%
%%%%%%%%%%%%%%%%%%%%%%%%%%%%%%%%%%%%%%%%%%%%%%
\documentclass[11pt]{imsart}
%\documentclass[aos]{imsart}

%% Packages

\RequirePackage{amsthm,amsmath,amsfonts,amssymb}
\RequirePackage[numbers,sort&compress]{natbib}
\RequirePackage[colorlinks,citecolor=blue,urlcolor=blue]{hyperref}%% uncomment this for coloring bibliography citations and linked URLs
\RequirePackage{graphicx}%% uncomment this for including figures

%%%%%%%%%%%%%%%%%%%%%%%%%%%%%%%%%%%%%%%%%%%%%%
%%                                          %%
%% Uncomment next line to change            %%
%% the type of equation numbering           %%
%%                                          %%
%%%%%%%%%%%%%%%%%%%%%%%%%%%%%%%%%%%%%%%%%%%%%%
%\numberwithin{equation}{section}
%%%%%%%%%%%%%%%%%%%%%%%%%%%%%%%%%%%%%%%%%%%%%%
%%                                          %%
%% For Axiom, Claim, Corollary, Hypothesis, %%
%% Lemma, Theorem, Proposition              %%
%% use \theoremstyle{plain}                 %%
%%                                          %%
%%%%%%%%%%%%%%%%%%%%%%%%%%%%%%%%%%%%%%%%%%%%%%
%\theoremstyle{plain}
%\newtheorem{???}{???}
%\newtheorem*{???}{???}
%\newtheorem{???}{???}[???]
%\newtheorem{???}[???]{???}
\theoremstyle{plain}

\newtheorem{theorem}{Theorem}[section]
\newtheorem{lemma}[theorem]{Lemma}
\newtheorem{corollary}{Corollary}
\newtheorem{proposition}{Proposition}

%%%%%%%%%%%%%%%%%%%%%%%%%%%%%%%%%%%%%%%%%%%%%%
%%                                          %%
%% For Assumption, Definition, Example,     %%
%% Notation, Property, Remark, Fact         %%
%% use \theoremstyle{remark}                %%
%%                                          %%
%%%%%%%%%%%%%%%%%%%%%%%%%%%%%%%%%%%%%%%%%%%%%%
%\theoremstyle{remark}
%\newtheorem{???}{???}
%\newtheorem*{???}{???}
%\newtheorem{???}{???}[???]
%\newtheorem{???}[???]{???}
\theoremstyle{remark}

%%%%%%%%%%%%%%%%%%%%%%%%%%%%%%%%%%%%%%%%%%%%%%
%% Please put your definitions here:        %%

%%%%%%%%%%%%%%%%%%%%%%%%%%%%%%%%%%%%%%%%%%%%%%%%%%%%%%%%%%%%%%%%%%%%%%%%%%%%%%%%%%%%%%%%%%%%%%%%%%%%%%%%%%%%
%%%%%%%%%%%%%%%%%%%%%%%%%%%%%%%%%%%%%%%%%%%%%%%%%%%%%%%%%%%%%%%%%%%%%%%%%%%%%%%%%%%%%%%%%%%%%%%%%%%%%%%%%%%%
%%%%%%%%%%%%%%%%%%%%%%%%%%%%%%%%                                  %%%%%%%%%%%%%%%%%%%%%%%%%%%%%%%%%%%%%%%%%%
%%%%%%%%%%%%%%%%%%%%%%%%%%%%%%%%      extra defs  and packages    %%%%%%%%%%%%%%%%%%%%%%%%%%%%%%%%%%%%%%%%%%
%%%%%%%%%%%%%%%%%%%%%%%%%%%%%%%%                                  %%%%%%%%%%%%%%%%%%%%%%%%%%%%%%%%%%%%%%%%%%
%%%%%%%%%%%%%%%%%%%%%%%%%%%%%%%%%%%%%%%%%%%%%%%%%%%%%%%%%%%%%%%%%%%%%%%%%%%%%%%%%%%%%%%%%%%%%%%%%%%%%%%%%%%%
%%%%%%%%%%%%%%%%%%%%%%%%%%%%%%%%%%%%%%%%%%%%%%%%%%%%%%%%%%%%%%%%%%%%%%%%%%%%%%%%%%%%%%%%%%%%%%%%%%%%%%%%%%%%

%%%%%%%%%%%%%%%%%%%%%%%%%%%%%%%%%%%%%%%%%%%%%%%%%%%%%%%
%%%%%%%%%%%%%%%%%%%%%%%%%%%%%%%%%%%%%%%%%%%%%%%%%%%%%%%
%%%%%%%%%%%%%%%%%%%%%%%%%%%%%%%%%%%%%%%%%%%%%%%%%%%%%%%

\usepackage{xcolor}
\usepackage{color}

\definecolor{darkgreen}{rgb}{0, 0.4,0}

\definecolor{purplebrown}{rgb}{0.5,0.1,0.6}

\definecolor{ultclupcol}{rgb}{0.1,0.5,0.5}

\definecolor{mytrycolor}{rgb}{0.5,0.7,0.2}

%\definecolor{ultclupcola}{rgb}{.8,0.6,.2}
%\newcommand{\ultclupcola}[1]{\textcolor{ultclupcola}{#1}}

\definecolor{ultclupcola}{rgb}{.5,0,.5}

\newcommand{\bl}[1]{\textcolor{blue}{#1}}

\newcommand{\prp}[1]{\textcolor{purple}{#1}}
\newcommand{\yellow}[1]{\textcolor{yellow}{#1}}
\newcommand{\green}[1]{\textcolor{green}{#1}}

\definecolor{shadebrown}{rgb}{0.1,0.1,0.9}
\definecolor{lightblue}{rgb}{0.2,0,1}

%\setbeamercolor{itemize item}{fg=orange, bg=yellow}

\usepackage{fancybox}
\usepackage{graphicx}
\usepackage{epstopdf}
\usepackage{wrapfig}

\usepackage{xcolor}
\usepackage{tcolorbox}
\tcbuselibrary{skins}

\definecolor{Red}{rgb}{0.00, 0.00, 0.00}

%%%%%%%%%%%%%%%%%%%%%%%%%%%%%%%%%%%%%%%%%%%%%%%%%%%%%
%%%%%%%%%%%%%%%%%%%%%%%%%%%%%%%%%%%%%%%%%%%%%%%%%%%%%
%%%%%%%%%%%%%%%%%%%%%%%%%%%%%%%%%%%%%%%%%%%%%%%%%%%%%
%%%%%%%%%%%%%%%%%%%%%%%%%%%%%%%%%%%%%%%%%%%%%%%%%%%%%
%
%\usepackage{amsmath}
%    \newcommand\numberthis{\addtocounter{equation}{1}\tag{\theequation}}
%    \usepackage{graphicx}
%    \usepackage{dcolumn}
%    \usepackage{bm}
%    \usepackage{amsmath}
%
%%%%%%%%%%%%%%%%%%%%%%%%%%%%%%%%%%%%%%%%%%%%%%%%%%%%%
%%%%%%%%%%%%%%%%%%%%%%%%%%%%%%%%%%%%%%%%%%%%%%%%%%%%%
%%%%%%%%%%%%%%%%%%%%%%%%%%%%%%%%%%%%%%%%%%%%%%%%%%%%%
%%%%%%%%%%%%%%%%%%%%%%%%%%%%%%%%%%%%%%%%%%%%%%%%%%%%%

%\newtcbox{\xmybox}{on line,
%arc=7pt,
%before upper={\rule[-3pt]{0pt}{10pt}},boxrule=1.1pt,
%boxsep=0pt,left=6pt,right=6pt,top=0pt,bottom=0pt,enhanced, frame style image=blueshade.png,interior style image=goldshade.png}

%%%%%%%%%%%%%%%%%xmybox
\newtcbox{\xmybox}{on line,
arc=7pt,
before upper={\rule[-3pt]{0pt}{10pt}},boxrule=0pt,
boxsep=0pt,left=6pt,right=6pt,top=0pt,bottom=0pt,enhanced, coltext=blue, colback=white!10!yellow}

\newtcbox{\xmyboxa}{on line,
arc=7pt,
before upper={\rule[-3pt]{0pt}{10pt}},boxrule=0pt,
boxsep=0pt,left=6pt,right=6pt,top=0pt,bottom=0pt,enhanced, colback=white!10!yellow}

\newtcbox{\xmyboxb}{on line,
arc=7pt,
before upper={\rule[-3pt]{0pt}{10pt}},boxrule=1pt,colframe=darkgreen!100!blue,
boxsep=0pt,left=6pt,right=6pt,top=0pt,bottom=0pt,enhanced, colback=white!10!yellow}

\newtcbox{\xmyboxc}{on line,
arc=7pt,
before upper={\rule[-3pt]{0pt}{10pt}},boxrule=.7pt,colframe=blue!100!blue,
boxsep=0pt,left=6pt,right=6pt,top=0pt,bottom=0pt,enhanced, coltext=blue, colback=white!10!yellow}

%%%%%%%%%%%%%%xmytbox
\newtcbox{\xmytboxa}{on line,
arc=7pt,
before upper={\rule[-3pt]{0pt}{10pt}},boxrule=.0pt,colframe=pink!50!yellow,
boxsep=0pt,left=6pt,right=6pt,top=0pt,bottom=0pt,enhanced, coltext=white, colback=blue!40!red}

\newtcbox{\xmytboxb}{on line,
arc=7pt,
before upper={\rule[-3pt]{0pt}{10pt}},boxrule=.0pt,colframe=pink!50!yellow,
boxsep=0pt,left=6pt,right=6pt,top=0pt,bottom=0pt,enhanced, coltext=white, colback=white!40!green}

%%%%%%%%%%%%%%%%%%%%%%%%%%%%%%%%%%%%%%%%%%%%%%%%%%%%%%%
%%%%%%%%%%%%%%%%%%%%%%%%%%%%%%%%%%%%%%%%%%%%%%%%%%%%%%%
%%%%%%%%%%%%%%%%%%%%%%%%%%%%%%%%%%%%%%%%%%%%%%%%%%%%%%%

%%%%%%%%%%%%%%%%%%%%%%%%%%%%%%%%%%%%%%%%%%%%%%%%%%%%%%%%%%%%%%%%%%%%%%%%%%%
%\usepackage[hyphens]{url}
%
%%\usepackage{xcolor}
%%\usepackage[urlbordercolor=green]{hyperref}
%
%
%\usepackage[colorlinks=true,
%            linkcolor=black,
%            urlcolor=blue,
%            citecolor=purple]{hyperref}
%
%%\renewcommand{\UrlBreaks}{\do\/\do\a\do\b\do\c\do\d\do\e\do\f\do\g\do\h\do\i\do\j\do\k\do\l\do\m\do\n\do\o\do\p\do\q\do\r\do\s\do\t\do\u\do\v\do\w\do\x\do\y\d\o\z\do\A\do\B\do\C\do\D\do\E\do\F\do\G\do\H\do\I\do\J\do\K\do\L\do\M\do\N\do\O\do\P\do\Q\do\R\do\S\do\T\do\U\do\V\do\W\do\X\do\Y\do\Z}
%\usepackage{breakurl}
%%%%%%%%%%%%%%%%%%%%%%%%%%%%%%%%%%%%%%%%%%%%%%%%%%%%%%%%%%%%%%%

%\renewcommand{\baselinestretch}{1}

%\setcounter{secnumdepth}{5}
%\setcounter{tocdepth}{5}
%
%\makeatletter
%\newcommand\subsubsubsection{\@startsection{paragraph}{4}{\z@}{-2.5ex\@plus -1ex \@minus -.25ex}{1.25ex \@plus .25ex}{\normalfont\normalsize\bfseries}}
%\newcommand\subsubsubsubsection{\@startsection{subparagraph}{5}{\z@}{-2.5ex\@plus -1ex \@minus -.25ex}{1.25ex \@plus .25ex}{\normalfont\normalsize\bfseries}}
%\makeatother

\def\y{{\bf y}}

% Example definitions.
% --------------------

\def\u{{\bf u}}

\def\y{{\bf y}}

\def\tr{\mbox{Tr}}

\def\tr{{\rm tr}\,}

\def\diag{{\rm diag}\,}

\def\be{\begin{equation}}
\def\ee{\end{equation}}
\def\ba{\left[\begin{array}}
\def\ea{\end{array}\right]}

\def\u{{\bf u}}

\def\y{{\bf y}}

\def\1{{\bf 1}}

\def\0{{\bf 0}}

\def\vecw{\mbox{vec}}
\def\rankw{\mbox{rank}}
\def\diag{\mbox{diag}}

\def\bU{\bar{U}}

\def\bV{\bar{V}}

%%%%%%%%%%strong

%%%%%%%%%%%%%weak

%%%%%%%%%%%%%%%%%%%%%%sec

%%%%%%%%%%%%%weaksigned

%%%%%%%%%%%%%weak

%\def\hw{\bar{\H}}

%%%%%%%%%%%%%%%%%%%%%%math R E
\def\mR{{\mathbb R}}
\def\mC{{\mathbb C}}

\def\mE{{\mathbb E}}
\def\mP{{\mathbb P}}

%%%%%%%%%%%%%cal

\def\calV{{\cal V}}
\def\calU{{\cal U}}

\def\lp{\left (}
\def\rp{\right )}

 \sloppy

%%%%%%%%%%%%%%%%%%%%%%%%%%%%%%%%%%%%%%%%%%%%%%%%%%%%%%%%%%%%%%%%%%%%%%%%%%%%%%%%%%%%%%%%%%%%%%%%%%%%%%%%%%%%
%%%%%%%%%%%%%%%%%%%%%%%%%%%%%%%%%%%%%%%%%%%%%%%%%%%%%%%%%%%%%%%%%%%%%%%%%%%%%%%%%%%%%%%%%%%%%%%%%%%%%%%%%%%%
%%%%%%%%%%%%%%%%%%%%%%%%%%%%%%%%                                  %%%%%%%%%%%%%%%%%%%%%%%%%%%%%%%%%%%%%%%%%%
%%%%%%%%%%%%%%%%%%%%%%%%%%%%%%%%      extra defs  and packages    %%%%%%%%%%%%%%%%%%%%%%%%%%%%%%%%%%%%%%%%%%
%%%%%%%%%%%%%%%%%%%%%%%%%%%%%%%%                                  %%%%%%%%%%%%%%%%%%%%%%%%%%%%%%%%%%%%%%%%%%
%%%%%%%%%%%%%%%%%%%%%%%%%%%%%%%%%%%%%%%%%%%%%%%%%%%%%%%%%%%%%%%%%%%%%%%%%%%%%%%%%%%%%%%%%%%%%%%%%%%%%%%%%%%%
%%%%%%%%%%%%%%%%%%%%%%%%%%%%%%%%%%%%%%%%%%%%%%%%%%%%%%%%%%%%%%%%%%%%%%%%%%%%%%%%%%%%%%%%%%%%%%%%%%%%%%%%%%%%

%%%%%%%%%%%%%%%%%%%%%%%%%%%%%%%%%%%%%%%%%%%%%%

\begin{document}

\begin{frontmatter}
%%%%%%%%%%%%%%%%%%%%%%%%%%%%%%%%%%%%%%%%%%%%%%
%%                                          %%
%% Enter the title of your article here     %%
%%                                          %%
%%%%%%%%%%%%%%%%%%%%%%%%%%%%%%%%%%%%%%%%%%%%%%
%\title{Recovery of Approximately Low Rank Matrices with Block Missing Patterns \\ %-- Nuclear norm's RMSE
%{\it Exact} Error of Approximately Low-Rank Matrix Completion with Missing Blocks \\
\title{
{\it Exact} Error in Matrix Completion: Approximately Low-Rank Structures and Missing Blocks}

%\title{A sample article title with some additional note\thanksref{T1}}
%\runtitle{{\it Exact} Error in Matrix Completion: Approximately Low-Rank Structures}

%\thankstext{T1}{A sample of additional note to the title.}

%\begin{aug}
%%%%%%%%%%%%%%%%%%%%%%%%%%%%%%%%%%%%%%%%%%%%%%%
%% Only one address is permitted per author. %%
%% Only division, organization and e-mail is %%
%% included in the address.                  %%
%% Additional information can be included in %%
%% the Acknowledgments section if necessary. %%
%% ORCID can be inserted by command:         %%
%% \orcid{0000-0000-0000-0000}               %%
%%%%%%%%%%%%%%%%%%%%%%%%%%%%%%%%%%%%%%%%%%%%%%%
%\and
%\author[B]{\fnms{???}~\snm{???}\ead[label=e3]{???@???}}
%%%%%%%%%%%%%%%%%%%%%%%%%%%%%%%%%%%%%%%%%%%%%%
%% Addresses                                %%
%%%%%%%%%%%%%%%%%%%%%%%%%%%%%%%%%%%%%%%%%%%%%%
%\address[A]{???\printead[presep={,\ }]{e1}}

%\address[B]{???\printead[presep={,\ }]{e2,e3}}

\author{
	Agostino Capponi \thanks{Email: ac3827@columbia.edu, Department of Industrial Engineering and Operations Research, Columbia University, 500 West 120th Street, New York, NY 10027.}
	\and
	Mijailo Stojnic \thanks{
		Email: flatoyer@gmail.com, Department of Industrial Engineering and Operations Research, Columbia University, 500 West 120th Street, New York, NY 10027.}
}

\maketitle

\begin{abstract}
	We study the completion of approximately low rank matrices with entries \emph{missing not at random} (MNAR). In the context of typical large-dimensional statistical settings, we establish a framework for the performance analysis of the nuclear norm minimization ($\ell_1^*$) algorithm. Our framework produces  \emph{exact} estimates of the \emph{worst-case} residual root mean squared error (\textbf{RMSE}) and the associated \emph{phase transitions} (PT), with both exhibiting remarkably simple characterizations.  Our results enable to {\it precisely} quantify the impact of key system parameters, including data heterogeneity, size of the missing block, and deviation from ideal low rankness, on the accuracy of $\ell_1^*$-based matrix completion. To validate our theoretical worst-case RMSE estimates, we conduct numerical simulations, demonstrating close agreement with their numerical counterparts.
	
	%We emphasize The \emph{\textbf{generality}} and \textbf{\emph{simplicity}} of our findings are particulary emphasized.
	
	%show that  %demonstrate an excellent agreement between
	%the obtained theoretical estimates are in perfect agreement and the $\ell_1^*$ practical behavior.
	%As a bonus, we also observe two intriguing features of the presented theoretical findings: (i) \emph{simplicity} of the obtained characterizing forms; and (ii)
	%We highlight the generality of the obtained results with respect to  deviations from the ideal low rankness scenario.
\end{abstract}

\begin{keyword}[class=MSC]
	\kwd[Primary ]{62B10}
	\kwd{94A16}
	\kwd[; secondary ]{62D10}
\end{keyword}

\begin{keyword}
	\kwd{Matrix Completion}
	\kwd{Approximately low rank}
	\kwd{Phase transitions}
	\kwd{Nuclear norm}
	%\kwd{???}
\end{keyword}

\end{frontmatter}

%\begin{keyword}[class=MSC]
%\kwd[Primary ]{62B10}
%\kwd{94A16}
%\kwd[; secondary ]{62D10}
%\end{keyword}

%\begin{keyword}
%\kwd{Matrix Completion}
%\kwd{Approximately low rank}
%\kwd{Phase transitions}
%\kwd{Nuclear norm}
%\kwd{???}
%\end{keyword}

%%%%%%%%%%%%%%%%%%%%%%%%%%%%%%%%%%%%%%%%%%%%%%
%% Please use \tableofcontents for articles %%
%% with 50 pages and more                   %%
%%%%%%%%%%%%%%%%%%%%%%%%%%%%%%%%%%%%%%%%%%%%%%
%\tableofcontents

%%%%%%%%%%%%%%%%%%%%%%%%%%%%%%%%%%%%%%%%%%%%%%
%%%% Main text entry area:

%%%%%%%%%%%%%%%%%%%%%%%%%%%%%%%%%%%%%%%%%%%%%%%%%%%%%%%%%%%%%%%%%%%%%%%%%%%%%%%%%%%%%%%%%%%%%%%%
%%%%%%%%%%%%%%%%%%%%%%%%%%%%%%%%%%%%%%%%%%%%%%%%%%%%%%%%%%%%%%%%%%%%%%%%%%%%%%%%%%%%%%%%%%%%%%%%
%%%%%%%%%%%%%%%%%%%%%%%%%%%%%%%%%%%%%%%%%%%%%%%%%%%%%%%%%%%%%%%%%%%%%%%%%%%%%%%%%%%%%%%%%%%%%%%%
\section{Introduction}
\label{sec:intro}
%%%%%%%%%%%%%%%%%%%%%%%%%%%%%%%%%%%%%%%%%%%%%%%%%%%%%%%%%%%%%%%%%%%%%%%%%%%%%%%%%%%%%%%%%%%%%%%%
%%%%%%%%%%%%%%%%%%%%%%%%%%%%%%%%%%%%%%%%%%%%%%%%%%%%%%%%%%%%%%%%%%%%%%%%%%%%%%%%%%%%%%%%%%%%%%%%
%%%%%%%%%%%%%%%%%%%%%%%%%%%%%%%%%%%%%%%%%%%%%%%%%%%%%%%%%%%%%%%%%%%%%%%%%%%%%%%%%%%%%%%%%%%%%%%%

Since the inception of matrix completion (MC), a large body of research (see, e.g., \cite{CR09matcomp,Rechtmatcomp11,CPmatcomp10,SAT05}) has focused on the completion when entries are missing at randomly (MAR) -- the most popular example being the completion of the movie-ratings matrix from the Netflix problem. In this paper, we study a variant of this problem which  has received significant attention in recent years (see, e.g., \cite{Agarwal2021,ABDIK21,XPlatfac10,BaiNg19,CahBaiNg21,DuPeXi23,KallusNIPS,AniShSh2020,MuhShSh2018}), namely the low rank completion (recovery) with entries {\it missing not at random} (MNAR).  Specifically, we consider a missingness pattern which follows a \emph{block structure}, common in many social and health science applications. For instance, suppose we want to estimate whether liver transplant surgery increases the life expectancy of individuals with cirrhosis. Then the liver transplant is an irreversible treatment, and estimating (\emph{causally inferring}) the counterfactual entries of the treated group (i.e., individuals subject to liver transplant surgery) would amount to estimating the block of entries where rows correspond to treated patients and columns to times since the treatment started. Another important example of block recovery arises in financial time series analysis, where many accounting based variables are missing after having been previously observed, leading to missing time blocks (see \cite{Pelgermissingness}).

The non-randomness of the missing entries invalidates the use of standard MC analytical methods. Exploring the block structure on the other hand, enables us to develop a novel analytical method and use it to characterize the performance of the nuclear norm minimization heuristic (which we denote by $\ell_1^*$) when employed for matrix completion and data imputation. However, some major challenges are faced along the way. A bit of contextualization helps to better understand them.

 \underline{\emph{Main challenges:}}  The performance analyses of the MC algorithms are usually done on a \emph{qualitative/scaling} level. Such analyses provide useful descriptive characterizations that unfortunately lack a full precision. The alternative, \emph{precise} ones, are much harder to obtain and typically  fall under the umbrella of the so-called \emph{phase-transition} (PT) considerations. Over the past couple of decades,  considerable  effort has been dedicated to obtaining the PTs of MC and their more general counterparts in low-rank recovery (LRR). This effort has yielded substantial success, as evidenced by studies such as \cite{DonGavMon13, Cinfidealwc22, OH10, CandesRecht11simple}. However, (with the exception of \cite{CandesRecht11simple} which, on the other hand, is a bit more limited in the range of covered parametric space) all these methods typically relate to, analytically slightly easier, LRR problems and usually rely on an underlying randomness in acquiring (or missing to acquire) the data. In other words, they relate to the LRR with the above mentioned MAR type of data acquisition paradigm.
 Despite the significant progress made in these directions, four major challenges have persisted as formidable hurdles: 
 
 (\textbf{\emph{i}}) Providing precise PT performance characterizations, visibly lacking in existing research, for \emph{truly} MNAR  in the ideal low rank MC; 
 (\textbf{\emph{ii}}) Bridging the gap between MNAR low-rank recovery (LRR) results and their counterparts in MC; (\textbf{\emph{iii}}) 
Establishing a connection between ideal and approximately low rank  MNAR MC;
 %Assuming that the first two challenges can be successfully addressed for idealized scenarios, akin to those found in \cite{DonGavMon13, Cinfidealwc22, OH10, CandesRecht11simple}, assessing the relevance of such success for the non-idealized scenarios of interest here; 
 and 
 (\textbf{\emph{iv}}) Fully characterizing the MNAR MC scenarios with matrices whose internal structure slightly deviates from the ideal low rankness.

 \underline{\emph{Main contributions:}} We develop a novel  mathematical machinery that helps address all of the four above mentioned challenges. To precisely state our contributions, a few technical preliminaries are needed. In particular, we assume the existence of a partially observed (with unobserved entries forming a block) \emph{approximately low rank} ground-truth matrix $X_{sol}$. This implies that $X_{sol}$ is characterized by a limited number of dominant singular values, which primarily define/describe its heterogeneity, and a larger number of smaller singular values. Under mild technical assumptions stated in Section~\ref{sec:cinfreleqv}, we consider the use of the $\ell_1^*$ heuristic for completion/recovery of $X_{sol}$.  Assuming that $\hat{X}$ is the $X_{sol}$'s estimate produced by $\ell_1^*$, our main focus is studying the behavior of $\textbf{RMSE}_{(ns)}\triangleq\|\hat{X}-X_{sol}\|_F$. The main contributions of our study are as follows:  \textbf{\emph{(i)}} We consider a statistical approach with minimal distributional assumptions and leverage results from random matrix theory (in particular, free probability theory (FPT) \cite{Voic86,Voic87,Voic91}), to solve the problem of characterizing the performance of $\ell_1^*$ in the large dimensional approximately low rank MC scenario. \textbf{\emph{(ii)}} We show that $\ell_1^*$ exhibits a sharp  \emph{phase-transitioning} (PT) behavior, characterized by a boundary that separates regions of system parameters where the finiteness of the $\textbf{RMSE}$ is guaranteed or not. Consequently, we present a precise phase transition type of analysis (where the qualitative/scaling types of estimates are not allowed) and determine the \emph{exact} location of the regions separating PT curve.  %\textbf{\emph{(iii)}}  The obtained curve is noted to precisely match the $\ell_1^*$ PT curve of the \emph{ideally} (non-approximately) low rank matrices.
 \textbf{\emph{(iii)}} We show that the allowable \emph{data heterogeneity}, which ensures the finiteness of the $\textbf{RMSE}$, grows linearly with matrix size. Furthermore, our analysis precisely determines the involved proportionality constants. \textbf{\emph{(iv)}} We obtain  \emph{explicit exact worst case} estimates for the (appropriately scaled) residual $\textbf{RMSE}$ and precisely explain the connections/relations between the obtained $\textbf{RMSE}$s and PTs.

  We conduct a set of numerical experiments which demonstrate a strong agreement between the theoretical and simulated values, even for rather small matrix dimensions  (of order of a few tens) .

%%%%%%%%%%%%%%%%%%%%%%%%%%%%%%%%%%%%%%%%%%%%%%%%%%%%%%%%%%%%%%%%%%%%%%%%%%%%%%%%%%%%%%%%%%%%%%%%
%%%%%%%%%%%%%%%%%%%%%%%%%%%%%%%%%%%%%%%%%%%%%%%%%%%%%%%%%%%%%%%%%%%%%%%%%%%%%%%%%%%%%%%%%%%%%%%%
%%%%%%%%%%%%%%%%%%%%%%%%%%%%%%%%%%%%%%%%%%%%%%%%%%%%%%%%%%%%%%%%%%%%%%%%%%%%%%%%%%%%%%%%%%%%%%%%
\subsection{Related prior work}
\label{sec:priorwwork}
%%%%%%%%%%%%%%%%%%%%%%%%%%%%%%%%%%%%%%%%%%%%%%%%%%%%%%%%%%%%%%%%%%%%%%%%%%%%%%%%%%%%%%%%%%%%%%%%
%%%%%%%%%%%%%%%%%%%%%%%%%%%%%%%%%%%%%%%%%%%%%%%%%%%%%%%%%%%%%%%%%%%%%%%%%%%%%%%%%%%%%%%%%%%%%%%%
%%%%%%%%%%%%%%%%%%%%%%%%%%%%%%%%%%%%%%%%%%%%%%%%%%%%%%%%%%%%%%%%%%%%%%%%%%%%%%%%%%%%%%%%%%%%%%%%

We position our work with respect to three streams of  MC literature: (\textbf{\emph{i}}) general nuclear norm MC; (\textbf{\emph{ii}}) MC with observations \emph{truly} missing not at random (MNAR); and (\textbf{\emph{iii}}) performance analysis of nuclear norm optimization in \emph{true} MNAR scenarios.

(\textbf{\emph{i}}) The pioneering studies \cite{RFPrank,CR09matcomp,SAT05} introduced the nuclear norm, in the ideal low rank context, as a computationally efficient alternative to the highly nonconvex lowest rank minimization, and provided its generic performance analysis. Follow-up studies on algorithmic and performance characterization include, e.g., \cite{CT10matcomp,KLT11matcomp,KMO10matcomp,RT11matcomp,MHT10,NW11matcomp}. Many generic MC studies, however, rely on various forms of evenly balanced missingness most efficiently characterized through the so-called missing completely at random (MCAR) observations. On the other hand, many practically relevant observational scenarios are not of MCAR type. One, then, typically distinguishes two settings: 1) observations are still missing at random (MAR), but not completely at random and potentially dependently on other observations; %(sometimes, albeit not fully adequately, also viewed as a form of MNAR)
 and 2) the missing patterns are either fully deterministic (we say, slightly differently from \cite{RubinInfMiss76}, of \emph{true} MNAR type) or strongly resemble a fully deterministic structure. The MAR type of observations have been extensively studied recently with strong algorithmic and theoretical results regarding consistency and residual errors. See \cite{FoySre11,Klopp14matcomp,CaiZhou16,SrShr05}, where the effects of the nonuniformly distributed missingness are studied on standard algorithms,  and \cite{ZhWaSa19,WaWoMaChCh21,BhCh21,SpBoJo18,SchSwSiChJo16,MaChen19}, where various algorithmic adaptations along the lines of IPW, primePCA, modified USVT and Nuclear norm, EM/FISTA, inverse propensity scoring (IPS), are respectively introduced and their performances analyzed to account for particular missingness types.

(\textbf{\emph{ii}}) The \emph{true} MNAR scenarios are analytically very hard and the available results are scarce. To the best of our knowledge (with the exception of our own \cite{Cinfidealwc22}\footnote{Some of the results appearing here, particularly those related to the ideal low rank scenario, were initially presented in a preliminary form in the conference proceedings  \cite{Cinfidealwc22}.}), we are the first to analyze a true MNAR instance on a \emph{phase transition} level of precision (where scaling types of estimates, $O(\cdot)$, are not allowed). %Consequently, we do not have examples from the literature that can be directly compared to our own results.
Nonetheless, it is worth mentioning several works that study related problems and obtain different types of results. For example, \cite{Agarwal2021} considers various MNAR scenarios, introduces specific algorithmic designs, and theoretically analyzes the consistency of the individual matrix entries estimators (similar results were obtained in \cite{KallusNIPS,AniShSh2020,MuhShSh2018}). Recently, methods akin to MC have been employed for factor modeling in finance. In particular, \cite{XPlatfac10} discusses the role of PCA and the resulting consistency estimates in several MNAR settings, including both the block and the staggered ones (see also \cite{DuPeXi23} for further extensions in these directions). Related lines of work \cite{BaiNg19} and \cite{CahBaiNg21} introduce different types of algorithms, called Tall-wide and Tall-project, and prove their consistency estimates.

(\textbf{\emph{iii}}) The most closely related studies to our work are \cite{ABDIK21} and \cite{Cinfidealwc22}. In \cite{ABDIK21}, a noisy emulation of an approximately low rank scenario is used to obtain {qualitative/scaling} estimates of the nuclear norm's $\textbf{RMSE}$ upper bounds. It is not an MNAR setting, because it uses randomness in treatment selection but allows for a staggered scenario analysis (for the relevance of such a scenario, see, e.g., \cite{AthSte02,ShaTou19,AthImb18}). However, the study also discusses the use of the nuclear norm for block missing entries. %(indicating  potentially excellent performance).
In \cite{Cinfidealwc22}, we consider an idealized setup where ground truth matrices have low rank (but \textbf{\emph{not}} approximately low rank) and provide phase transition results.

\underline{\emph{Key Highlights:}} We highlight two major departures of our work relative to existing studies in the area: \textbf{\emph{(i)}} (\emph{``noisy vs approximately low rank"}) The majority of  surveyed studies uses low rank matrix plus noise to emulate approximately low rank settings. While in a qualitative/scaling type of analysis such an emulation can be appropriate, in phase transition types of analysis such an emulation may lead to incorrect conclusions. On the other hand, our model is designed to reflect real-word datasets, such as the  California tobacco data widely used in many studies (e.g.\cite{ADHsynth10}). In such data sets, deviations from ideal low rankness are directly attributable to the data's inherent structure, rather than to  inaccuracies stemming from erroneous recording. \textbf{\emph{(ii)}} (\emph{``homogeneous vs heterogeneous"}) Most, if not all, of the above mentioned works either assume fixed rank ($k$) of the dominating part of the matrix or allow for a weak dependence with the matrix size ($n$), i.e. allow for $k$ such that $\beta=\lim_{n\rightarrow\infty} \frac{k}{n}\rightarrow 0$. This means that their setups can only capture extremely homogeneous data settings. In many application contexts, including finance and health sciences, the large amount of available data is highly heterogeneous.
Our paper shows that the methods we analyze can handle such data settings as well. Specifically, we not only establish the recoverability of ranks linearly proportional to the data size but also accurately determine the precise constants governing this proportionality. Our findings indicate that these methods can handle {\it high} data heterogeneity on a much larger scale than what assumed in previous studies.

\vspace{-.0in}
%%%%%%%%%%%%%%%%%%%%%%%%%%%%%%%%%%%%%%%%%%%%%%%%%%%%%%%%%%%%%%%%%
\section{Preliminaries of mathematical models }
\label{sec:models}
%%%%%%%%%%%%%%%%%%%%%%%%%%%%%%%%%%%%%%%%%%%%%%%%%%%%%%%%%%%%%%%%%
\vspace{-.0in}

We start with the basic description of  matrix completion (MC). As it is well known (see, e.g., \cite{CR09matcomp,SAT05}), MC assumes that there exists a \emph{``ground truth''} (presumably approximately low rank) data matrix $X\in\mR^{n\times n}$ and the so-called masking matrix $M\in\mR^{n\times n}$
\begin{equation}
M_{i,j}=\begin{cases}
          1, & \mbox{component  $(i,j)$ observed}  \\
          0, & \mbox{otherwise}.
        \end{cases}.  \label{eq:mc2}
\end{equation}
The following is at the heart of matrix completion
\begin{eqnarray}\label{eq:genmcl0posmmt}
\mbox{Find} & & X \nonumber \\
\hspace{-.0in} \mbox{subject to} & & Y=M\circ X,
\end{eqnarray}
where $\circ$ stands for the component-wise multiplication  and $X$ is ``not far away'' from $X_{sol}$ in a metric of choice. Intuitively, the above means the following: one observes via mask $M$ a collection of elements of  $X_{sol}$, stores the observations in $Y$ and them, inversely, attempts to recover $X$ that fits the observations and is not far away from $X_{sol}$. To provide any notion and quantification of ``not far away'' concept the structure of $X_{sol}$ must be known and properly taken into account.

%%%%%%%%%%%%%%%%%%%%%%%%%%%%%%%%%%%%%%%%%%%%%%%%%%%%%%%%%%%%%%%%%%%%%%%%%%%%%%%%%%%%%%%%%%%%%%%%%%%%%%%%%%%%
%%%%%%%%%%%%%%%%%%%%%%%%%%%%%%%%%%%%%%%%%%%%%%%%%%%%%%%%%%%%%%%%%%%%%%%%%%%%%%%%%%%%%%%%%%%%%%%%%%%%%%%%%%%%
%%%%%%%%%%%%%%%%%%%%%%%%%%%%%%%%%%%%%%%%%%%%%%%%%%%%%%%%%%%%%%%%%%%%%%%%%%%%%%%%%%%%%%%%%%%%%%%%%%%%%%%%%%%%
%%%%%%%%%%%%%%%%%%%%%%%%%%%%%%%%%%%%%%%%%%%%%%%%%%%%%%%%%%%%%%%%%%%%%%%%%%%%%%%%%%%%%%%%%%%%%%%%%%%%%%%%%%%%
%%%%%%%%%%%%%%%%%%%%%%%%%%%%%%%%%%%%%%%%%%%%%%%%%%%%%%%%%%%%%%%%%%%%%%%%%%%%%%%%%%%%%%%%%%%%%%%%%%%%%%%%%%%%
\subsection{Choice of $X_{sol}$}
\label{sec:choicex}
%%%%%%%%%%%%%%%%%%%%%%%%%%%%%%%%%%%%%%%%%%%%%%%%%%%%%%%%%%%%%%%%%%%%%%%%%%%%%%%%%%%%%%%%%%%%%%%%%%%%%%%%%%%%
%%%%%%%%%%%%%%%%%%%%%%%%%%%%%%%%%%%%%%%%%%%%%%%%%%%%%%%%%%%%%%%%%%%%%%%%%%%%%%%%%%%%%%%%%%%%%%%%%%%%%%%%%%%%
%%%%%%%%%%%%%%%%%%%%%%%%%%%%%%%%%%%%%%%%%%%%%%%%%%%%%%%%%%%%%%%%%%%%%%%%%%%%%%%%%%%%%%%%%%%%%%%%%%%%%%%%%%%%
%%%%%%%%%%%%%%%%%%%%%%%%%%%%%%%%%%%%%%%%%%%%%%%%%%%%%%%%%%%%%%%%%%%%%%%%%%%%%%%%%%%%%%%%%%%%%%%%%%%%%%%%%%%%
%%%%%%%%%%%%%%%%%%%%%%%%%%%%%%%%%%%%%%%%%%%%%%%%%%%%%%%%%%%%%%%%%%%%%%%%%%%%%%%%%%%%%%%%%%%%%%%%%%%%%%%%%%%%

Various forms of $X_{sol}$'s structure can be considered. We, here, assume a scenario typically characterized as
\begin{equation*}\label{eq:eq1}
 \emph{Approximately low rank:}  \qquad  Y=M\circ X_{sol}, X_{sol}=U\mbox{diag}(\sigma+\epsilon_\sigma)V^T, \|\sigma\|_0=k,
  \end{equation*}
where $ U^TU=I_{n\times n}, V^TV=I_{n\times n}$, and the vector of singular values imperfections, $\epsilon_{\sigma}$, can be small with respect to the the vector of dominating part of singular values, $\sigma$, in a variety of metrics. Above, $I_{n\times n}$ denotes the $n\times n$ identity matrix, and $\diag(\cdot)$ the operator that creates a column vector of the diagonal entries of its matrix argument (or reversely, the diagonal matrix of the column vector entries).

The standard foundational matrix completion literature (see, e.g., \cite{CR09matcomp,SAT05,Rechtmatcomp11,CPmatcomp10}) recognizes two additional forms as of potentially equally important interest. One of them considers observations imperfections as consequences of a noisy process $\epsilon_Y$
\begin{equation*}\label{eq:eq1}
\begin{array}{lcl}
 \emph{Noisy low rank:} & \qquad & Y=M\circ X_{sol}+\epsilon_Y, X_{sol}=U\mbox{diag}(\sigma)V^T, \|\sigma\|_0=k.
 \end{array}
 \end{equation*}
The other approximates the deviations from low rankness by actually removing them and assuming the
\begin{equation*}\label{eq:eq1}
\emph{Idealized low rank sceanrio:} \qquad Y=M\circ X_{sol}, X_{sol}=U\mbox{diag}(\sigma)V^T, \|\sigma\|_0=k<n.
 \end{equation*}
The main advantage of such an idealized simplification is that it usually allows for mathematical tractability. The potential drawback is a loss of modeling and estimation accuracy as the real data rarely perfectly fit such an idealized description.

%Clearly. the formalism works as long as $\text{rank}(X) > \text{rank}(X_{sol})$ for any $X\neq X_{sol}$ that fits $Y = M \circ X$. Throughout the paper, we assume this to be the case. %Moreover, given the level of difficulty of the analytical problems that we will be facing and a pressing need to focus on the conceptual contributions, we generically adopt the convention of skipping the discussions of trivial special cases that may appear at any point in the presentation.

Which of the three forms is to be used heavily depends on the concrete problem at hand. For example, although the first two may look as more complete and advantageous, it is not clear that they would be the right choice in scenarios where the accuracy of the idealized one is within acceptable tolerance. Also, choosing between the first and the second needs to be done very carefully and with respect to the true source of the imperfections (deviations from the ideal low rankness) which, again, almost exclusively depends on the problem at hand. As a concrete real data example where the importance of such a choice is especially important, it is instructive to look at the California tobacco study from \cite{ADHsynth10}, which is a golden standard in econometric literature on  synthetic control. In such an example, one postulates that it is likely that over years there is a parallel trend in tobacco consumption across different states. That implies that a matrix containing annual consumption data of each state is indeed approximately low rank. This intuition is confirmed by choosing the tobacco consumption matrix that contains the annual per capita cigarette consumption in the United States and plotting its spectrum (see the Appendix). Observe that the approximate low rankness comes as a consequence of the internal structure of the data rather than as a consequence of incorrectly recorded data. This clearly indicates that the first approach should be preferable over the second. Moreover, it provides a generic conclusion, that the first option is preferable whenever the deviation from the ideal low rankness is caused by the internal structure rather than brought on externally (as in, say, incorrectly recorded observations scenarios).

It is worth adding that the first two forms can be (and have often been throughout the literature) interchangeably used as each other's emulations (particularly so as facilitating tools in heavy mathematical analyses). The analysis that we conduct is of the phase transition precision level type, which implies that the two models \textbf{\emph{cannot}} be used as replacements of each other. Moreover, given the econometrics motivating applications, the approximately low rank one has to be considered. For the concreteness, we choose $X_{sol}=U\bar{\sigma} V^T$, where $\bar{\sigma}=\diag(\sigma,\epsilon_\sigma)$ and $\sigma\in\mR^k,\epsilon_\sigma\in\mR^{n-k},\sigma_{\epsilon}\in\mR$, $\sigma\gg \sigma_\epsilon \1,\epsilon_\sigma\leq \sigma_\epsilon \1$, with $\sigma$, $\epsilon_{\sigma}$, and $\sigma_{\epsilon}$ not changing as $n$ grows ($\1$'s are (here and everywhere in the paper) vectors of all ones of appropriate  dimensions; notation $a\gg b$ means that there is a sufficiently large \emph{constant} $C$ such that $a\geq C b$).

%%%%%%%%%%%%%%%%%%%%%%%%%%%%%%%%%%%%%%%%%%%%%%%%%%%%%%%%%%%%%%%%%%%%%%%%%%%%%%%%%%%%%%%%%%%%%%%%%%%%%%%%%%%%
%%%%%%%%%%%%%%%%%%%%%%%%%%%%%%%%%%%%%%%%%%%%%%%%%%%%%%%%%%%%%%%%%%%%%%%%%%%%%%%%%%%%%%%%%%%%%%%%%%%%%%%%%%%%
%%%%%%%%%%%%%%%%%%%%%%%%%%%%%%%%%%%%%%%%%%%%%%%%%%%%%%%%%%%%%%%%%%%%%%%%%%%%%%%%%%%%%%%%%%%%%%%%%%%%%%%%%%%%
%%%%%%%%%%%%%%%%%%%%%%%%%%%%%%%%%%%%%%%%%%%%%%%%%%%%%%%%%%%%%%%%%%%%%%%%%%%%%%%%%%%%%%%%%%%%%%%%%%%%%%%%%%%%
%%%%%%%%%%%%%%%%%%%%%%%%%%%%%%%%%%%%%%%%%%%%%%%%%%%%%%%%%%%%%%%%%%%%%%%%%%%%%%%%%%%%%%%%%%%%%%%%%%%%%%%%%%%%
\subsection{Choice of $M$}
\label{sec:choicex}
%%%%%%%%%%%%%%%%%%%%%%%%%%%%%%%%%%%%%%%%%%%%%%%%%%%%%%%%%%%%%%%%%%%%%%%%%%%%%%%%%%%%%%%%%%%%%%%%%%%%%%%%%%%%
%%%%%%%%%%%%%%%%%%%%%%%%%%%%%%%%%%%%%%%%%%%%%%%%%%%%%%%%%%%%%%%%%%%%%%%%%%%%%%%%%%%%%%%%%%%%%%%%%%%%%%%%%%%%
%%%%%%%%%%%%%%%%%%%%%%%%%%%%%%%%%%%%%%%%%%%%%%%%%%%%%%%%%%%%%%%%%%%%%%%%%%%%%%%%%%%%%%%%%%%%%%%%%%%%%%%%%%%%
%%%%%%%%%%%%%%%%%%%%%%%%%%%%%%%%%%%%%%%%%%%%%%%%%%%%%%%%%%%%%%%%%%%%%%%%%%%%%%%%%%%%%%%%%%%%%%%%%%%%%%%%%%%%
%%%%%%%%%%%%%%%%%%%%%%%%%%%%%%%%%%%%%%%%%%%%%%%%%%%%%%%%%%%%%%%%%%%%%%%%%%%%%%%%%%%%%%%%%%%%%%%%%%%%%%%%%%%%

The problem in (\ref{eq:genmcl0posmmt}) (with a random $M$) fits within a standard MC setup. To capture the non-random block structure of missing entries (visualized in Figure \ref{fig:Mnlockcinf}), we consider the masking matrix
\begin{eqnarray}\label{eq:cinfanl2a}
  M^{(l_1,l_2)} & \triangleq & \1_{n\times 1}\1_{n\times 1}^T- I^{(l_1)}(I^{(l_1)})^T   \1_{n\times 1}\1_{n\times 1}^T   I^{(l_2)}(I^{(l_2)})^T
  \nonumber \\
   I^{(l)}  &   \triangleq  &  \begin{bmatrix}
        \0_{l\times (n-l)} \\
        I_{(n-l)\times (n-l)}
      \end{bmatrix},
\end{eqnarray}
where $\1/\0$ are the column vectors and matrices of all ones/zeros with the dimensions specified in their subscripts.
\begin{figure}[ht]
\vskip -0.17in
\begin{center}
%\centerline{\includegraphics[width=\columnwidth]{icml_numpapers}}
\centerline{\includegraphics[width=13.5cm,height=7.5cm]{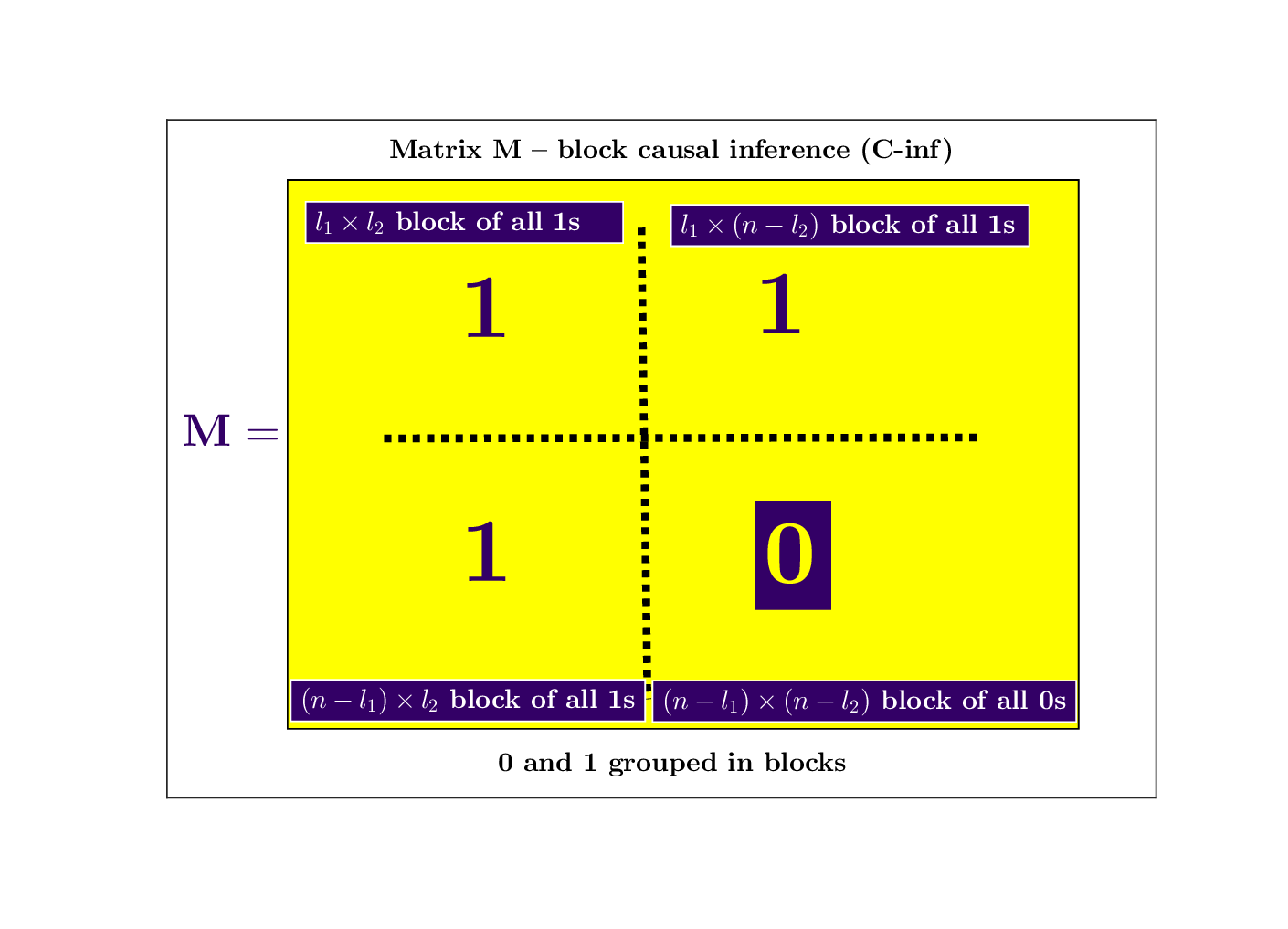}}
%\centerline{\includegraphics[width=\columnwidth]{cinfspecMmatrixblockcinfPAP-eps-converted-to}}
\vspace{-.4in}
\caption{Matrix $M\triangleq M^{(l_1,l_2)}$ -- block causal inference setup}
\label{fig:Mnlockcinf}
\end{center}
\vskip -0.32in
\end{figure}
We study the most challenging large $n$ linear regime and, for simplicity of the presentation, assume the so called square block case with $l\triangleq l_1=l_2$ and $\beta\triangleq\lim_{n\rightarrow\infty}\frac{k}{n}$ and $\eta\triangleq\lim_{n\rightarrow\infty}\frac{l}{n}$.
Such an assumption makes the analysis somewhat easier while imposing no substantial limitations on the generality of the final results. Namely, the obtained results can always be used as bounds on the non-square ones (simply trimming rectangles so that they become squares results in using less available data and obtaining larger residual $\textbf{RMSE}$'s).

 %%%%%%%%%%%%%%%%%%%%%%%%%%%%%%%%%%%%%%%%%%%%%%%%%%%%%%%%%%%%%%%%%%%%%%%%%%%%%%%%%%%%%%%%%%%%%%%%%%
%%%%%%%%%%%%%%%%%%%%%%%%%%%%%%%%%%%%%%%%%%%%%%%%%%%%%%%%%%%%%%%%%%%%%%%%%%%%%%%%%%%%%%%%%%%%%%%%%%
\section{Mathematical analysis}\label{sec:cinfreleqv}
%%%%%%%%%%%%%%%%%%%%%%%%%%%%%%%%%%%%%%%%%%%%%%%%%%%%%%%%%%%%%%%%%%%%%%%%%%%%%%%%%%%%%%%%%%%%%%%%%%
%%%%%%%%%%%%%%%%%%%%%%%%%%%%%%%%%%%%%%%%%%%%%%%%%%%%%%%%%%%%%%%%%%%%%%%%%%%%%%%%%%%%%%%%%%%%%%%%%%
\vspace{-0.0in}

We start by defining $\ell_p^*(X)$ as the so-called $\ell_p$ (quasi) norm of $\sigma(X)$ (the vector of singular values of $X$),
\vspace{-0.00in}
\begin{eqnarray}\label{eq:mclinsys1b1}
\ell_p^*(X)\triangleq \ell_p(\sigma(X)), p\in\mR_+, \quad \mbox{with} \quad \ell_0^*(X) \triangleq \lim_{p\rightarrow 0}\ell_p^*(X).
\end{eqnarray}\vspace{-0.0in}
Instead of solving (\ref{eq:genmcl0posmmt}), the MC literature often opts for its tightest convex relaxation
%\vspace{-.07in}
%\begin{center}
%	\tcbset{beamer,lower separated=false, fonttitle=\bfseries,
%		coltext=black , interior style={top color=orange!10!yellow!30!white, bottom color=yellow!80!yellow!50!white}, title style={left color=cyan!20!black!60!purple, right color=red!60!yellow!40!orange!80!white}}
%	%\tcbset{colback=yellow!95!white,colframe=blue!95!white,width=(\linewidth-4pt)/4,equal height group=AT,before=,after=\hfill,fonttitle=\bfseries}
%	\begin{tcolorbox}[title=$\ell_1^*$-minimization (\yellow{MMT}), width=4.2in]
%		\vspace{-.18in}
%		\begin{eqnarray}\label{eq:genmcl1posmmt}
%			\hat{X} \triangleq  \min_{X} & & \ell_1^*(X) \nonumber \\
%			\mbox{subject to} & & Y=M\circ X, \hspace{.5in}
%			%\vspace{-.5in}
%		\end{eqnarray}
%		\vspace{-.31in}
%	\end{tcolorbox}
%\end{center}
%\vspace{-.05in}
\vspace{-0.0in}
\begin{eqnarray}\label{eq:genmcl1posmmt}
			\hat{X} \triangleq  \min_{X} & & \ell_1^*(X) \nonumber \\
			\mbox{subject to} & & Y=M\circ X, Y\triangleq M\circ X_{sol}. \hspace{.0in}
			\vspace{-.0in}
\end{eqnarray}
We refer to the introductory papers \cite{RFPrank,CR09matcomp,SAT05,FazHinBoyd01} for additional background on the relation between $\ell_0^*$ and $\ell_1^*$ and connections to low rank recovery (LRR) and MC (related considerations, can be found in a long line of excellent follow-up work, e.g., \cite{CandesRecht11simple,CT10matcomp,CanLiMaWri09,KMO10matcomp,KMO10matcomp1,Klopp14matcomp,KLT11matcomp,NW11matcomp,NW12matcomp,RT11matcomp,MHT10}). It is well known in the structured recovery literature that, in the idealized scenario, $\sigma_\epsilon=0$, the above heuristics often produce the true solution, i.e., one often has $\hat{X}=X_{sol}$. What is somewhat hidden in such a folklore knowledge, is that the masking matrix $M$ is typically viewed as well balanced or comprised of randomly located zeros and ones. The problem we study, though, presents three key differences: 1) scenario of interest is \emph{non-idealized}, i.e., $\sigma_\epsilon>0$; 2) $M$ is not random; and 3) the locations of zeros and ones form a very particular \emph{block} structure. These differences make it impossible to follow standard/known CS/MC analytical paths. Instead, we take the differences to our advantage, avoid the known analytical paths, and create a new one to characterize the $\ell_1^*$'s performance. Since the considered scenario is non-idealized, there will be a nonzero residual estimation error $\hat{W}\triangleq\hat{X}-X_{sol}$. Any $\ell_1^*$ performance analysis then must account for a characterization of $\hat{W}$. Our approach is a no exception in that sense. However, as particularly relevant to us, we highlight below the \emph{precision} of the conducted analysis.
\vspace{-.0in}
\begin{center}
 \tcbset{beamer,lower separated=false, fonttitle=\bfseries,
coltext=black , interior style={top color=orange!10!yellow!30!white, bottom color=yellow!80!yellow!50!white}, title style={left color=black, right color=red!70!orange!30!blue}}
\begin{tcolorbox}[title=Ultimate performance analysis goal:, width=5.2in]
\vspace{-.18in}
 \begin{eqnarray*}\label{eq:genmcl1poseqvrmse}
\mbox{\underline{\textbf{\emph{Precise}}} characterization of} \hspace{.03in} \mathbf{RMSE}_{(ns)}\triangleq \|\hat{W}\|_F \triangleq \|\hat{X}-X_{sol}\|_F.
\end{eqnarray*}
 \vspace{-.3in}
\end{tcolorbox}
\end{center}
\vspace{-.05in}
Below, we provide such a characterization relying on typical statistical viewing of the problem %,StojnicICASSP10var,StojnicICASSP10knownsupp,StojnicICASSP09,StojnicISIT2010binary,StojnicRegRndDlt10,StojnicGenLasso10}
and utilizing advanced sophisticated probabilistic concepts that we introduce along the way. We begin with the following algebraic result.
\begin{theorem}(Algebraic characterization of $W$)
Consider a $\bU\in\mR^{n\times k}$ such that $\bU^T\bU=I_{k\times k}$ and a $\bV\in\mR^{n\times k}$ such that $\bV^T\bV=I_{k\times k}$ and an approximately  rank $k$ matrix $X_{sol}=X\in\mR^{n\times n}$, such that $X_{sol}=U\bar{\sigma}V^T, \bar{\sigma}=\diag(\sigma,\epsilon_\sigma),\sigma\in\mR^k,\epsilon_\sigma\in\mR^{n-k},\sigma\gg \sigma_\epsilon,\epsilon_\sigma\leq \sigma_\epsilon$.
Also, let the orthogonal spans $\bU^{\perp}\in\mR^{n\times (n-k)}$ and $\bV^{\perp}\in\mR^{n\times (n-k)}$ be such that $U\triangleq \begin{bmatrix}
    \bU & \bU^{\perp}
   \end{bmatrix}$ and $V\triangleq \begin{bmatrix}
    \bV & \bV^{\perp}
   \end{bmatrix}$ and
\begin{equation}\label{eq:cinfthm0}
U^TU\triangleq \begin{bmatrix}
    \bU & \bU^{\perp}
   \end{bmatrix}^T\begin{bmatrix}
    \bU & \bU^{\perp}
   \end{bmatrix}=I_{n\times n} \quad \mbox{and} \quad
 V^TV \triangleq\begin{bmatrix}
    \bV & \bV^{\perp}
   \end{bmatrix}^T\begin{bmatrix}
    \bV & \bV^{\perp}
   \end{bmatrix}=I_{n\times n}.
 \end{equation}
With $M\in\mR^{n\times n}$ as in (\ref{eq:cinfanl2a}), assume that $Y= M\circ X_{sol}$ and let $\hat{X}$ be the solution of (\ref{eq:genmcl1posmmt}). Set
 \begin{equation}
\hat{W} = \arg \min_{W,M\circ W=\0_{n\times n}}  \tr(\bU^TW\bV)+\ell_1^*(\diag(\epsilon_\sigma)+(\bU^{\perp})^TW\bV^{\perp}).
\label{eq:cinfcor1}
\end{equation}
Then for any $X_{sol}$ that satisfies the above setup one has
\begin{equation}
 \textbf{\emph{RMSE}}_{(ns)}=\|\hat{X}-X_{sol}\|_F\leq\|\hat{W}\|_F.\label{eq:cinfcor1a}
\end{equation}
Moreover, there exists an $X_{sol}$ such that
\begin{equation}
 \textbf{\emph{RMSE}}_{(ns)}=\|\hat{X}-X_{sol}\|_F=\|\hat{W}\|_F.\label{eq:cinfcor1a1}
\end{equation}
   \label{thm:thm1}
\end{theorem}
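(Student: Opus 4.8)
The plan is to pass to the residual variable $W_{\star}\triangleq\hat{X}-X_{sol}$, exploit the rotational invariance of $\ell_1^*(\cdot)$ to work in the singular basis of $X_{sol}$, and recognize the functional in (\ref{eq:cinfcor1}) as a universal dual lower bound on $\ell_1^*(\hat X)$. First I would note that both $\hat X$ and $X_{sol}$ satisfy the constraint $M\circ(\cdot)=Y$, so $M\circ W_{\star}=\0_{n\times n}$ and $W_{\star}$ is feasible for (\ref{eq:cinfcor1}); optimality of $\hat X$ in (\ref{eq:genmcl1posmmt}) then gives $\ell_1^*(\hat X)\le \ell_1^*(X_{sol})=\|\sigma\|_1+\|\epsilon_\sigma\|_1$. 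Since $U,V$ are orthogonal, $\ell_1^*(\hat X)=\ell_1^*(\bar\sigma+U^TW_{\star}V)$, and I would partition $U^TW_{\star}V$ into the four blocks $A=\diag(\sigma)+\bU^TW_{\star}\bV$, $B=\bU^TW_{\star}\bV^{\perp}$, $C=(\bU^{\perp})^TW_{\star}\bV$, and $D=\diag(\epsilon_\sigma)+(\bU^{\perp})^TW_{\star}\bV^{\perp}$.

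The key step is a dual-certificate lower bound on the block nuclear norm. Choosing $Z=\left[\begin{smallmatrix} I_{k\times k} & 0 \\ 0 & Z_D\end{smallmatrix}\right]$ with $Z_D\in\partial\ell_1^*(D)$, one has $\|Z\|_{\mathrm{op}}=\max(1,\|Z_D\|_{\mathrm{op}})=1$, so by nuclear/operator duality $\ell_1^*(\bar\sigma+U^TW_{\star}V)\ge \langle Z,\bar\sigma+U^TW_{\star}V\rangle=\tr(A)+\ell_1^*(D)=\|\sigma\|_1+\big(\tr(\bU^TW_{\star}\bV)+\ell_1^*(D)\big)$. The parenthesized term is exactly the objective $f(W_{\star})$ of (\ref{eq:cinfcor1}), and the bound is tight at $W=\0_{n\times n}$. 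Combining with $\ell_1^*(\hat X)\le\|\sigma\|_1+\|\epsilon_\sigma\|_1$ yields $f(W_{\star})\le \|\epsilon_\sigma\|_1=f(\0_{n\times n})$, i.e.\ the true residual lies in the sublevel set over which the surrogate program (\ref{eq:cinfcor1}) minimizes; in particular $f(\hat W)\le f(W_{\star})\le f(\0_{n\times n})$.

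To upgrade this to the Frobenius inequality (\ref{eq:cinfcor1a}) and the matching instance (\ref{eq:cinfcor1a1}), I would compare the true objective $g(W)\triangleq\ell_1^*(X_{sol}+W)$ with its surrogate $\underline g(W)\triangleq\|\sigma\|_1+f(W)$, which agree at $W=\0_{n\times n}$. Expanding the block nuclear norm through the Schur complement gives, for $A$ invertible, $g(W)=\tr(A)+\ell_1^*(D-CA^{-1}B)+o(1)$, so the gap $g-\underline g$ is a nonnegative convex correction of size $O(1/\min_i\sigma_i)$ that vanishes both at $W=\0_{n\times n}$ and as the dominant singular values grow; this is precisely where the dominance assumption $\sigma\gg\sigma_\epsilon$ enters. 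I would then argue that this regularizing gap keeps the minimizer $W_{\star}$ of the true convex objective $g$ no farther from the origin than the minimizer $\hat W$ of $\underline g$, giving $\|W_{\star}\|_F\le\|\hat W\|_F$, and that letting the dominant part grow (equivalently, selecting the extremal $X_{sol}$ with $\sigma$ large and aligned to the worst-case failure direction) drives $g\to\underline g$ uniformly on bounded sets, forcing $W_{\star}\to\hat W$ and attaining equality in (\ref{eq:cinfcor1a1}).

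The change of variables, the rotation into the singular basis, and the duality lower bound are routine. The \textbf{main obstacle} is the last step: establishing the \emph{direction} of the Frobenius-norm inequality and its achievability, because (\ref{eq:cinfcor1}) is phrased as a \emph{minimization} of a lower bound rather than a direct maximization of residual norm. I expect this to require (i) the Schur-complement expansion with uniform error control on bounded sets, (ii) a convexity/monotonicity argument showing that the nonnegative gap $g-\underline g$ can only contract the norm of the true minimizer, and (iii) a limiting construction of the extremal $X_{sol}$ realizing equality. Additional care is needed to ensure the effective optimization stays in the region where $A$ remains sign-aligned, so that the certificate $Z$ with top block $I_{k\times k}$ is admissible and the bound stays tight — a condition guaranteed exactly by $\sigma\gg\sigma_\epsilon$.
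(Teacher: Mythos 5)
Your architecture coincides with the paper's own route (Appendix, Proposition~\ref{prop:cauinf} and Corollaries~\ref{cor:cinfcor1}--\ref{cor:cinfcor2}, then Theorem~\ref{thm:thmA1}): pass to the residual $W_\star=\hat X-X_{sol}$, which is feasible because $M\circ W_\star=\0_{n\times n}$; lower-bound the block nuclear norm by restricting the dual variable to block-diagonal certificates (your $Z=\mathrm{blockdiag}(I_{k\times k},Z_D)$ is exactly the paper's restriction to ${\cal L}_*^0$ in (\ref{eq:genmcproof1a})); replace $\ell_1^*$ by $\tr$ on the dominant block; and obtain achievability by letting the dominant spectrum ``overwhelm'' everything else --- your extremal $X_{sol}$ with $\sigma$ large is the paper's arbitrarily large $\Lambda_x$ construction. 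Up to and including the chain $f(\hat W)\le f(W_\star)\le f(\0_{n\times n})=\|\epsilon_\sigma\|_1$, your derivation is sound and matches (\ref{eq:genmcproof1})--(\ref{eq:genmcproof2}) adapted to the approximately low rank setting.

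The genuine gap is the passage from that chain to $\|W_\star\|_F\le\|\hat W\|_F$. Membership of $W_\star$ in the sublevel set $\{W:\,f(W)\le f(\0_{n\times n})\}$, with $\hat W$ the \emph{minimizer} of $f$, gives no Frobenius-norm comparison: a sublevel set can contain points of far larger norm than the minimizer. Your proposed repair --- that the nonnegative gap $h=g-\underline g$, vanishing at $W=\0_{n\times n}$, ``can only contract the norm of the true minimizer'' --- is not a valid principle: $h$ is a difference of two convex functions and need not be convex, and even when $h$ is convex with minimum at the origin, first-order optimality only yields sign information on inner products with subgradients, not a bound of $\|W_\star\|_F$ by $\|\hat W\|_F$. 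What the paper actually leans on is the \emph{two-sided} character of Proposition~\ref{prop:cauinf}: in the regime $\sigma\gg\sigma_\epsilon$ the implications in (\ref{eq:genmcproof1})--(\ref{eq:genmcproof2}) reverse, so the residual is not merely in a sublevel set of the surrogate but is (in the limit) itself a minimizer of the surrogate program, from which the inequality is read off as ``the true residual can only be better than the surrogate's minimizer,'' with equality for the extremal $X_{sol}$. To make your version self-contained you would need to quantify the Schur-complement correction $g-\underline g=O(1/\min_i\sigma_i)$ uniformly on bounded sets and show directly that minimizers of $g$ converge to $\hat W$ (e.g., via the strong convexity exploited in the minimum-norm selection of Theorem~\ref{thm:thm2}), rather than invoking a contraction heuristic; as written, that step would fail.
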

\begin{proof}
Presented in the Appendix.
\end{proof}

To analyze the optimization problem presented in (\ref{eq:cinfcor1}), we adopt an approach that relies on using a generic Lagrangian mechanism. The initial step in this process involves explicitly formulating the optimization problem
 from (\ref{eq:cinfcor1})
\begin{eqnarray}
f_{pr}(M;U,V) \triangleq \min_{W} & &  \tr(\bU^TW\bV)+\ell_1^*(\diag(\epsilon_\sigma)+(\bU^{\perp})^TW\bV^{\perp}) \nonumber \\
\mbox{subject to} & & M\circ W=\0_{n\times n}.
\label{eq:cinfanl2}
\end{eqnarray}
One can then write the corresponding Lagrangian and the Lagrange dual function,
\begin{eqnarray}
{\cal L}(W,\Lambda) & \triangleq &  \tr(\bU^TW\bV)+\ell_1^*(\diag(\epsilon_\sigma)+(\bU^{\perp})^TW\bV^{\perp})+\Theta (M\circ W) \nonumber \\
& = & \max_{\Lambda,\Lambda^T\Lambda\leq I} \lp \tr(\bU^TW\bV)+\tr(\Lambda(\diag(\epsilon_\sigma)+(\bU^{\perp})^TW\bV^{\perp}))+\Theta (M\circ W)\rp\nonumber \\
& = & \max_{\Lambda,\Lambda^T\Lambda\leq I} \lp \tr\lp (\bV\bU^T+\bV^{\perp}\Lambda(\bU^{\perp})^T+\Theta\circ M)W\rp + \tr(\Lambda\diag(\epsilon_\sigma)) \rp,\label{eq:cinfanl3}
\end{eqnarray}
and
\begin{eqnarray}
g(\Theta) & \triangleq & \min_{W} {\cal L}(W,\Lambda,\Theta,\gamma) \nonumber \\
 & = & \min_{W}\max_{\Theta,\Lambda,\Lambda^T\Lambda\leq I} \lp \tr\lp (\bV\bU^T+\bV^{\perp}\Lambda(\bU^{\perp})^T+\Theta\circ M)W\rp + \tr(\Lambda\diag(\epsilon_\sigma) \rp.\label{eq:cinfanl4}
\end{eqnarray}
Utilizing the Lagrangian duality one then further has
\begin{eqnarray}
g(\Theta) & = & \min_{W}\max_{\Lambda,\Lambda^T\Lambda\leq I} \lp \tr\lp (\bV\bU^T+\bV^{\perp}\Lambda(\bU^{\perp})^T+\Theta\circ M)W\rp + \tr(\Lambda\diag(\epsilon_\sigma) \rp \nonumber \\
& \geq & \max_{\Lambda,\Lambda^T\Lambda\leq I}  \min_{W}\lp \tr\lp (\bV\bU^T+\bV^{\perp}\Lambda(\bU^{\perp})^T+\Theta\circ M)W\rp + \tr(\Lambda\diag(\epsilon_\sigma) \rp.\label{eq:cinfanl5}
\end{eqnarray}
Moreover, using the definition of $M$ from (\ref{eq:cinfanl2a}) we have
 \begin{eqnarray}
f_{pr}(M;U,V) & \geq  & \max_{\Theta} g(\Theta) \nonumber \\
 & = & \max_{\Lambda,\Lambda^T\Lambda\leq I,(I^{(l)})^T(\bV\bU^T+\bV^{\perp}\Lambda(\bU^{\perp})^TI^{(l)}=0} \tr(\Lambda\diag(\epsilon_\sigma)).\label{eq:cinfanl5a}
\end{eqnarray}
It is now critically important to observe that the above holds as long as the following optimizing condition is feasible
			\begin{equation}
			 \exists \Lambda| \Lambda^T\Lambda\leq I \quad \mbox{and} \quad (I^{(l)})^T \lp \bV\bU^T+\bV^{\perp}\Lambda(\bU^{\perp})^T\rp I^{(l)}=0.\label{eq:cinfcor2eq1}
			\end{equation}
On the other hand, as soon as the system parameters are such that it becomes infeasible, one has that both optimization in (\ref{eq:cinfanl5})  and the residual error norm, $\|W\|_F$, are unbounded. Clearly, to be able to conduct the analysis of the residual error, one must first determine the range or parameters where such error is bounded, i.e., where the  optimizing condition in (\ref{eq:cinfanl5a}) (or (\ref{eq:cinfcor2eq1})) is feasible.

 %%%%%%%%%%%%%%%%%%%%%%%%%%%%%%%%%%%%%%%%%%%%%%%%%%%%%%%%%%%%%%%%%%%%%%%%%%%%%%%%%%%%%%%%%%%%%%%%%
 %%%%%%%%%%%%%%%%%%%%%%%%%%%%%%%%%%%%%%%%%%%%%%%%%%%%%%%%%%%%%%%%%%%%%%%%%%%%%%%%%%%%%%%%%%%%%%%%%
 %%%%%%%%%%%%%%%%%%%%%%%%%%%%%%%%%%%%%%%%%%%%%%%%%%%%%%%%%%%%%%%%%%%%%%%%%%%%%%%%%%%%%%%%%%%%%%%%%
 %%%%%%%%%%%%%%%%%%%%%%%%%%%%%%%%%%%%%%%%%%%%%%%%%%%%%%%%%%%%%%%%%%%%%%%%%%%%%%%%%%%%%%%%%%%%%%%%%
\subsection{Feasibility of the optimizing condition in (\ref{eq:cinfanl5a})}
\label{sec:feasoptcond}
 %%%%%%%%%%%%%%%%%%%%%%%%%%%%%%%%%%%%%%%%%%%%%%%%%%%%%%%%%%%%%%%%%%%%%%%%%%%%%%%%%%%%%%%%%%%%%%%%%
 %%%%%%%%%%%%%%%%%%%%%%%%%%%%%%%%%%%%%%%%%%%%%%%%%%%%%%%%%%%%%%%%%%%%%%%%%%%%%%%%%%%%%%%%%%%%%%%%%
 %%%%%%%%%%%%%%%%%%%%%%%%%%%%%%%%%%%%%%%%%%%%%%%%%%%%%%%%%%%%%%%%%%%%%%%%%%%%%%%%%%%%%%%%%%%%%%%%%
 %%%%%%%%%%%%%%%%%%%%%%%%%%%%%%%%%%%%%%%%%%%%%%%%%%%%%%%%%%%%%%%%%%%%%%%%%%%%%%%%%%%%%%%%%%%%%%%%%

For the time being we assume $k\leq l$ (later on this assumption will be rigorously justified). From (\ref{eq:cinfcor2eq1}), one then easily has
{\small \begin{equation}
	\Lambda=((I^{(l)})^T \bV^{\perp})^{-1} (I^{(l)})^T \bV \bU^T I^{(l)}((\bU^{\perp})^T I^{(l)})^{-1}  \Longrightarrow (I^{(l)})^T \lp \bV\bU^T+\bV^{\perp}\Lambda(\bU^{\perp})^T\rp I^{(l)}=0, \label{eq:cinfanl8}
\end{equation}}
where $(\cdot)^{-1}$ stands for the pseudo-inverse. To make writing a bit easier we can set
\begin{eqnarray}
	\Lambda_{opt} & \triangleq & \Lambda_V\Lambda_U^T \nonumber \\
	\Lambda_V & \triangleq & ((I^{(l)})^T \bV^{\perp})^{-1} (I^{(l)})^T \bV \nonumber \\
	\Lambda_U & \triangleq & ((I^{(l)})^T \bU^{\perp})^{-1} (I^{(l)})^T \bU. \label{eq:cinfanl9}
\end{eqnarray}
Let $\lambda_{max}(\cdot)$ be the maximum eigenvalue of its symmetric matrix argument. After combining (\ref{eq:cinfcor2eq1}) and Theorem \ref{thm:thm1}, we conclude that if
\begin{eqnarray}
	\lambda_{max}(\Lambda_{opt}^T\Lambda_{opt}) \leq 1, \label{eq:cinfanl10}
\end{eqnarray}
then (\ref{eq:cinfcor2eq1}) will be satisfied. After basic algebraic transformations, (\ref{eq:cinfanl10}) can also be rewritten as
\begin{equation}
	\lambda_{max}(\Lambda_{opt}^T\Lambda_{opt})=\lambda_{max}(\Lambda_{opt}\Lambda_{opt}^T)=\lambda_{max}(\Lambda_V\Lambda_U^T\Lambda_U\Lambda_V^T)
	=\lambda_{max}(\Lambda_V^T\Lambda_V\Lambda_U^T\Lambda_U)\leq 1. \label{eq:cinfanl11}
\end{equation}
From (\ref{eq:cinfanl11}), it is clear that the spectrum of $\Lambda_V^T\Lambda_V\Lambda_U^T\Lambda_U$ as well as the spectra of $\Lambda_V^T\Lambda_V$ and $\Lambda_U^T\Lambda_U$ play an important role in the $\ell_0^*-\ell_1^*$-equivalence. We first observe a \emph{worst case} bound. Namely, since
\begin{eqnarray}
	\lambda_{max}(\Lambda_{opt}^T\Lambda_{opt})=\lambda_{max}(\Lambda_V^T\Lambda_V\Lambda_U^T\Lambda_U)
	\leq \lambda_{max}(\Lambda_V^T\Lambda_V)\lambda_{max}(\Lambda_U^T\Lambda_U), \label{eq:cinfanl12}
\end{eqnarray}
one has that if the individual spectra of $\Lambda_V^T\Lambda_V$ and $\Lambda_U^T\Lambda_U$ do not exceed one then the $\ell_0^*-\ell_1^*$-equivalence holds. Given the obvious importance of these spectra, we below look at them in more detail. Clearly, due to symmetry, it is sufficient to focus on only one of them. To that end we start by observing
\begin{eqnarray}
	((I^{(l)})^T \bV^{\perp})^{-1} & = & (\bV^{\perp})^T I^{(l)} \lp (I^{(l)})^T\bV^{\perp}(\bV^{\perp})^T I^{(l)}\rp^{-1}. \label{eq:cinfanl13}
\end{eqnarray}
From (\ref{eq:cinfanl13}), one quickly finds
\begin{eqnarray}
	\lp ((I^{(l)})^T \bV^{\perp})^{-1} \rp^T ((I^{(l)})^T \bV^{\perp})^{-1} & = & \lp (I^{(l)})^T\bV^{\perp}(\bV^{\perp})^T I^{(l)}\rp^{-1}. \label{eq:cinfanl14}
\end{eqnarray}
We can then write
\begin{eqnarray}
	(I^{(l)})^T \bV \bV^T I^{(l)} & = &  (I^{(l)})^T \lp I - \bV^{\perp} (\bV^{\perp})^T \rp I^{(l)}. \label{eq:cinfanl15}
\end{eqnarray}
From (\ref{eq:cinfanl9}), we also have
\begin{eqnarray}
	Q_1\triangleq \Lambda_V^T\Lambda_V & = &  \lp \lp ((I^{(l)})^T \bV^{\perp})^{-1} (I^{(l)})^T \bV\rp^T \lp ((I^{(l)})^T \bV^{\perp})^{-1} (I^{(l)})^T \bV \rp \rp \nonumber \\
	& = &    \bV^T I^{(l)}\lp ((I^{(l)})^T \bV^{\perp})^{-1} \rp^T ((I^{(l)})^T \bV^{\perp})^{-1}   \lp(I^{(l)})^T \bV \rp.\label{eq:cinfanl16}
\end{eqnarray}
Now, we find it more convenient to work with a slightly changed version of matrix $Q_1$. Namely, after combining (\ref{eq:cinfanl9}), (\ref{eq:cinfanl14}), and (\ref{eq:cinfanl15}), we obtain
\begin{eqnarray}
	Q & \triangleq &  \lp \lp ((I^{(l)})^T \bV^{\perp})^{-1} \rp^T ((I^{(l)})^T \bV^{\perp})^{-1}   \lp(I^{(l)})^T \bV \bV^T I^{(l)}\rp \rp \nonumber \\
	& = &   \lp \lp (I^{(l)})^T\bV^{\perp}(\bV^{\perp})^T I^{(l)}\rp^{-1}  \lp (I^{(l)})^T \lp I - \bV^{\perp} (\bV^{\perp})^T \rp I^{(l)}\rp \rp\nonumber \\
	& = &   \lp (I^{(l)})^T\bV^{\perp}(\bV^{\perp})^T I^{(l)}\rp^{-1} -I. \label{eq:cinfanl16a0}
\end{eqnarray}
Clearly, all the nonzero eigenvalues of $Q_1$ and $Q$ are identical. When $k\leq n-l$ then $Q$ has all the eigenvalues of $Q_1$ plus $n-l-k$ extra zeros. On the other hand, when $k\geq n-l$ then $Q_1$ has all the eigenvalues of $Q$ plus $k-(n-l)$ extra zeros. Since adding or removing zeros from the spectra does not change any of their features of our interests here, instead of working directly with $Q_1$, we can work with $Q$. In particular, we have
\begin{equation}
	\lambda_{max}(Q_1) = \lambda_{max}(\Lambda_V^T\Lambda_V)
	 =  \lambda_{max} \lp \lp (I^{(l)})^T\bV^{\perp}(\bV^{\perp})^T I^{(l)}\rp^{-1}\rp -1 = \lambda_{max}(Q). \label{eq:cinfanl16a}
\end{equation}

We are now in position to establish a spectral alternative to Corollary 2 from \cite{CapStoAnnalsStatsSupp23}.

\begin{proposition}(\textbf{Idealized low rank scenario --- \bl{$\ell_0^*-\ell_1^*$-equivalence condition via mask-modified bases spectra}})
	Assume the setup of Proposition 1 and Corollaries 1 and 2 from \cite{CapStoAnnalsStatsSupp23} with $k\leq l$. Let $\lambda_V$ and $\lambda_U$ be defined as in (\ref{eq:cinfanl9}). Then
	\begin{center}
\tcbset{beamer,lower separated=false, fonttitle=\bfseries,
coltext=black , interior style={top color=orange!10!yellow!30!white, bottom color=yellow!80!yellow!50!white}, title style={left color=orange!10!cyan!30!blue, right color=green!70!blue!20!black}}
		\begin{tcolorbox}[beamer,title= \yellow{$\ell_0^*\Longleftrightarrow \ell_1^* \quad \mbox{and}\quad  \textbf{\emph{RMSE}}=\|\hat{X}-X_{sol}\|_F=0$},fonttitle=\bfseries,width=.7\linewidth]
			\begin{equation}
				\mbox{If and only if} \quad \lambda_{max}(\Lambda_V^T\Lambda_V\Lambda_U^T\Lambda_U)\leq 1.\label{eq:cinfcor3eq1}
			\end{equation}
		\end{tcolorbox}
	\end{center}
	Moreover, if
	\begin{equation}
		\lp\lambda_{max} \lp \lp (I^{(l)})^T\bV^{\perp}(\bV^{\perp})^T I^{(l)}\rp^{-1}\rp -1\rp
		\lp\lambda_{max} \lp \lp (I^{(l)})^T\bU^{\perp}(\bU^{\perp})^T I^{(l)}\rp^{-1}\rp -1\rp \leq 1,\label{eq:cinfcor3eq2}
	\end{equation}
	then again $\ell_0^*\Longleftrightarrow \ell_1^*$ and $\textbf{\emph{RMSE}}=\|\hat{X}-X_{sol}\|_F=0$.
	\label{cor:cinfcor3}
\end{proposition}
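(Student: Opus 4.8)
The plan is to recast the whole statement as a \emph{feasibility} question about the optimizing condition (\ref{eq:cinfcor2eq1}) and then settle that feasibility with a minimum-spectral-norm argument. By Proposition 1 and Corollaries 1 and 2 of \cite{CapStoAnnalsStatsSupp23}, in the idealized scenario ($\sigma_\epsilon=0$) one has $\ell_0^*\Longleftrightarrow\ell_1^*$ and $\textbf{\emph{RMSE}}=0$ exactly when a dual certificate exists, i.e.\ when (\ref{eq:cinfcor2eq1}) admits a feasible $\Lambda$; when it does not, the residual norm is unbounded. So it suffices to prove that (\ref{eq:cinfcor2eq1}) is feasible \emph{if and only if} $\lambda_{max}(\Lambda_V^T\Lambda_V\Lambda_U^T\Lambda_U)\le 1$. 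Writing $A\triangleq(I^{(l)})^T\bV^{\perp}$ and $B\triangleq(\bU^{\perp})^T I^{(l)}$, the linear part of (\ref{eq:cinfcor2eq1}) is the matrix equation $A\Lambda B=-(I^{(l)})^T\bV\bU^T I^{(l)}$, and feasibility asks whether some solution $\Lambda$ also obeys $\Lambda^T\Lambda\le I$, i.e.\ $\|\Lambda\|_2\le 1$.

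Sufficiency is immediate. Under $k\le l$ the matrix $A$ has full row rank and $B$ full column rank, so (\ref{eq:cinfanl8})--(\ref{eq:cinfanl9}) produce a genuine solution $\Lambda_{opt}=\Lambda_V\Lambda_U^T$ (here the full-rank property is what turns the pseudo-inverse expression into an actual solution, via $AA^{+}=I$ and $B^{+}B=I$), and (\ref{eq:cinfanl11}) identifies $\lambda_{max}(\Lambda_{opt}^T\Lambda_{opt})=\lambda_{max}(\Lambda_V^T\Lambda_V\Lambda_U^T\Lambda_U)$. Hence if the latter is $\le 1$, then $\Lambda_{opt}$ is itself feasible. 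The content is necessity, where I would show $\Lambda_{opt}$ is the \emph{minimum-spectral-norm} solution. The key observation is that $P\triangleq A^{+}A$ and $Q\triangleq BB^{+}$ are \emph{orthogonal projections}, so that for \emph{any} solution $\Lambda$ of the linear equation, $A^{+}(A\Lambda B)B^{+}=(A^{+}A)\Lambda(BB^{+})=P\Lambda Q=\pm\Lambda_{opt}$. Since orthogonal projections are contractions, $\|\Lambda_{opt}\|_2=\|P\Lambda Q\|_2\le\|\Lambda\|_2$, and thus any feasible $\Lambda$ forces $\|\Lambda_{opt}\|_2\le\|\Lambda\|_2\le 1$, i.e.\ $\lambda_{max}(\Lambda_{opt}^T\Lambda_{opt})\le 1$. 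Together with (\ref{eq:cinfanl11}) this closes the equivalence (\ref{eq:cinfcor3eq1}).

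For the ``moreover'' part I would just chain the established equivalence with two facts already in hand. The submultiplicativity bound (\ref{eq:cinfanl12}) gives $\lambda_{max}(\Lambda_V^T\Lambda_V\Lambda_U^T\Lambda_U)\le\lambda_{max}(\Lambda_V^T\Lambda_V)\,\lambda_{max}(\Lambda_U^T\Lambda_U)$, while (\ref{eq:cinfanl16a}) and its $\bU$-analogue identify the two factors with $\lambda_{max}(((I^{(l)})^T\bV^{\perp}(\bV^{\perp})^T I^{(l)})^{-1})-1$ and $\lambda_{max}(((I^{(l)})^T\bU^{\perp}(\bU^{\perp})^T I^{(l)})^{-1})-1$. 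Hence hypothesis (\ref{eq:cinfcor3eq2}) implies $\lambda_{max}(\Lambda_V^T\Lambda_V\Lambda_U^T\Lambda_U)\le 1$, which by (\ref{eq:cinfcor3eq1}) yields $\ell_0^*\Longleftrightarrow\ell_1^*$ and $\textbf{\emph{RMSE}}=0$.

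The main obstacle is the necessity step, together with its rank bookkeeping. The compression inequality $\|P\Lambda Q\|_2\le\|\Lambda\|_2$ is the non-obvious ingredient that upgrades ``$\Lambda_{opt}$ is \emph{a} solution'' to ``$\Lambda_{opt}$ is the \emph{smallest} solution'', and one must be careful that $k\le l$ is precisely what guarantees $A$ full row rank and $B$ full column rank, so that $\Lambda_{opt}$ is a bona fide solution rather than a least-squares surrogate (the projection property of $P,Q$ holds regardless, but the certificate construction for sufficiency does not). I would also record that the sign discrepancy in (\ref{eq:cinfanl8}) is immaterial, since $\lambda_{max}(\Lambda^T\Lambda)$ is insensitive to the overall sign of $\Lambda$.
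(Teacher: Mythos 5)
Your proof is correct, and its overall architecture is the same as the paper's: reduce the claim to feasibility of the dual certificate condition (\ref{eq:cinfcor2eq1}), verify sufficiency by substituting $\Lambda_{opt}=\Lambda_V\Lambda_U^T$, prove necessity by showing $\Lambda_{opt}$ has minimal spectral norm among all solutions of the linear equation, and obtain the ``moreover'' part from submultiplicativity together with the identification (\ref{eq:cinfanl16a}) and its $\bU$-analogue. The one place where you genuinely diverge in technique is the necessity step. The paper parameterizes an arbitrary solution $\Lambda$ through the SVDs $B=(I^{(l)})^T\bV^{\perp}=U_B\Sigma_BV_B^T$ and $C=(I^{(l)})^T\bU^{\perp}=U_C\Sigma_CV_C^T$, extracts the ``core'' $\Lambda_*=V_BE^TV_C^T$, and proves $\lambda_{max}(\Lambda^T\Lambda)\geq\lambda_{max}(\Lambda_*^T\Lambda_*)$ by testing against the top eigenvector of $EE^T$ as in (\ref{eq:cinfanl11aa4z1}). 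You instead observe that $P=A^{+}A$ and $Q=BB^{+}$ are orthogonal projections, that $P\Lambda Q=\pm\Lambda_{opt}$ for every solution, and that two-sided compression by orthogonal projections is a spectral-norm contraction. These are the same inequality: the paper's $\Lambda_*$ is exactly $V_BV_B^T\Lambda V_CV_C^T$, i.e.\ your $P\Lambda Q$ written in the SVD bases. Your packaging is cleaner and makes the minimality of $\Lambda_{opt}$ transparent as a general fact about pseudo-inverse solutions of $A\Lambda B=-S$, while the paper's explicit parametrization is what it later reuses in Section \ref{sec:belowpt} (e.g.\ (\ref{eq:supp1})--(\ref{eq:supp5})), so it earns its keep there. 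Two small points to record explicitly: (i) under $k\leq l$ and Haar-distributed bases, $A$ is almost surely of full row rank and $B$ of full column rank, which is what makes $\Lambda_{opt}$ an actual solution ($AA^{+}=I$, $B^{+}B=I$) rather than a least-squares surrogate — you note this, and it should be stated as holding almost surely; (ii) the passage from $\lambda_{max}(\Lambda_{opt}^T\Lambda_{opt})$ to $\lambda_{max}(\Lambda_V^T\Lambda_V\Lambda_U^T\Lambda_U)$ uses invariance of the nonzero spectrum under cyclic permutation, exactly as in (\ref{eq:cinfanl11}).
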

\begin{proof}
	The first part follows from Corollaries 1 and 2 from \cite{CapStoAnnalsStatsSupp23}, (\ref{eq:cinfanl8}), (\ref{eq:cinfanl9}), (\ref{eq:cinfanl11}), the above discussion and some additional considerations while the second part follows by additionally taking into account (\ref{eq:cinfanl12}) and (\ref{eq:cinfanl16a}). We below present all the details split into three parts: the first two relate to the equivalence condition (equation (\ref{eq:cinfcor3eq1})) while the third one relates to (\ref{eq:cinfcor3eq2}).
	
	\noindent \underline{\bl{\textbf{\emph{1) $\Longrightarrow$ -- The ``if part" of condition (\ref{eq:cinfcor3eq1}):}}}} Choosing $\Lambda=\Lambda_{opt}$
	\begin{eqnarray}
		\hspace{-.0in}\Lambda_{opt} & \triangleq & \Lambda_V\Lambda_U^T  =  -((I^{(l)})^T \bV^{\perp})^{-1} (I^{(l)})^T \bV \bU^T I^{(l)}((\bU^{\perp})^T I^{(l)})^{-1}, \label{eq:cinfanl8z1}
	\end{eqnarray}
	(where, we recall that, $(\cdot)^{-1}$ stands for the pseudo-inverse) one ensures
	\begin{equation} \label{eq:cinfanl9z1}
		(I^{(l)})^T \lp \bV\bU^T+\bV^{\perp}\Lambda(\bU^{\perp})^T\rp I^{(l)}=0.
	\end{equation}
	Let $\lambda_{max}(\cdot)$ be the maximum eigenvalue of its symmetric matrix argument. A combination of (\ref{eq:cinfcor2eq1})-(\ref{eq:cinfanl9z1}) ensures that if
	\begin{eqnarray}
		\lambda_{max}(\Lambda_{opt}^T\Lambda_{opt}) \leq 1, \label{eq:cinfanl10z1}
	\end{eqnarray}
	then $\Lambda_{opt}$ satisfies (\ref{eq:cinfcor2eq1}). (\ref{eq:cinfanl10z1}) is implied by (\ref{eq:cinfcor3eq1}) since
	\begin{equation}
		\lambda_{max}(\Lambda_{opt}^T\Lambda_{opt})  =  \lambda_{max}(\Lambda_U\Lambda_V^T\Lambda_V\Lambda_U^T)  = \lambda_{max}(\Lambda_V^T\Lambda_V\Lambda_U^T\Lambda_U)  \leq 1, \nonumber \label{eq:cinfanl11z1}
	\end{equation}
	which suffices to complete the proof of the ``if part".
	
	\noindent \underline{\bl{\textbf{\emph{2) $\Longleftarrow$ -- The ``only if part" of condition (\ref{eq:cinfcor3eq1}):}}}}  Consider SVDs
	\begin{equation}\label{eq:cinfanl11aa1z1}
		B \triangleq (I^{(l)})^TV^\perp=U_B\Sigma_BV_B^T,\quad C \triangleq (I^{(l)})^TU^\perp=U_C\Sigma_CV_C^T
	\end{equation}
	with unitary $U_B,V_B,U_C,V_C$ and diagonal (with no zeros on the main diagonal) $\Sigma_B,\Sigma_C$. Any $\Lambda$ can be parameterized as
	\begin{equation}\label{eq:cinfanl11aa2z1}
		\Lambda = V_BH^T+V_B^\perp D^T,\quad H\triangleq V_C E+V_C^\perp F,
	\end{equation}
	for some $E,F,D$ and unitary $V_B^\perp$ and $V_C^\perp$ such that $V_B^TV_B^\perp=V_C^TV_C^\perp=0$. Also, one can set $\Lambda_*$ and write the SVD of $E$
	\begin{equation}\label{eq:cinfanl11aa3z1}
		\Lambda_* \triangleq V_B E^TV_C^T, \quad E=U_E\Sigma_EV_E^T,
	\end{equation}
	where $U_E,V_E$ are unitary and $\Sigma_E$ is diagonal with entries on the main diagonal being nonzero and in the ascending order. Let $\u_e$ be the last column of $U_E$ (i.e. the eigenvector of $EE^T$ that corresponds to its largest eigenvalue). Since $\|V_C\u_e\|_2=1$,
	\begin{eqnarray}\label{eq:cinfanl11aa4z1}
		\lambda_{max}(\Lambda^T\Lambda)& \geq  &   \u_e^TV_C^T \Lambda^T\Lambda V_C\u_e \nonumber \\
		& = & \u_e^TV_C^T HH^T V_C\u_e  +\u_e^TV_C^T DD^T V_C\u_e  \nonumber \\
		& \geq  & \u_e^TV_C^T (V_C E+V_C^\perp F)(V_C E+V_C^\perp F)^T V_C\u_e    \nonumber \\
		& =  & \u_e^T EE^T\u_e =\lambda_{max}(EE^T) \nonumber \\
		& = &  \lambda_{max}(V_CEE^TV_C^T) =\lambda_{max}(\Lambda_*^T\Lambda_*).
	\end{eqnarray}
	If $\Lambda$ satisfies the condition of (\ref{eq:cinfcor2eq1}) then a combination of (\ref{eq:cinfcor2eq1}) and (\ref{eq:cinfanl11aa1z1})-(\ref{eq:cinfanl11aa3z1}) gives
	\begin{equation}\label{eq:cinfanl11aa5z1}
		(I^{(l)})^T\bar{V}\bar{U}I^{(l)}+B\Lambda_*C^T=0,
	\end{equation}
	and a combination of (\ref{eq:cinfanl9}), (\ref{eq:cinfanl11aa1z1}), and (\ref{eq:cinfanl11aa5z1}) gives
	\begin{equation}\label{eq:cinfanl11aa6z1}
		\Lambda_V\Lambda_U^T  =   -B^{-1}(I^{(l)})^T\bar{V}\bar{U}I^{(l)}(C^T)^{-1} = \Lambda_*.
	\end{equation}
	Finally, for  $\Lambda$ that fits (\ref{eq:cinfcor2eq1}), from (\ref{eq:cinfanl11aa4z1}) and (\ref{eq:cinfanl11aa6z1}) one has
	\begin{eqnarray}\label{eq:cinfanl11aa7z1}
		1 & \geq & \lambda_{max}(\Lambda^T\Lambda)  >  \lambda_{max}(\Lambda_*^T\Lambda_*)=\lambda_{max}(\Lambda_*\Lambda_*^T) \nonumber \\
		& = & \lambda_{max}(\Lambda_V\Lambda_U^T\Lambda_U\Lambda_V^T) = \lambda_{max}(\Lambda_V^T\Lambda_V\Lambda_U^T\Lambda_U),
	\end{eqnarray}
	which completes the proof of the ``only if part".

	\noindent \underline{\bl{\textbf{\emph{3) Suffciency of the condition (\ref{eq:cinfcor3eq2}):}}}}  Since
	\begin{equation}
		\lambda_{max}(\Lambda_V^T\Lambda_V\Lambda_U^T\Lambda_U)
		\leq \lambda_{max}(\Lambda_V^T\Lambda_V)\lambda_{max}(\Lambda_U^T\Lambda_U), \label{eq:cinfanl12z1}
	\end{equation}
	one has that if the individual spectra of $\Lambda_V^T\Lambda_V$ and $\Lambda_U^T\Lambda_U$ do not exceed one then the $\ell_0^*-\ell_1^*$-equivalence holds. Repeating (\ref{eq:cinfanl13})-(\ref{eq:cinfanl16a0}), with $V$ replaced by $U$, $Q_1$ by $Q_1^{\perp}$, and $Q$ by $Q^{\perp}$ one arrives at the following analogue of (\ref{eq:cinfanl16a})
	\begin{equation}
		\lambda_{max}(\Lambda_U^T\Lambda_U) = \lambda_{max}(Q_1^{\perp})= \lambda_{max}(Q^{\perp}). \label{eq:cinfanl16a2z1}
	\end{equation}
	A combination of (\ref{eq:cinfanl12z1}) and (\ref{eq:cinfanl16a}) - (\ref{eq:cinfanl16a2z1}) completes the proof of the condition (\ref{eq:cinfcor3eq2})'s sufficiency  for the $\ell_0^*-\ell_1^*$-equivalence.
\end{proof}

The feasibility of the optimizing condition is precisely what determines the so-called phase transition in the idealized scenario where $\epsilon_\sigma=0$. The above machinery is actually powerful enough to precisely determine, in a statistical context, the location of the PT curve. We assume a generic \emph{typical} statistical scenario, where the orthogonal spans of interests are Haar distributed (this is a bit less restrictive than the usual i.i.d. Gaussian assumptions from, say, factor models literature, see, e.g., \cite{XPlatfac10}, and it does not impose \emph{per se} the so-called \emph{strong} factor/loading assumptions see, e.g. \cite{BaiNg19,CahBaiNg21} (note that Gaussians are only a special case of the Haar distributed objects)).\footnote{The assumed rotational invariance is effectively analogous to the assumption of randomly positioning the sparse portion within the unknown vectors in basic compressed sensing (with the exception of results of  \cite{DonohoPol}, this assumption is almost always in place in the theoretical compressed sensing \emph{phase transition} characterizations).} In the absence of any \emph{a priori} statistically available knowledge about the data, the above is the minimum that realistically \textbf{\emph{must}} be assumed. In other words, if, beyond the degree of heterogeneity, nothing more is known about the data, no modeling concept can be proven as better than \emph{uniform} randomness.

We first need the following lemma.

 \begin{lemma}\label{lemma:lemma1}
Assume large $n$ linear regime with
\begin{eqnarray}
\beta\triangleq \lim_{n\rightarrow \infty}\frac{k}{n}\quad \mbox{and} \quad \eta\triangleq \lim_{n\rightarrow \infty}\frac{l}{n}. \label{eq:typwclemma2eq1}
\end{eqnarray}
Let $\bV^{\perp}\in\mR^{n\times (n-k)}$ be a Haar distributed basis of an $n-k$-dimensional subspace of $R^n$. Analogously, let $\bU_D^{\perp}\in\mR^{n\times (n-k)}$ be a Haar distributed basis of an $n-l$-dimensional subspace of $R^n$. Moreover, let $\bV^{\perp}\in\mR^{n\times (n-k)}$ and $\bU_D^{\perp}\in\mR^{n\times (n-k)}$ be independent of each other. Also, let ${\cal V}$, ${\cal U}$, and $\tilde{D}$ be defined as
\begin{equation}
{\cal V}  \triangleq \bV^{\perp}(\bV^{\perp})^T, \quad {\cal U}  \triangleq  \bU_D^{\perp}(\bU_D^{\perp})^T,\quad
 \tilde{D} \triangleq  {\cal V}{\cal U}. \label{eq:typwclemma2eq2}
\end{equation}
Set $x_l$ and $x_u$ as
\begin{eqnarray}
 x_l & \triangleq & \beta+\eta-2\beta\eta  -  \sqrt{(\beta+\eta-2\beta\eta)^2-(\beta-\eta)^2} \nonumber \\
 x_u & \triangleq & \beta+\eta-2\beta\eta  +  \sqrt{(\beta+\eta-2\beta\eta)^2-(\beta-\eta)^2},
 \label{eq:typwclemma2eq3}
\end{eqnarray}
Then the limiting spectral distribution of $\tilde{D}$, $f_{\tilde{D}}(x)$, is
\begin{eqnarray}
     f_{\tilde{D}}(x)   &  =  & f_0\delta(x) +  f_{\tilde{D}}^{(b)}(x) + f_1f_0\delta(x-1) \nonumber \\
     & = &
              \max(\beta,\eta)\delta(x) + f_{\tilde{D}}^{(b)}(x)
              +\max(1-(\beta+\eta),0)\delta(x-1),
     \label{eq:typwclemma2eq4}
\end{eqnarray}
with
\begin{eqnarray}
     f_{\tilde{D}}^{(b)}(x)   &  =  &
            \begin{cases}
              \frac{\sqrt{-(x-(\beta+\eta))^2-4\beta\eta(x-1)}}{2\pi(x-x^2)}, & \mbox{if }  x_l\leq x\leq x_u. \\
              0, & \mbox{otherwise}.
            \end{cases}
     \label{eq:typwclemma2eq5}
\end{eqnarray}
 \end{lemma}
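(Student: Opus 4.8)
The plan is to recognize $\tilde{D}=\calV\calU$ as, up to its kernel, the product of two independent Haar-random orthogonal projections and to compute its limiting spectrum through free probability. First I would record that $\calV=\bV^{\perp}(\bV^{\perp})^T$ and $\calU=\bU_D^{\perp}(\bU_D^{\perp})^T$ are orthogonal projections of ranks $n-k$ and $n-l$, so their empirical spectral distributions converge to the Bernoulli laws $\mu_{\calV}=\beta\delta(x)+(1-\beta)\delta(x-1)$ and $\mu_{\calU}=\eta\delta(x)+(1-\eta)\delta(x-1)$. Since $\bV^{\perp}$ and $\bU_D^{\perp}$ are independent and Haar distributed, the pair $(\calV,\calU)$ is asymptotically free by Voiculescu's theorem. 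Because $\tilde{D}$ is not symmetric, I would first pass to the symmetric matrix $\calV\calU\calV$: using $\calV^2=\calV$ together with the identity $\mathrm{spec}(AB)\setminus\{0\}=\mathrm{spec}(BA)\setminus\{0\}$, the matrices $\tilde{D}=\calV\calU$, $\calU\calV$, and $\calV\calU\calV$ share all their nonzero eigenvalues, so $f_{\tilde{D}}$ is determined by the law of $\calV\calU\calV=\sqrt{\calV}\,\calU\sqrt{\calV}$, i.e. by the free multiplicative convolution $\mu_{\calV}\boxtimes\mu_{\calU}$.

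Next I would carry out the $S$-transform computation, which linearizes $\boxtimes$. For a Bernoulli law with mass $a$ at $1$ one has $S(w)=\frac{1+w}{a+w}$, so with $a=1-\beta$ and $a=1-\eta$ respectively,
\begin{equation*}
S_{\mu_{\calV}\boxtimes\mu_{\calU}}(w)=\frac{(1+w)^2}{(w+1-\beta)(w+1-\eta)}.
\end{equation*}
Recovering $\chi(w)=\frac{w}{1+w}S(w)$ and inverting the relation between $\chi$ and the Cauchy transform $G$, I expect to arrive at the single algebraic equation
\begin{equation*}
(z-1)u^2+\lp(\beta+\eta)-z\rp u-\beta\eta=0,\qquad G(z)=\frac{u}{z},
\end{equation*}
a quadratic in $u$ whose discriminant is $\Delta(z)=(z-(\beta+\eta))^2+4\beta\eta(z-1)=z^2-2(\beta+\eta-2\beta\eta)z+(\beta-\eta)^2$.

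The density and the support edges then fall out of Stieltjes inversion. The roots of $\Delta(z)=0$ are exactly $x_l$ and $x_u$, since they satisfy $x_l+x_u=2(\beta+\eta-2\beta\eta)$ and $x_lx_u=(\beta-\eta)^2$; these are also the critical values of the parametrization $z(u)$, confirming they are the edges of the absolutely continuous band. On $(x_l,x_u)$ one has $\Delta<0$, so $\sqrt{\Delta(x+\mathrm{i}0)}$ is purely imaginary and $f^{(b)}_{\tilde{D}}(x)=-\frac1\pi\,\imag G(x+\mathrm{i}0)=\frac{\sqrt{-\Delta(x)}}{2\pi x(1-x)}$, matching (\ref{eq:typwclemma2eq5}). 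The atoms at $0$ and $1$ I would read off from the residues of $G$: evaluating the quadratic at $z=0$ gives $u\in\{\max(\beta,\eta),\min(\beta,\eta)\}$, so the physical branch contributes mass $\max(\beta,\eta)$ at $x=0$, while near $z=1$ the leading root behaves like $\frac{1-\beta-\eta}{z-1}$, contributing mass $\max(1-(\beta+\eta),0)$ at $x=1$.

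The hard part will be the analytic bookkeeping rather than the algebra. Concretely, I expect the two main obstacles to be: (i) rigorously invoking asymptotic freeness for this specific non-symmetric product while tracking the \emph{exact} weights of the $\delta(x)$ and $\delta(x-1)$ atoms, which are forced by the rank constraint $\rankw(\tilde{D})\leq\min(n-k,n-l)$ (hence the $\max(\beta,\eta)$ mass at $0$); and (ii) selecting the correct branch of $\sqrt{\Delta}$ uniformly across the band and at the two atoms, using $G(z)\sim 1/z$ as $z\to\infty$ together with positivity of the recovered density, so that the point masses are allocated correctly and no spurious atom at $1$ survives when $\beta+\eta\geq 1$. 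A final consistency check would verify $\int f_{\tilde{D}}=1$, which simultaneously validates the atom weights and the band normalization.
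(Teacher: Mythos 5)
Your proposal is correct and follows essentially the same route as the paper: both identify $\calV$ and $\calU$ as free projections with Bernoulli spectral laws, multiply the $S$-transforms $\frac{1+w}{w+1-\beta}\cdot\frac{1+w}{w+1-\eta}$, invert to the same quadratic for the Cauchy transform, and recover the bulk density and the atoms at $0$ and $1$ by Stieltjes inversion. The only (welcome) refinement is your explicit symmetrization via $\calV\calU\calV$ and the rank-based sanity check on the atom at $0$, which the paper leaves implicit.
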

\begin{proof}
Presented in the Appendix.
\end{proof}

 Now we write
\begin{equation}\label{eq:typwcanl80}
  \ell_0^*-\ell_1^*-\mbox{equivalence} \quad\Longleftrightarrow \quad \frac{1}{x_l}-1\leq 1.
 \end{equation}
 Recalling on (\ref{eq:typwclemma2eq3})
\begin{eqnarray}
  x_l & \triangleq & \beta+\eta-2\beta\eta  -  \sqrt{(\beta+\eta-2\beta\eta)^2-(\beta-\eta)^2},
   \label{eq:typwcanl81}
\end{eqnarray}
 and combining further with (\ref{eq:typwcanl80}), we obtain
\begin{eqnarray}
\begin{array}{rrcl}
 & \frac{1}{2} & \leq & x_l \\
\Longleftrightarrow & \frac{1}{2} & = & \beta+\eta-2\beta\eta  -  \sqrt{(\beta+\eta-2\beta\eta)^2-(\beta-\eta)^2} \\
\Longleftrightarrow &  (\beta+\eta-2\beta\eta)^2-(\beta-\eta)^2 & \leq & \lp\beta+\eta-2\beta\eta  - \frac{1}{2}\rp^2 \\
\Longleftrightarrow &   (\beta+\eta-2\beta\eta)^2-(\beta-\eta)^2 & \leq & \lp\beta+\eta-2\beta\eta \rp^2 - \lp\beta+\eta-2\beta\eta \rp + \frac{1}{4} \\
\Longleftrightarrow &    -(\beta-\eta)^2  & \leq & - \lp\beta+\eta-2\beta\eta \rp + \frac{1}{4} \\
\Longleftrightarrow &  0 & \leq & \beta^2+\eta^2 - \lp\beta+\eta \rp + \frac{1}{4}  \\
\Longleftrightarrow &  \eta-\eta^2  & \leq &  \lp \frac{1}{2} - \beta \rp^2\\
\Longleftrightarrow &   \beta & \leq & \frac{1}{2} - \sqrt{\eta-\eta^2}.
\end{array}
  \label{eq:typwcanl82}
\end{eqnarray}
From (\ref{eq:typwcanl80}) and (\ref{eq:typwcanl82}), we finally have
\begin{equation}\label{eq:typwcanl83}
  \ell_0^*-\ell_1^*-\mbox{equivalence} \quad\Longleftrightarrow \quad \beta  \leq  \frac{1}{2} - \sqrt{\eta-\eta^2}.
 \end{equation}

We are now in position to formalize the above results and establish the so-called phase-transition phenomenon as well as its a precise \emph{worst case} location in a \emph{typical} statistical scenario.

\begin{theorem}(\textbf{\bl{$\ell_1^*$ -- phase transition -- (typical \underline{worst case})}})
  Consider a rank-$k$ matrix  $X_{sol}=X\in\mR^{n\times n}$ with the Haar distributed (\emph{not necessarily independent}) bases of its orthogonal row and column spans $\bU^{\perp}\in\mR^{n\times (n-k)}$ and $\bV^{\perp}\in\mR^{n\times (n-k)}$ ($X_{sol}^T\bU^{\perp}=X_{sol}\bV^{\perp}=\0_{n\times (n-k)}$). Let $M\triangleq M^{(l)}\in\mR^{n\times n}$ be as defined in (\ref{eq:cinfanl2a}). Assume a large $n$ linear regime with $\beta\triangleq\lim_{n\rightarrow\infty}\frac{k}{n}$ and $\eta\triangleq\lim_{n\rightarrow\infty}\frac{l}{n}$ and let $\beta_{wc}$ and $\eta$ satisfy the  following
\begin{center}
\tcbset{beamer,lower separated=false, fonttitle=\bfseries,
coltext=black , interior style={top color=orange!10!yellow!30!white, bottom color=yellow!80!yellow!50!white}, title style={left color=orange!10!cyan!30!blue, right color=green!70!blue!20!black}}
 \begin{tcolorbox}[beamer,title=\textbf{$\ell_1^*$ \yellow{worst case} phase transition (PT)} characterization,lower separated=false,,fonttitle=\bfseries,width=5in]
  \begin{equation}\label{eq:typwcthm1eq1}
    \xi_{\eta}^{(wc)}(\beta) \triangleq \beta-\frac{1}{2}+\sqrt{\eta-\eta^2}=0.
  \end{equation}
 \end{tcolorbox}
\end{center}
 \noindent \textbf{If and only if} $\beta\leq \beta_{wc}$
    \begin{equation}\label{eq:typwcthm1eq2}
   \lim_{n\rightarrow\infty} \mP(\ell_0^*\Longleftrightarrow \ell_1^*)=\ \lim_{n\rightarrow\infty} \mP(\mathbf{RMSE}=0)=1,
  \end{equation}
and the solutions of (\ref{eq:genmcl0posmmt}) and (\ref{eq:genmcl1posmmt}) coincide with overwhelming probability.
  \label{thm:typwcthm1}
\end{theorem}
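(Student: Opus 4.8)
The plan is to assemble the pieces already in place --- the exact equivalence criterion of Proposition~\ref{cor:cinfcor3}, the spectral identity (\ref{eq:cinfanl16a}), and the limiting law of Lemma~\ref{lemma:lemma1} --- into a single statement about the lower edge $x_l$ of the bulk, and then read off the threshold from the algebraic chain (\ref{eq:typwcanl82}). First I would invoke Proposition~\ref{cor:cinfcor3}: exact recovery ($\mathbf{RMSE}=0$, equivalently $\ell_0^*\Longleftrightarrow\ell_1^*$) holds if and only if $\lambda_{max}(\Lambda_V^T\Lambda_V\Lambda_U^T\Lambda_U)\leq 1$, and is implied by the product condition (\ref{eq:cinfcor3eq2}). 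This reduces the theorem to controlling the two individual spectra $\lambda_{max}(\Lambda_V^T\Lambda_V)$ and $\lambda_{max}(\Lambda_U^T\Lambda_U)$, which by the symmetry of the two Haar spans it suffices to analyze for one of them.

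Next I would turn the deterministic spectral quantity into a random-matrix edge computation. By (\ref{eq:cinfanl16a}),
\[
\lambda_{max}(\Lambda_V^T\Lambda_V)=\frac{1}{\lambda_{min}\lp(I^{(l)})^T\bV^{\perp}(\bV^{\perp})^T I^{(l)}\rp}-1,
\]
so the key object is the smallest (nonzero) eigenvalue of the compression of the projector ${\cal V}=\bV^{\perp}(\bV^{\perp})^T$ onto $\mathrm{Range}(I^{(l)})$. Using the standard fact that $AB$ and $BA$ share nonzero eigenvalues, the nonzero spectrum of this compression coincides with that of the product of projectors ${\cal V}\,I^{(l)}(I^{(l)})^T$; and because $\bV^{\perp}$ is Haar (hence rotationally invariant) and independent of the fixed coordinate block $I^{(l)}$, I may replace $I^{(l)}(I^{(l)})^T$ by the projector ${\cal U}=\bU_D^{\perp}(\bU_D^{\perp})^T$ onto a random Haar $(n-l)$-dimensional subspace without altering the law of the spectrum. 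This is exactly the matrix $\tilde{D}={\cal V}{\cal U}$ of Lemma~\ref{lemma:lemma1}, whose limiting spectral distribution has continuous support $[x_l,x_u]$. Under $k\leq l$ the compression is invertible, so $\lambda_{min}\to x_l$ and $\lambda_{max}(\Lambda_V^T\Lambda_V)\to\tfrac{1}{x_l}-1$ with overwhelming probability; the identical argument through (\ref{eq:cinfanl16a2z1}) gives the same limit for $\lambda_{max}(\Lambda_U^T\Lambda_U)$.

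I would then close with the worst-case combination and the algebra. Submultiplicativity (\ref{eq:cinfanl12}) gives $\lambda_{max}(\Lambda_V^T\Lambda_V\Lambda_U^T\Lambda_U)\leq(\tfrac{1}{x_l}-1)^2$, attained (the \emph{worst case}) when the two leading eigenspaces are allowed to align, which is admissible since the spans need not be independent. As $\tfrac{1}{x_l}-1\geq 0$, the criterion $(\tfrac{1}{x_l}-1)^2\leq 1$ is equivalent to (\ref{eq:typwcanl80}), i.e.\ $\tfrac{1}{x_l}-1\leq 1$. Feeding this into the equivalence chain (\ref{eq:typwcanl82})--(\ref{eq:typwcanl83}) yields precisely $\beta\leq\tfrac12-\sqrt{\eta-\eta^2}=\beta_{wc}$, establishing (\ref{eq:typwcthm1eq1}), and the edge concentration upgrades this to the overwhelming-probability statement (\ref{eq:typwcthm1eq2}) in both directions.

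The hard part, I expect, is twofold. First, Lemma~\ref{lemma:lemma1} supplies only the limiting \emph{bulk} (ESD) of $\tilde{D}$, whereas the phase transition is governed by the spectral \emph{edge} $x_l$; I must therefore supply edge concentration --- that no eigenvalue of the compression leaks below $x_l$ and that $\lambda_{min}$ genuinely converges to $x_l$ --- rather than mere weak convergence of the empirical measure. Second, sharpness of the ``only if'' direction requires the submultiplicative bound (\ref{eq:cinfanl12}) to be asymptotically tight under the worst-case dependence, so I need to exhibit (or argue the existence of) an admissible coupling of the two Haar spans whose leading eigenvectors align; additionally the standing assumption $k\leq l$ used to define the pseudo-inverses must be verified to be compatible with the regime $\beta\leq\beta_{wc}$.
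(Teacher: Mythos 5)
Your proposal is correct and follows essentially the same route as the paper, whose proof of Theorem~\ref{thm:typwcthm1} is exactly the chain you assemble: Proposition~\ref{cor:cinfcor3} together with the identity (\ref{eq:cinfanl16a}), Lemma~\ref{lemma:lemma1} for the bulk edge $x_l$, the algebra (\ref{eq:typwcanl80})--(\ref{eq:typwcanl83}), and the choice $\bU^{\perp}=\bV^{\perp}$ to show the worst case is attained for the ``only if'' direction. The two difficulties you flag (edge concentration of $\lambda_{min}$ versus mere weak convergence of the ESD, and compatibility of $k\leq l$ with $\beta\leq\beta_{wc}$ for small $\eta$) are genuine, and the paper's own one-line proof glosses over both, so your treatment is if anything more careful than the original.
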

\begin{proof}
The ``if'' part follows from Proposition \ref{cor:cinfcor3}, Lemma \ref{lemma:lemma1}, and the above discussion. The ``only if'' part additionally assumes $\bU^{\perp}=\bV^{\perp}$ which, as discussed above, ensures that the results are, in the worst case, achievable.
\end{proof}

We visually illustrate the results from the above theorem in Figure \ref{fig:cinfspectypwcPTbetaeta}. As can be seen from the figure, the phase transition curve splits the entire $(\beta,\eta)$ region into two subregions. The first of the subregions is below (or to the right of) the curve and in that region the $\ell_0^*-\ell_1^*$-equivalence phenomenon occurs. This means that one can recover an ideally low rank $X_{sol}$ masked by $M$ as in (\ref{eq:genmcl0posmmt}) via the $\ell_1^*$ heuristic from (\ref{eq:genmcl1posmmt}) with the residual mean square error ($\textbf{RMSE}$) equal to zero. In other words, for the system parameters $(\beta,\eta)$ that belong to the subregion below the curve one has a perfect recovery with $X_{sol}$ and $\hat{X}$ (the respective solutions of (\ref{eq:genmcl0posmmt}) and (\ref{eq:genmcl1posmmt})) being equal to each other and consequently with $\textbf{RMSE}=\|\hat{X}-X_{sol}\|_F=0$. On the other hand, in the subregion above the curve, the $\ell_1^*$ heuristic fails and one can even find an $X_{sol}$ for which $\textbf{RMSE}\rightarrow\infty$.

\begin{figure}[htb]
%\begin{minipage}[b]{.5\linewidth}
			\centerline{\includegraphics[width=.7\columnwidth]{cinfspectypwcbetaetaptthPAP}}
%\centering
%\centerline{\epsfig{figure=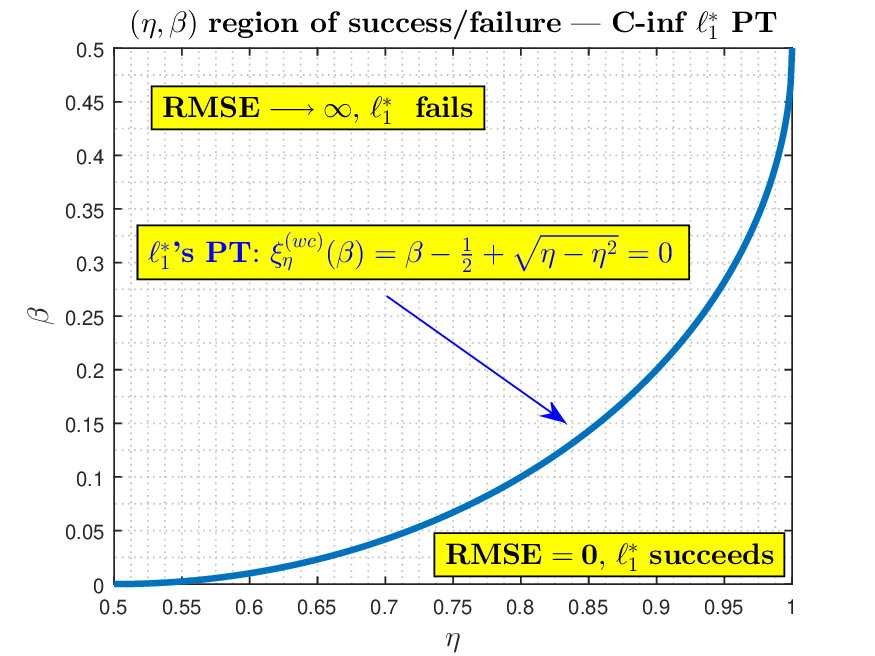,width=13.5cm,height=8cm}}
%\end{minipage}
%\begin{minipage}[b]{.5\linewidth}
%\centering
%\centerline{\epsfig{figure=finprerral08.eps,width=9cm,height=6.5cm}}
%\end{minipage}
\caption{Typical \emph{\textbf{worst case}} $\ell_1^*$ phase transition (ideal low rank context (block causal inference  -- C-inf))}
\label{fig:cinfspectypwcPTbetaeta}
\end{figure}

As mentioned after (\ref{eq:cinfanl5a}), one should observe that (\ref{eq:cinfanl5a}) holds as long as the optimizing condition is feasible. The above theorem precisely identifies the relation between the system parameters $\beta$ and $\eta$ when that happens. Visually speaking, that happens whenever $(\eta,\beta)$ are below (or to the right of) the curve in Figure \ref{fig:cinfspectypwcPTbetaeta}. As soon as the system parameters are above the curve, the optimizing condition becomes infeasible, and one has that both the  optimization in (\ref{eq:cinfanl5})  and the residual error norm, $\|W\|_F$, are, in general, unbounded. Moreover, as mentioned immediately after (\ref{eq:cinfanl5a}), the feasibility of the optimizing condition is precisely what determines the phase transition in the idealized scenario (where $\epsilon_\sigma=0$) and what is fully characterized by Theorem \ref{thm:typwcthm1}.

In regimes below the phase transitions, the residual error norm, denoted as $\|W\|_F$, can be effectively characterized. We provide a detailed characterization of this in the subsequent section.

 %%%%%%%%%%%%%%%%%%%%%%%%%%%%%%%%%%%%%%%%%%%%%%%%%%%%%%%%%%%%%%%%%%%%%%%%%%%%%%%%%%%%%%%%%%%%%%%%%
 %%%%%%%%%%%%%%%%%%%%%%%%%%%%%%%%%%%%%%%%%%%%%%%%%%%%%%%%%%%%%%%%%%%%%%%%%%%%%%%%%%%%%%%%%%%%%%%%%
 %%%%%%%%%%%%%%%%%%%%%%%%%%%%%%%%%%%%%%%%%%%%%%%%%%%%%%%%%%%%%%%%%%%%%%%%%%%%%%%%%%%%%%%%%%%%%%%%%
 %%%%%%%%%%%%%%%%%%%%%%%%%%%%%%%%%%%%%%%%%%%%%%%%%%%%%%%%%%%%%%%%%%%%%%%%%%%%%%%%%%%%%%%%%%%%%%%%%
\subsection{Operating in the regimes below phase transitions}
\label{sec:belowpt}
 %%%%%%%%%%%%%%%%%%%%%%%%%%%%%%%%%%%%%%%%%%%%%%%%%%%%%%%%%%%%%%%%%%%%%%%%%%%%%%%%%%%%%%%%%%%%%%%%%
 %%%%%%%%%%%%%%%%%%%%%%%%%%%%%%%%%%%%%%%%%%%%%%%%%%%%%%%%%%%%%%%%%%%%%%%%%%%%%%%%%%%%%%%%%%%%%%%%%
 %%%%%%%%%%%%%%%%%%%%%%%%%%%%%%%%%%%%%%%%%%%%%%%%%%%%%%%%%%%%%%%%%%%%%%%%%%%%%%%%%%%%%%%%%%%%%%%%%
 %%%%%%%%%%%%%%%%%%%%%%%%%%%%%%%%%%%%%%%%%%%%%%%%%%%%%%%%%%%%%%%%%%%%%%%%%%%%%%%%%%%%%%%%%%%%%%%%%

As mentioned above, one obtains the worst case scenario for $V=U$. In such a scenario one also has $V_B=V_C$, $\Lambda$ becomes symmetric and its parametrization from (\ref{eq:cinfanl11aa1z1}) and (\ref{eq:cinfanl11aa2z1}) becomes
	\begin{equation}\label{eq:supp1}
		\Lambda = V_BEV_B^T+V_B F(V_B^{\perp})^T+V_B^\perp D.
	\end{equation}
Moreover, although it is not strictly needed for our considerations here, we mention that from (\ref{eq:cinfcor2eq1}), (\ref{eq:cinfanl11aa1z1}) and (\ref{eq:cinfanl11aa2z1}) one can explicitly determine
	\begin{equation}\label{eq:supp2}
E=-\Sigma_B^{-1}V_B^TGV_B\Sigma_B^{-1}, \quad \mbox{where}\quad G=(I^{(l)})^T \bar{U}^{\perp} (\bar{U}^{\perp})^T I^{(l)}.
	\end{equation}
Recalling (\ref{eq:cinfanl5a}) and utilizing Proposition \ref{cor:cinfcor3} together with the above parametrization of $\Lambda$ from (\ref{eq:supp1}), we further have
{\small  \begin{eqnarray}
f_{pr}(M;U,V) & = &   \max_{\Lambda,\Lambda^T\Lambda\leq I,(I^{(l)})^T(\bV\bU^T+\bV^{\perp}\Lambda(\bU^{\perp})^TI^{(l)}=0} \tr(\Lambda\diag(\epsilon_\sigma)) \nonumber \\
    & = &   \max_{\Lambda = V_BEV_B^T+V_B F(V_B^{\perp})^T+V_B^\perp D,\Lambda\Lambda\leq I} \tr((V_BEV_B^T+V_B F(V_B^{\perp})^T+V_B^\perp D)\diag(\epsilon_\sigma)). \nonumber \\\label{eq:supp3}
\end{eqnarray}}We continue to consider the worst case and take $\epsilon_\sigma=\max(\epsilon_\sigma)\1$ which implies $\diag(\epsilon_\sigma)=\max(\epsilon_\sigma) I$. Moreover, since it will turn out that, in the worst case, the particular scaling plays no role, to facilitate writing we will assume $\diag(\epsilon_\sigma)= I$. We then from (\ref{eq:supp3}) have
\begin{eqnarray}
f_{pr}(M;U,V) & = &   \max_{\Lambda = V_BEV_B^T+V_B^\perp D_S (V_B^\perp)^T,\Lambda\Lambda\leq I} \tr((V_BEV_B^T+V_B^\perp D_S (V_B^\perp)^T)\diag(\epsilon_\sigma)). \nonumber \\
& = &  \tr(V_BEV_B^T) +n-k-l,
\label{eq:supp4}
\end{eqnarray}
where the optimum is achieved for $D_S=I$ and ultimately
 	\begin{equation}\label{eq:supp5}
		\tilde{\Lambda} = V_BEV_B^T+V_B^\perp D_S (V_B^\perp)^T= V_BEV_B^T+V_B^\perp (V_B^\perp)^T.
	\end{equation}
We then have the following key result.
\begin{theorem}(Residual optimization characterization)
 Assume the setup of Theorem \ref{thm:thm1}. One then has the following
\vspace{-.0in}\begin{center}
\tcbset{beamer,lower separated=false, fonttitle=\bfseries,
coltext=black , interior style={top color=orange!10!yellow!30!white, bottom color=yellow!80!yellow!50!white}, title style={left color=orange!10!cyan!30!blue, right color=green!70!blue!20!black}}
 \begin{tcolorbox}[beamer,title=\hspace{-.12in}\textbf{\yellow{Worst case RMSE (deterministic) estimate}:},width=1.0\linewidth]
{\small \vspace{-.07in}\begin{equation}
\hspace{-.19in}\|\bar{W}\|_F \hspace{-.02in}=\hspace{-.02in}\sqrt{-\hspace{-.02in}\min_{\bar{\Lambda},\bar{\Lambda}(I-(((I^{(l)})^T\bV^{\perp})^T(I^{(l)})^T\bV^{\perp}))=0}
\hspace{-.02in} \lp \frac{1}{4}\|(I^{(l)})^T(\bV^{\perp}\bar{\Lambda}(\bV^{\perp})^TI^{(l)}\|_F^2  -   \tr(\bar{\Lambda}\diag(\epsilon_\sigma)) \rp}.\label{eq:cinfthm2eq1a0}
\end{equation}}
\vspace{-.17in}
 \end{tcolorbox}
\end{center}
\label{thm:thm2}
\end{theorem}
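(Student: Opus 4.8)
The plan is to turn the \emph{value} computation already carried out in (\ref{eq:supp3})--(\ref{eq:supp5}) into a statement about the \emph{norm} of the residual optimizer, working throughout in the worst-case regime $V=U$ whose achievability is guaranteed by the ``moreover'' clause of Theorem~\ref{thm:thm1}. In that regime $\Lambda$ is symmetric, $V_B=V_C$, and the parametrization (\ref{eq:supp1}) together with the explicit optimal dual matrix (\ref{eq:supp5}) is available. First I would record the convex-concave saddle structure: replacing the nuclear norm in (\ref{eq:cinfcor1}) by its variational form $\ell_1^*(Z)=\max_{\Lambda^T\Lambda\preceq I}\tr(\Lambda Z)$ as in (\ref{eq:cinfanl3}) and adjoining the multiplier $\Theta$ for the affine constraint $M\circ W=\0$, one obtains a convex objective over $W$ with affine constraints, so strong duality holds (the ball $\Lambda^T\Lambda\preceq I$ being compact permits the minimax exchange, and $W=\0$ is feasible). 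This furnishes a saddle pair $(\bar W,\tilde\Lambda)$ with $\tilde\Lambda$ as in (\ref{eq:supp5}) and the stationarity identity that $\bV\bU^T+\bV^{\perp}\tilde\Lambda(\bU^{\perp})^T$ is supported on the observed entries, i.e. its restriction to the missing block vanishes.

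The heart of the argument is to extract $\|\bar W\|_F$ rather than the optimal value $f_{pr}$ from (\ref{eq:supp4}). I would use the self-dual representation $\|\bar W\|_F^2=\max_{\Psi}\lp 2\tr(\Psi^T\bar W)-\|\Psi\|_F^2\rp$ and eliminate $\bar W$ through its optimality conditions: since $\bar W=I^{(l)}\tilde W(I^{(l)})^T$ is supported on the missing block and, by complementary slackness, its active component is aligned with $\tilde\Lambda$, the pairing $\tr(\Psi^T\bar W)$ becomes expressible through the dual matrix while the penalty $\|\Psi\|_F^2$ becomes the quadratic $\tfrac14\|(I^{(l)})^T\bV^{\perp}\bar\Lambda(\bV^{\perp})^T I^{(l)}\|_F^2$; the characteristic factor $\tfrac14$ is exactly what a completion of squares on a self-dual $\|\cdot\|_F^2$ term produces. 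Simultaneously, the linear dependence of the objective on $\diag(\epsilon_\sigma)$ inherited from (\ref{eq:cinfanl5a}) survives as the reward $\tr(\bar\Lambda\diag(\epsilon_\sigma))$. In effect the hard spectral constraint $\Lambda^T\Lambda\preceq I$ of the feasibility dual is traded for the quadratic penalty, and this penalty is precisely what measures the residual once $\epsilon_\sigma>0$.

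It then remains to identify the feasible set of the dual variable. Writing $B\triangleq(I^{(l)})^T\bV^{\perp}$, only the component of $\bar\Lambda$ surviving the two-sided map $\bar\Lambda\mapsto B\bar\Lambda B^T$ enters both terms, which is why $F$ and $D$ in (\ref{eq:supp1}) drop out at the optimum (cf. (\ref{eq:supp5})). The constraint $\bar\Lambda\lp I-B^TB\rp=\0$, i.e. $\bar\Lambda\lp I-((I^{(l)})^T\bV^{\perp})^T(I^{(l)})^T\bV^{\perp}\rp=\0$, restricts $\bar\Lambda$ to the subspace on which $B$ has unit singular values; this subspace is nonempty precisely because the range of $\bV^{\perp}$ meets the bottom-$(n-l)$ coordinate subspace in dimension $n-k-l$ when $\beta+\eta<1$, and it is exactly the subspace carrying the residual (the same $n-k-l$ that appears in (\ref{eq:supp4})). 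On it $B$ acts isometrically, so the quadratic term is genuinely nondegenerate. Writing the characterization as $-\min(\cdots)$ merely flips the sign of the concave objective, and nonnegativity under the radical is immediate since $\bar\Lambda=\0$ is feasible with objective value $0$, whence the minimum is $\le 0$. Finally, the achievability clause of Theorem~\ref{thm:thm1} promotes the bound to the equality asserted in (\ref{eq:cinfthm2eq1a0}).

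I expect the genuine obstacle to be the middle step: rigorously certifying that the Frobenius norm of the \emph{optimizer} $\bar W$, as opposed to the already-known optimal value $f_{pr}$, is captured by trading the spectral constraint for the quadratic penalty. This requires showing that the directions of $\bar W$ inactive in the complementary-slackness relation contribute nothing to $\|\bar W\|_F$ beyond what the penalized dual records, and that the alignment between $\tilde\Lambda$ and $\diag(\epsilon_\sigma)+(\bU^{\perp})^T\bar W\bV^{\perp}$ is tight on the active subspace. The symmetry granted by $V=U$ and the explicit form (\ref{eq:supp5}) are what make this bookkeeping tractable.
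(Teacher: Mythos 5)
Your proposal assembles the right ingredients --- the variational form of $\ell_1^*$, strong duality, the worst-case $V=U$ parametrization, the support constraint on $\bar{\Lambda}$, and the sign/nonnegativity bookkeeping --- but the step you yourself flag as ``the genuine obstacle'' is exactly where the proof lives, and your proposed device (the self-dual representation $\|\bar{W}\|_F^2=\max_{\Psi}(2\tr(\Psi^T\bar{W})-\|\Psi\|_F^2)$ followed by elimination of $\bar{W}$ via complementary slackness) does not close it. The paper's mechanism is different and concrete: it introduces the auxiliary convex program $\min_W\|W\|_F^2$ subject to $M\circ W=\0_{n\times n}$ and to the original objective not exceeding its optimal value $f_{pr}=\tr(\tilde{\Lambda})$ (see (\ref{eq:supp6})--(\ref{eq:supp8})); that is, it minimizes the norm over the optimal face rather than trying to read the norm off an optimizer. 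Dualizing that program, the inner minimization over $W$ of $\|W\|_F^2+\tr(CW)$ with $C=\lambda \bV\bU^T+\bV^{\perp}\Lambda(\bU^{\perp})^T+\Theta\circ M$ is an unconstrained quadratic whose value is $-\frac{1}{4}\|C\|_F^2$, and maximizing over $\Theta$ restricts $C$ to the missing block. That is where the $\frac{1}{4}$ comes from; ``completion of squares on a self-dual term'' as written in your proposal is a heuristic, not a derivation, and it presupposes access to $\bar{W}$, which is precisely what the auxiliary program is designed to avoid.

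Two further ingredients you omit are essential to arrive at the displayed formula. First, the change of dual variable $\bar{\Lambda}=\Lambda-\lambda\tilde{\Lambda}$ in (\ref{eq:supp10})--(\ref{eq:supp11}): only after this shift, and only because $\tilde{\Lambda}$ satisfies the feasibility identity $(I^{(l)})^T(\bV\bU^T+\bV^{\perp}\tilde{\Lambda}(\bU^{\perp})^T)I^{(l)}=0$, does the $\lambda \bV\bU^T$ term drop out of the quadratic and does the linear reward become $\tr(\bar{\Lambda}\diag(\epsilon_\sigma))$ rather than $\tr(\Lambda\diag(\epsilon_\sigma))-\lambda\tr(\tilde{\Lambda}\diag(\epsilon_\sigma))$. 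Second, the theorem's formula carries no spectral constraint on $\bar{\Lambda}$, only the support constraint; in the paper this comes from the positive-semidefiniteness analysis of (\ref{eq:supp15}) (forcing $Z_1\leq 0$, hence $Z_1=Z_2=0$ at the optimum, which is the support constraint) together with the observation, made explicit around (\ref{eq:supp18}), that $\lambda$ may be taken arbitrarily large so that the residual ball constraint $(\bar{\Lambda}+\lambda\tilde{\Lambda})^T(\bar{\Lambda}+\lambda\tilde{\Lambda})\leq\lambda^2 I$ becomes vacuous. Your proposal never explains why the hard spectral constraint on the dual variable can be dropped. Without the auxiliary minimum-norm program, the variable shift, and the $\lambda\rightarrow\infty$ argument, the identity (\ref{eq:cinfthm2eq1a0}) is not established.
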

\begin{proof}
  We start by considering the following optimization
\begin{eqnarray}
 \|\hat{W}\|_F^2  =    \min_{W} & & \|W\|_F^2. \nonumber \\
\mbox{subject to} & &  \tr(\bU^TW\bV)+\ell_1^*(\diag(\epsilon_\sigma)+(\bU^{\perp})^TW\bV^{\perp})\leq f_{pr}(M;U,V)=\tr(\tilde{\Lambda}) \nonumber \\
 & & M\circ W=0.
\label{eq:supp6}
\end{eqnarray}
Following again the Lagrangian mechanism discussed above, we write
 \begin{eqnarray}
\mathcal{L}(W,\lambda,\Theta) \hspace{-.0in} & = & \hspace{-.0in} \|W\|_F^2 +\lambda \tr(\bU^TW\bV)+\lambda \ell_1^*(\diag(\epsilon_\sigma)+(\bU^{\perp})^TW\bV^{\perp})
\nonumber \\
& & +\Theta (M\circ W) -\lambda\tr(\tilde{\Lambda})\nonumber \\
& = & \hspace{-.0in}\max_{\Lambda,\Lambda^T\Lambda\leq \lambda^2 I} \Bigg.\Bigg( \|W\|_F^2 + \lambda \tr(\bU^TW\bV)+\tr(\Lambda(I+(\bU^{\perp})^TW\bV^{\perp})) \nonumber \\
& & +\Theta (M\circ W) -\lambda\tr(\tilde{\Lambda})\Bigg.\Bigg)\nonumber \\
& = & \hspace{-.0in}\max_{\Lambda,\Lambda^T\Lambda\leq \lambda^2 I} \Bigg(\Bigg. \|W\|_F^2 + \tr\lp (\lambda \bV\bU^T+\bV^{\perp}\Lambda(\bU^{\perp})^T
+\Theta\circ M)W\rp
 \nonumber \\
& &  + \tr(\Lambda-\lambda \tilde{\Lambda}) \Bigg.\Bigg).
\label{eq:supp7}
\end{eqnarray}
We then also have for the Lagrange dual function
\begin{eqnarray}
g(\lambda,\Theta)\hspace{-.0in} & = & \hspace{-.0in} \min_{W}\mathcal{L}(W,\lambda,\Theta) \hspace{-.1in}  \nonumber \\
& = & \hspace{-.0in}\min_{W}\max_{\Lambda,\Lambda^T\Lambda\leq \lambda^2 I} \Bigg(\Bigg. \|W\|_F^2 + \tr\lp (\lambda \bV\bU^T+\bV^{\perp}\Lambda(\bU^{\perp})^T+\Theta\circ M)W\rp \nonumber \\
& & + \tr(\Lambda-\lambda \tilde{\Lambda}) \Bigg.\Bigg).\nonumber \\
\label{eq:supp8}
\end{eqnarray}
Connecting (\ref{eq:supp6})-(\ref{eq:supp8}) and utilizing the Lagrange duality gives
\begin{eqnarray}
\|\hat{W}\|_F^2 \hspace{-.0in} & \geq &  \hspace{-.0in} \max_{\Lambda,\Lambda^T\Lambda\leq \lambda^2 I} \min_{W}
\Bigg(\Bigg.
\|W\|_F^2 + \tr\lp (\lambda \bV\bU^T+\bV^{\perp}\Lambda(\bU^{\perp})^T+\Theta\circ M)W\rp
  \nonumber \\
&  &+ \tr(\Lambda-\lambda \tilde{\Lambda}) \Bigg.\Bigg) \nonumber \\
& =  & \hspace{-.0in} \max_{\Lambda,\Lambda^T\Lambda\leq \lambda^2 I}  \lp -\frac{1}{4}\|(I^{(l)})^T (\lambda \bV\bU^T+\bV^{\perp}\Lambda(\bU^{\perp})^T)I^{(l)}\|_F^2 + \tr(\Lambda-\lambda \tilde{\Lambda}) \rp \nonumber \\
& =  & \hspace{-.0in} -\min_{\Lambda,\Lambda^T\Lambda\leq \lambda^2 I}  \lp \frac{1}{4}\|(I^{(l)})^T (\lambda \bV\bU^T+\bV^{\perp}\Lambda(\bU^{\perp})^T)I^{(l)}\|_F^2 - \tr(\Lambda-\lambda \tilde{\Lambda}) \rp. \nonumber \\
\label{eq:supp9}
\end{eqnarray}
After setting
\begin{eqnarray}
 \bar{\Lambda}=\Lambda-\lambda\tilde{\Lambda},
 \label{eq:supp10}
\end{eqnarray}
one can rewrite  (\ref{eq:supp9}) as
{\small \begin{eqnarray}
\|\hat{W}\|_F^2\hspace{-.0in}
& =  & \hspace{-.0in} -\min_{\Lambda,\Lambda^T\Lambda\leq \lambda^2 I}  \lp \frac{1}{4}\|(I^{(l)})^T (\lambda \bV\bU^T+\bV^{\perp}\Lambda(\bU^{\perp})^T)I^{(l)}\|_F^2 - \tr(\Lambda-\lambda \tilde{\Lambda}) \rp \nonumber \\
& =  & \hspace{-.0in} -\min_{\Lambda,(\bar{\Lambda}+\lambda\tilde{\Lambda})^T(\bar{\Lambda}+\lambda\tilde{\Lambda})\leq \lambda^2 I}  \lp \frac{1}{4}\|(I^{(l)})^T (\lambda \bV\bU^T+\bV^{\perp}(\bar{\Lambda}+\lambda \Lambda) (\bU^{\perp})^T)I^{(l)}\|_F^2 - \tr(\bar{\Lambda}) \rp \nonumber \\
& =  & \hspace{-.0in} -\min_{\Lambda,(\bar{\Lambda}+\lambda\tilde{\Lambda})^T(\bar{\Lambda}+\lambda\tilde{\Lambda})\leq \lambda^2 I}  \lp \frac{1}{4}\|(I^{(l)})^T \bV^{\perp}\bar{\Lambda}(\bU^{\perp})^TI^{(l)}\|_F^2 - \tr(\bar{\Lambda}) \rp. \nonumber \\
\label{eq:supp11}
\end{eqnarray}}As mentioned above, one has in the worst case $V=U$ and $\bar{V}=\bar{U}$ and the following parametrization of $\bar{\Lambda}$
\begin{eqnarray}
 \bar{\Lambda}=\begin{bmatrix}V_B^\perp & V_B  \end{bmatrix}\begin{bmatrix}Z_1 & Z_2 \\Z_2 & Z_3 \end{bmatrix}
 \begin{bmatrix}V_B^\perp & V_B \end{bmatrix}^T.
 \label{eq:supp12}
\end{eqnarray}
After recalling on parametrization of $\tilde{\Lambda}$ from (\ref{eq:supp5}) we find
\begin{eqnarray}
 \tilde{\Lambda}=\begin{bmatrix}V_B^\perp & V_B  \end{bmatrix}\begin{bmatrix}I & 0 \\0 & E \end{bmatrix}
 \begin{bmatrix}V_B^\perp & V_B \end{bmatrix}^T,
 \label{eq:supp13}
\end{eqnarray}
and
\begin{eqnarray}
 \bar{\Lambda}+ \lambda \tilde{\Lambda}=\begin{bmatrix}V_B^\perp & V_B  \end{bmatrix}\begin{bmatrix}\lambda I+Z_1 & Z_2 \\Z_2 & \lambda E+Z_3 \end{bmatrix}
 \begin{bmatrix}V_B^\perp & V_B \end{bmatrix}^T.
 \label{eq:supp14}
\end{eqnarray}
Moreover, we then also have
\begin{eqnarray}
\lambda^2 I- (\bar{\Lambda}+ \tilde{\Lambda})^2=\begin{bmatrix}V_B^\perp & V_B  \end{bmatrix}\begin{bmatrix}\lambda^2 I-(\lambda I+Z_1)^2-Z_2^TZ_2 & Z_4 \\ Z_4 & Z_5 \end{bmatrix}
 \begin{bmatrix}V_B^\perp & V_B \end{bmatrix}^T.
 \label{eq:supp15}
\end{eqnarray}
For $\lambda^2 I- (\bar{\Lambda}+ \tilde{\Lambda})^2$ to be positive semi-definite one would need $\lambda^2 I-(\lambda I+Z_1)^2-Z_2^TZ_2\geq 0$ which means that $Z_1\leq 0$. Recalling from (\ref{eq:cinfanl11aa1z1}), $(I^{(l)})^TV^\perp=U_B\Sigma_BV_B^T$, one then from (\ref{eq:supp11}) has
that $\bar{\Lambda}$ does not belong to the span of $V_B^\perp$, i.e. it does not belong to the span orthogonal to $(I^{(l)})^TV^\perp$. Together with (\ref{eq:supp11}) and the strong duality (which, due to the underlying convexity, ensures that the inequality in (\ref{eq:supp9}) can be replaced by an equality) this gives the equation in the statement of Theorem \ref{thm:thm2} and ultimately completes the prof of the theorem.
\end{proof}

A combination of Theorem \ref{thm:thm2} and Lemma \ref{lemma:lemma1} gives the following statistical  \textbf{RMSE} characterization.

\begin{theorem}(Worst case \textbf{RMSE})
 Assume the setup of Theorem \ref{thm:thm1} with $U=V$ Haar distributed and a large $n$ linear regime with $\beta\triangleq \lim_{n\rightarrow \infty}\frac{k}{n}$ and $\eta\triangleq \lim_{n\rightarrow \infty}\frac{l}{n}$. Assume that $\hat{X}$ is the solution of (\ref{eq:genmcl1posmmt}). One then has the following
\vspace{-.0in}\begin{center}
\tcbset{beamer,lower separated=false, fonttitle=\bfseries,
coltext=black , interior style={top color=orange!10!yellow!30!white, bottom color=yellow!80!yellow!50!white}, title style={left color=orange!10!cyan!30!blue, right color=green!70!blue!20!black}}
 \begin{tcolorbox}[beamer,title=\hspace{-.12in}\textbf{\yellow{Worst case RMSE (statistical) estimate:}},width=5.2in]
\vspace{-.07in}\begin{eqnarray}
\hspace{-.2in}   x_{l/u} & \triangleq & \beta+\eta-2\beta\eta  \pm  \sqrt{(\beta+\eta-2\beta\eta)^2-(\beta-\eta)^2}, \nonumber \\
\hspace{-.2in}\xi &  \triangleq &
              \sigma_\epsilon \frac{\sqrt{\int_{x_l}^{x_u}\frac{\sqrt{-(x-(\beta+\eta))^2-4\beta\eta(x-1)}}{2\pi x^2(x-x^2)}dx+\max(1-(\beta+\eta),0)}}{\sqrt{1-\beta}(1-\eta)}, \nonumber \\
\hspace{-.2in}\forall X_{sol} & &  \lim_{n\rightarrow\infty}  \mP\lp\textbf{RMSE}  =  \frac{n\|\hat{X}-X_{sol}\|_F}{\sqrt{n-k}(n-l)}  \leq  \xi\rp \longrightarrow 1, \nonumber \\
\hspace{-.2in}\exists X_{sol} & &  \lim_{n\rightarrow\infty}  \mP\lp \textbf{RMSE}  =  \frac{n\|\hat{X}-X_{sol}\|_F}{\sqrt{n-k}(n-l)}  \in ((1-\epsilon)\xi,(1+\epsilon)\xi)\rp \longrightarrow 1.\hspace{.2in}\nonumber \\
 \label{eq:cinfthm3eq1a0}
\end{eqnarray}
\vspace{-.17in}
 \end{tcolorbox}
\end{center}
\label{thm:thm3}
\end{theorem}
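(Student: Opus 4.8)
The plan is to reduce the entire statement to a single linear spectral statistic by first solving the deterministic optimization of Theorem~\ref{thm:thm2} in closed form and then averaging with the law of Lemma~\ref{lemma:lemma1}. Using the SVD $(I^{(l)})^T\bV^{\perp}=U_B\Sigma_BV_B^T$ from (\ref{eq:cinfanl11aa1z1}), I would observe that the constraint $\bar{\Lambda}(I-(((I^{(l)})^T\bV^{\perp})^T(I^{(l)})^T\bV^{\perp}))=0$ in (\ref{eq:cinfthm2eq1a0}) forces $\bar{\Lambda}$ to be supported on the range of $V_B$, so I set $\bar{\Lambda}=V_B\Gamma V_B^T$ and, using the worst-case choice $\diag(\epsilon_\sigma)=\sigma_\epsilon I$, rewrite the objective as $\tfrac14\|\Sigma_B\Gamma\Sigma_B\|_F^2-\sigma_\epsilon\tr(\Gamma)$. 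This decouples entrywise: off-diagonal entries of $\Gamma$ enter only quadratically and vanish at the optimum, while each diagonal entry solves $\min_{\gamma_i}\tfrac14\Sigma_{B,i}^4\gamma_i^2-\sigma_\epsilon\gamma_i$, giving $\gamma_i=2\sigma_\epsilon\Sigma_{B,i}^{-4}$ and optimal contribution $-\sigma_\epsilon^2\Sigma_{B,i}^{-4}$. Summing and using $\|\hat{W}\|_F^2=-\min(\cdot)$ yields
\begin{equation*}
\|\hat{W}\|_F^2=\sigma_\epsilon^2\sum_i\frac{1}{\Sigma_{B,i}^4}=\sigma_\epsilon^2\sum_i\frac{1}{\mu_i^2},
\end{equation*}
where $\mu_i=\Sigma_{B,i}^2$ are the nonzero eigenvalues of $(I^{(l)})^T\bV^{\perp}(\bV^{\perp})^TI^{(l)}$.

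Next I would identify these $\mu_i$ with the spectrum of Lemma~\ref{lemma:lemma1}. Since the worst case takes $U=V$, the deterministic coordinate projection $I^{(l)}(I^{(l)})^T$ plays the role of ${\cal U}$, and by rotational invariance of the Haar measure the pair $\lp\bV^{\perp}(\bV^{\perp})^T,\,I^{(l)}(I^{(l)})^T\rp$ has the same joint law as the independent pair $({\cal V},{\cal U})$ of the lemma; the nonzero eigenvalues of $\tilde{D}={\cal V}{\cal U}$ then coincide with the $\mu_i$. Passing to the large-$n$ limit, the relevant linear spectral statistic converges to the corresponding integral against (\ref{eq:typwclemma2eq4})--(\ref{eq:typwclemma2eq5}),
\begin{equation*}
\frac1n\sum_i\frac{1}{\mu_i^2}\longrightarrow \int_{x_l}^{x_u}\frac{f_{\tilde{D}}^{(b)}(x)}{x^2}\,dx+\max(1-(\beta+\eta),0)=\int_{x_l}^{x_u}\frac{\sqrt{-(x-(\beta+\eta))^2-4\beta\eta(x-1)}}{2\pi x^2(x-x^2)}\,dx+\max(1-(\beta+\eta),0),
\end{equation*}
which is exactly the quantity under the square root in $\xi$ (the atom at $x=0$ is excluded, carrying no weight in the sum over nonzero eigenvalues).

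To finish, I would substitute $n-k=n(1-\beta)(1+o(1))$ and $n-l=n(1-\eta)(1+o(1))$ into the normalization and verify $\frac{n^{3/2}}{\sqrt{n-k}(n-l)}\to \frac{1}{\sqrt{1-\beta}(1-\eta)}$, so that
\begin{equation*}
\frac{n\|\hat{W}\|_F}{\sqrt{n-k}(n-l)}=\sigma_\epsilon\,\frac{n^{3/2}}{\sqrt{n-k}(n-l)}\sqrt{\frac1n\sum_i\frac{1}{\mu_i^2}}\longrightarrow \sigma_\epsilon\,\frac{\sqrt{\int_{x_l}^{x_u}\frac{f_{\tilde{D}}^{(b)}(x)}{x^2}dx+\max(1-(\beta+\eta),0)}}{\sqrt{1-\beta}(1-\eta)}=\xi.
\end{equation*}
The two-sided conclusion then follows from Theorem~\ref{thm:thm1}: the bound $\|\hat{X}-X_{sol}\|_F\le\|\hat{W}\|_F$ combined with the worst-case choice $\epsilon_\sigma=\sigma_\epsilon\1$ (which maximizes the linear term and hence $\|\hat{W}\|_F$ over the admissible class $\epsilon_\sigma\le\sigma_\epsilon\1$) yields the $\forall X_{sol}$ upper estimate, while the equality case of Theorem~\ref{thm:thm1} at this same $\epsilon_\sigma$, together with concentration of the spectral statistic, produces an $X_{sol}$ attaining the value within $((1-\epsilon)\xi,(1+\epsilon)\xi)$.

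The main obstacle is the convergence $\frac1n\sum_i\mu_i^{-2}\to\int x^{-2}f_{\tilde{D}}(x)\,dx$, since the test function $g(x)=x^{-2}$ is unbounded at the origin and weak convergence of the empirical spectral distribution does not by itself control it. What rescues the argument is precisely the below-phase-transition regime of Theorem~\ref{thm:typwcthm1}: there the lower edge $x_l$ in (\ref{eq:typwclemma2eq3}) is strictly positive, the density has the integrable square-root behavior $\sim\sqrt{x-x_l}$ near $x_l$, and one must additionally show that with overwhelming probability the smallest nonzero eigenvalue $\mu_{\min}$ stays bounded away from $0$, so that $g$ may be replaced by a bounded continuous function without altering the limit. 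Establishing this hard-edge rigidity --- that no eigenvalue escapes below $x_l$ --- is the quantitative heart of the proof and is where the free-probability input of Lemma~\ref{lemma:lemma1} must be supplemented by an edge/no-outlier estimate; the remaining steps are the routine algebra indicated above.
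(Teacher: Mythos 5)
Your proposal is correct and follows essentially the same route as the paper's proof: restrict $\bar{\Lambda}$ to the span of $V_B$ via the SVD $(I^{(l)})^T\bU^{\perp}=U_B\Sigma_BV_B^T$, decouple the quadratic-minus-linear objective into scalar problems to get $\|\hat{W}\|_F^2=\sigma_\epsilon^2\sum_i\bar{\sigma}_i^{-4}$, identify the $\bar{\sigma}_i^2$ with the spectrum of $\tilde{D}$ from Lemma~\ref{lemma:lemma1}, and integrate against $f_{\tilde{D}}$ with the stated normalization. Your closing remark on the unboundedness of $x^{-2}$ at the origin and the need for a no-outlier/edge estimate below $x_l$ makes explicit a step the paper compresses into the single word ``Concentrations,'' which is a fair and useful observation rather than a deviation.
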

\begin{proof}
From considerations presented in (\ref{eq:supp11})-(\ref{eq:supp15}) we have
\begin{equation}
\|\hat{W}\|_F^2
 =   -\min_{Z_3,\bar{\Lambda}=V_BZ_3V_B^T,\lambda, (\bar{\Lambda}+\lambda\tilde{\Lambda})^T(\bar{\Lambda}+\lambda\tilde{\Lambda})\leq \lambda^2 I}  \lp \frac{1}{4}\|(I^{(l)})^T \bU^{\perp}\bar{\Lambda}(\bU^{\perp})^TI^{(l)}\|_F^2 - \tr(\bar{\Lambda}) \rp .
\label{eq:supp16}
\end{equation}
Substituting  $(I^{(l)})^T \bU^{\perp}=U_B\Sigma_BB_B^T$ we obtain
\begin{equation}
\|\hat{W}\|_F^2
 =   -\min_{Z_3,\lambda, (Z_3+\lambda E)^T(Z_3+\lambda E)\leq \lambda^2 I}  \lp \frac{1}{4}\tr(\Sigma_BZ_3\Sigma_B^2Z_3\Sigma_B) - \tr(Z_3) \rp .
\label{eq:supp17}
\end{equation}
 Elementary algebraic considerations give that optimal $Z_3$ is diagonal and since $\lambda$ can be arbitrarily large we have from (\ref{eq:supp17})
\begin{equation}
\|\hat{W}\|_F^2
 =   -\min_{\bar{z}}  \lp \frac{1}{4}\sum_{i=1}^{n-l}\bar{z}_i^2\bar{\sigma}_i^4 - \sum_{i=1}^{n-l}\bar{z}_i \rp=\sum_{i=1}^{n-l}\frac{1}{\bar{\sigma}_i^4}, \quad \mbox{where}\quad \bar{\sigma}=\diag(\Sigma_B).
\label{eq:supp18}
\end{equation}
Keeping in mind that the distribution of $\bar{\sigma}_i^2$ is as given in  (\ref{eq:typwclemma2eq4}) and (\ref{eq:typwclemma2eq5}), one finally has
\begin{eqnarray}
\mE\|\hat{W}\|_F^2
 & = &  \mE \sum_{i=1}^{n-l}\frac{1}{\bar{\sigma}_i^4} \nonumber \\
 & = &\lp \int_{x_l}^{x_u}
              \frac{\sqrt{-(x-(\beta+\eta))^2-4\beta\eta(x-1)}}{2\pi x^2(x-x^2)} + \max(1-(\beta+\eta),0)\delta(x-1)\rp.\nonumber \\
\label{eq:supp19}
\end{eqnarray}
Concentrations, scalings by $\sqrt{n-k}(n-l)/n$, and adjusting the maximum magnitude of $\epsilon_{\sigma}$, $\max(\epsilon_{\sigma})$, to be any quantity $\sigma_{\epsilon}$ (instead of 1) completes the proof of the theorem.
\end{proof}

\vspace{-.0in}
%%%%%%%%%%%%%%%%%%%%%%%%%%%%%%%%%%%%%%%%%%%%%%%%%%%%%%%%%%%%%%%%%%%%%%%%%%%%%%%%%%%%%%%%%%%%%%%%%%%%%%%%%%%
%%%%%%%%%%%%%%%%%%%%%%%%%%%%%%%%%%%%%%%%%%%%%%%%%%%%%%%%%%%%%%%%%%%%%%%%%%%%%%%%%%%%%%%%%%%%%%%%%%%%%%%%%%%
\subsection{A summarized interpretation of the methodology and obtained theoretical results}
\label{sec:interpret}
%%%%%%%%%%%%%%%%%%%%%%%%%%%%%%%%%%%%%%%%%%%%%%%%%%%%%%%%%%%%%%%%%%%%%%%%%%%%%%%%%%%%%%%%%%%%%%%%%%%%%%%%%%%
%%%%%%%%%%%%%%%%%%%%%%%%%%%%%%%%%%%%%%%%%%%%%%%%%%%%%%%%%%%%%%%%%%%%%%%%%%%%%%%%%%%%%%%%%%%%%%%%%%%%%%%%%%%

We find it useful to provide a more concrete interpretation of the results given in Theorem \ref{thm:thm3} within the causal inference context (see, e.g., \cite{ABDIK21} for a pioneering work in establishing matrix completion $\longleftrightarrow$ block causal inference (MC$\longleftrightarrow$ block C-inf) connection). We first recall that within the MC $\longleftrightarrow$ block C-inf context, the rows and columns of $X_{sol}$ correspond to the units and time periods (at which the units can potentially be exposed to treatments), respectively. Assuming the so-called irreversible simultaneous treatments during $n-l$ consecutive time periods on a subset of $n-l$ units, the block missing part of $X_{sol}$ corresponds to the so-called \emph{counterfactuals} (values of the units had the treatment not been applied). The approximate recovery of the counterfactual part of $X_{sol}$ is then the ultimate block C-inf's goal. The optimization problem in (\ref{eq:genmcl0posmmt}) is posed so that its optimal solution hopefully closely matches $X_{sol}$.  However, solving (\ref{eq:genmcl0posmmt}) in an idealized scenario is NP hard and in an non-idealized one even uniquely impossible. As is a common practice in matrix completion literature, we consider $\ell_1^*$ heuristic as the tightest convex relaxation of $\ell_0^*$ (which, in the idealized scenario, is known to often be successful) hoping that its solution, $\hat{X}$, also closely matches $X_{sol}$. Under statistical assumptions on the origin of $X_{sol}$, Theorem \ref{thm:thm3} effectively provides the exact relation among three key system and algorithmic parameters: (\textbf{\emph{i}}) data heterogeneity ($\text{rank}(X_{sol})$ or in a scaled form $\beta=\lim_{n\rightarrow\infty}\frac{\text{rank}(X_{sol})}{n}=\lim_{n\rightarrow\infty}\frac{k}{n}$); (\textbf{\emph{ii}}) the number of the untreated units and the treatment starting time ($l$ or in a scaled form $\eta=\lim_{n\rightarrow\infty}\frac{l}{n}$); (\textbf{\emph{iii}}) the absolute spectral deviation from the ideal low rankness ($\sigma_{\epsilon}$); and (\textbf{\emph{iv}}) the residual $\textbf{RMSE}$ error, $\|\hat{X}-X_{sol}\|_F$.

As particularly relevant, we also emphasize the following two observed features of the obtained theoretical results: (\textbf{\emph{i}}) (\textbf{\emph{Simplicity}}) Both the phase transition characterization from Theorem \ref{thm:typwcthm1} and the corresponding residual $\textbf{RMSE}$ one from Theorem \ref{thm:thm3} are remarkably simple, single line functional equations where the key system parameters $(\beta,\eta,\sigma_{\epsilon})$ explicitly appear. (\textbf{\emph{ii}}) (\textbf{\emph{Generality}}) The characterization from Theorem \ref{thm:thm3} exhibits a particularly generic behavior. Namely, the residual error is characterized in the worst case sense, ensuring that no matter what the relation between $\sigma$ and $\sigma_{\epsilon}$ is, the error remains bounded by the quantity given in the theorem. In other words, no matter what the level of relative deviation from the ideal low rankness is, the error is always bounded by the quantity given in the theorem. Moreover, for $\sigma \gg \sigma_{\epsilon}$, the bound is achievable, which implies that better (more generic) results cannot be proven.

%(\textbf{\emph{iii}}) (\textbf{\emph{Universality}})  While we imposed a statistical assumption (in the form of the Haar distribution) to analyze the validity of certain deterministic conditions, our experiments (shown below) indicate that the proven behavior might be fairly universal. A similar phenomenon is known to hold in compressed sensing and various similar sparse recovery related considerations (see, e.g., \cite{DonTan09Univ}).

\vspace{ -.0in}

%%%%%%%%%%%%%%%%%%%%%%%%%%%%%%%%%%%%%%%%%%%%%%%%%%%%%%%%%%%%%%%%%
\section{Numerical results}
\label{sec:cpmcnumres}
%%%%%%%%%%%%%%%%%%%%%%%%%%%%%%%%%%%%%%%%%%%%%%%%%%%%%%%%%%%%%%%%%

\vspace{ -.0in}

To complement the above theoretical findings, we conduct a set of numerical experiments in several different relevant regimes. %Since we are not the inventors of the nuclear norm heuristic, it would not be appropriate to demonstrate its superiority over other algorithmic alternatives. We therefore, make sure to
Our focus is on comparing our theoretical $\ell_1^*$ performance analysis results with the corresponding $\ell_1^*$ numerical estimates.

\textbf{\emph{1) Worst case regime:}} As the theory suggests, somewhat counter-intuitively, the worst case $\textbf{RMSE}$ estimates are achieved for matrices with a large gap between dominant singular values and the negligible ones. To emulate such a regime we set $\sigma_\epsilon=1$ and magnitude of each of the large $\sigma$'s components to be 50. We find the ratio 50:1 as sufficiently large to emulate the theoretically needed infinite one. The obtained results are shown in Table \ref{tab:tab2}. It is easily observed that there is a close agreement between theoretical and simulated results across the range of allowed $\beta$. One additionally notes that for $\beta=0.2$ a slight deviation from what the theory predicts can be seen. The reason, of course, is the fact that at $\beta=0.2$, one is exactly on the phase transition and above such a $\beta$ the residual error can even be infinite.

\begin{table}[h]
  \caption{Block missingness matrix completion via nuclear norm minimization -- \emph{worst case} $\textbf{RMSE}$; $\eta=0.9$}
  \label{tab:tab2}
  \centering
  \begin{tabular}{ccccc}
    \hline
  \textbf{Method}  & \multicolumn{4}{c}{$\beta$}                   \\
    \cline{2-5}
    (to estimate $\textbf{RMSE}$)     & $\mathbf{0.05}$     & $\mathbf{0.1}$  & $\mathbf{0.15}$ & $\mathbf{0.2}$ \\
    \hline
 $ $ \hspace{.2in} \bl{\textbf{theoretical} $\frac{n\mE\|\hat{X}-X_{sol}\|_F}{\sigma_\epsilon\sqrt{n-k}(n-l)}$}    \hspace{.2in} $ $  & \bl{$\mathbf{3.446}$}  & \bl{$\mathbf{3.755}$} & \bl{$\mathbf{4.161}$}   & \bl{$\mathbf{4.617}$}    \\
    \hline
$ $ \hspace{.2in}   \prp{\textbf{simulated} $\frac{n\mE\|\hat{X}-X_{sol}\|_F}{\sigma_\epsilon\sqrt{n-k}(n-l)}$}  \hspace{.2in} $ $ & \prp{$\mathbf{3.46}$} & \prp{$\mathbf{3.82}$}  & \prp{$\mathbf{4.24}$} & \prp{$\mathbf{5.00}$}     \\
    \hline
$ $ \hspace{.2in}   \green{\textbf{asymmetric: \hspace{.1in} simulated} $\frac{n\mE\|\hat{X}-X_{sol}\|_F}{\sigma_\epsilon\sqrt{n-k}(n-l)}$}  \hspace{.2in} $ $ & \green{$\mathbf{2.33}$} & \green{$\mathbf{2.66}$}  & \green{$\mathbf{3.18}$} & \green{$\mathbf{3.73}$}     \\
      \hline
  \end{tabular}
\end{table}

\textbf{\emph{2) Asymmetric case:}} Differently from $U=V$ which achieves the worst case but is highly unlikely to be observed in real data, we consider its an asymmetric anti-pod, where not only aren't $U$ and $V$ equal to each other, they are actually completely independent of each other. This scenario is expected to more faithfully capture real data. In the absence of any other \emph{a priori} available knowledge about the underlying parts of the model, such a scenario might in fact be the most natural one to assume. We show the results for this scenario with identical system parameters as in the worst case. As the results in Table \ref{tab:tab2} show, the worst case results are indeed upper bounds on the ones obtained in the asymmetric scenario.

\textbf{\emph{3) Varying $\epsilon_\sigma$:}} For the worst case scenario, we set the magnitude of $\epsilon_\sigma$ equal to $\sigma_\epsilon=1$. We also simulated the scenario where the magnitudes are standard normals and uniformly random while we maintained the overall norm of $\epsilon_\sigma=\sqrt{n-k}$ to ensure fairness in comparison with the worst case scenario. The results are shown in Table \ref{tab:tab3}. As it can be seen from the table, the results are dominated by the worst case ones as expected by the theoretical predictions. One should though observe that the results obtained under the Gaussian assumption are actually fairly close to the flat ones (where all the magnitudes of $\epsilon_\sigma$ are equal to $\sigma_\epsilon=1$).

\begin{table}[h]
  \caption{Block missingness matrix completion via nuclear norm minimization -- \emph{worst case} $\textbf{RMSE}$; $\eta=0.9$; different statistics of $\epsilon_\sigma$}
  \label{tab:tab3}
  \centering
  \begin{tabular}{ccccc}
    \hline
  \textbf{Method}  & \multicolumn{4}{c}{$\beta$}                   \\
    \cline{2-5}
    (to estimate $\textbf{RMSE}$)     & $\mathbf{0.05}$     & $\mathbf{0.1}$  & $\mathbf{0.15}$ & $\mathbf{0.2}$ \\
    \hline
 $ $ \hspace{.2in} \bl{\textbf{theoretical} $\frac{n\mE\|\hat{X}-X_{sol}\|_F}{\sigma_\epsilon\sqrt{n-k}(n-l)}$}    \hspace{.2in} $ $  & \bl{$\mathbf{3.446}$}  & \bl{$\mathbf{3.755}$} & \bl{$\mathbf{4.161}$}   & \bl{$\mathbf{4.617}$}    \\
    \hline
$ $ \hspace{.2in}   \prp{\textbf{Flat $\epsilon_\sigma$: \hspace{.1in} simulated} $\frac{n\mE\|\hat{X}-X_{sol}\|_F}{\sigma_\epsilon\sqrt{n-k}(n-l)}$}  \hspace{.2in} $ $ & \prp{$\mathbf{3.46}$} & \prp{$\mathbf{3.82}$}  & \prp{$\mathbf{4.24}$} & \prp{$\mathbf{5.00}$}     \\
    \hline
$ $ \hspace{.2in}   \green{\textbf{Gaussian $\epsilon_\sigma$: \hspace{.1in} simulated} $\frac{n\mE\|\hat{X}-X_{sol}\|_F}{\sigma_\epsilon\sqrt{n-k}(n-l)}$}  \hspace{.2in} $ $ & \green{$\mathbf{3.40}$} & \green{$\mathbf{3.54}$}  & \green{$\mathbf{3.92}$} & \green{$\mathbf{5.00}$}     \\
    \hline
$ $ \hspace{.2in}   \green{\textbf{Uniform $\epsilon_\sigma$: \hspace{.1in} simulated} $\frac{n\mE\|\hat{X}-X_{sol}\|_F}{\sigma_\epsilon\sqrt{n-k}(n-l)}$}  \hspace{.2in} $ $ & \green{$\mathbf{3.15}$} & \green{$\mathbf{3.44}$}  & \green{$\mathbf{3.83}$} & \green{$\mathbf{5.03}$}     \\
      \hline
  \end{tabular}
\end{table}

\textbf{\emph{4) Varying magnitudes of $\sigma$:}} As discussed above, to achieve the worst case behavior, one needs to choose a rather large magnitude of dominating singular values $\sigma$. A natural question is then, what happens when the magnitudes of $\sigma$, $\sigma_{mag}$, are not necessarily as large. We, in Table \ref{tab:tab4}, show the obtained results as magnitudes of $\sigma$ change (to ensure a fair comparison, we continue to keep, as earlier,  $\epsilon_\sigma$ equal to $\sigma_\epsilon=1$). As it can be seen from the table, the results are converging in a rather fast manner. The ratios $\frac{\sigma_{mag}}{\sigma_\epsilon}=\frac{50}{1}$ that we, out of precaution, chose above is not even needed. The worst case behavior is fairly closely approached, already for ratios around $6:1$.

\begin{table}[h]
  \caption{Block missingness matrix completion via nuclear norm minimization; $\eta=0.9$; $\textbf{RMSE}$ as a function of $\frac{\sigma_{mag}}{\sigma_\epsilon}$}
  \label{tab:tab4}
  \centering
  \begin{tabular}{ccccccc}
    \hline
  \textbf{Method}  & \multicolumn{5}{c}{$\frac{\sigma_{mag}}{\sigma_\epsilon}$}  & \bl{$\frac{\sigma_{mag}}{\sigma_\epsilon}$ (theory)}                 \\
    \cline{2-6}
    (to estimate $\textbf{RMSE}$)     & $\mathbf{1}$  & $\mathbf{2}$ & $\mathbf{3} $  & $\mathbf{4}$     & $\mathbf{6}$  &  \bl{$\mathbf{\infty}$} \\
     \hline
   \prp{\textbf{Flat $\epsilon_\sigma$: \hspace{.1in} simulated} $\frac{n\mE\|\hat{X}-X_{sol}\|_F}{\sigma_\epsilon\sqrt{n-k}(n-l)}$}  \hspace{.1in} $ $
& \prp{$\mathbf{3.43}$} & \prp{$\mathbf{3.71}$}  & \prp{$\mathbf{3.87}$} & \prp{$\mathbf{3.93}$} & \prp{$\mathbf{4.01}$} & \bl{$\mathbf{4.16}$}    \\
       \hline
  \end{tabular}
\end{table}

We choose $n=40$, as a rather small matrix dimension. Despite the fact that the  proofs are obtained assuming large dimensional settings, we observe a strong agreement between theoretical and simulated results even for fairly small sizes. Also, to show an agreement with the theoretical results, we, consider the so-called \emph{typical} behavior with the singular values of the unknown targeted  matrices equal to one. Choices of this type are known to serve as the worst case examples in establishing the reversal $\ell_0^*-\ell_1^*$-equivalence conditions in idealized scenarios (for completeness, we also ran simulations where the singular values were randomly chosen and the obtained results were either identically matching or improving upon the ones shown in Table \ref{tab:tab2}).

\vspace{ -.0in}

%%%%%%%%%%%%%%%%%%%%%%%%%%%%%%%%%%%%%%%%%%%%%%%%%%%%%%%%%%%%%%%%%%%%%%%%%%%%%%%%%%
%%%%%%%%%%%%%%%%%%%%%%%%%%%%%%%%%%%%%%%%%%%%%%%%%%%%%%%%%%%%%%%%%%%%%%%%%%%%%%%%%%
\section{Conclusion}
\label{sec:conc}
%%%%%%%%%%%%%%%%%%%%%%%%%%%%%%%%%%%%%%%%%%%%%%%%%%%%%%%%%%%%%%%%%%%%%%%%%%%%%%%%%%
%%%%%%%%%%%%%%%%%%%%%%%%%%%%%%%%%%%%%%%%%%%%%%%%%%%%%%%%%%%%%%%%%%%%%%%%%%%%%%%%%%

\vspace{ -.1in}

In this paper, we studied the nuclear norm minimization heuristic and theoretically analyzed its performance for completing matrices with \emph{missing not at random} (MNAR) entries. We place special emphasis on \emph{approximately low rank} matrices  and on the \emph{block missing patterns} encountered in the so-called irreversible simultaneous treatment scenarios often seen in various health and social science applications. In such scenarios, an exact recovery of the incomplete matrices is not attainable and the residual errors are present as well. To assess the quality of the underlying nuclear norm based algorithms, we analytically characterized the residual errors.

Relying on a typical statistical view of the problem and utilizing random matrix theory  related to free probability theory (FPT), we first introduced a generic performance analysis framework. We then conducted a concrete, statistical phase transition type of analysis, and obtained \emph{explicit exact worst case} estimates for the residual $\textbf{RMSE}$ which hold below the so-called (worst case) phase-transition line. The resulting characterizations are simple, single line functional equations where all three key system parameters --- (\textbf{\emph{i}}) the data heterogeneity ($\beta$), (\textbf{\emph{ii}}) the size of the missing blocks ($\eta$), and (\textbf{\emph{iii}}) the magnitude of the spectral deviations from the ideal low rankness ($\sigma_{\epsilon}$)--- are clearly visible.

To complement our theoretical findings, we conducted several numerical experiments as well. The following are the main takeaways: \textbf{\emph{(i)}} A very strong agreement between the theoretical predictions and numerical simulations is observed. \textbf{\emph{(ii)}} Despite the fact that our theoretical predictions are obtained in a large dimensional asymptotic regime, we observe a strong match between theory and simulations, even for rather small problems with dimensions on the order of few tens.

\newpage

%%%%

%%%%%%%%%%%%%%%%%%%%%%%%%%%%%%%%%%%%%%%%%%%%%%
%% Please use \tableofcontents for articles %%
%% with 50 pages and more                   %%
%%%%%%%%%%%%%%%%%%%%%%%%%%%%%%%%%%%%%%%%%%%%%%
%\tableofcontents

%%%%%%%%%%%%%%%%%%%%%%%%%%%%%%%%%%%%%%%%%%%%%%
%%%% Main text entry area:

%\begin{abstract}
%
%This is the supplement to the paper ``Recovery of Approximately Low Rank Matrices and Applications to Block Causal Inference''. It contains proofs of technical results stated in the paper.
% \end{abstract}

\appendix
\paragraph{Appendix}
	This appendix contains proofs of technical results stated in the main body of the paper. We also demonstrate that our analysis can handle data datasets encountered in real applications, such as the  California tobacco data, widely used in many studies (e.g. \cite{ADHsynth10}). Our analysis, particular examining the spectrum of this data set, reveals that its approximately low rank structure is directly attributable to the data's inherent structure.

\section{Proof of Theorem~\ref{thm:thm1}}

The theorem follows from the proofs of  much stronger fundamental propositions and corollaries that relate to a more generic concept of the low rank recovery (LRR). In such a concept, the fairly particular linear constraints of ~\eqref{eq:genmcl0posmmt} and~\eqref{eq:genmcl1posmmt} are replaced by a more generic $\y=\vecw(Y)=A\vecw(X)$, where $A\in\mR^{m\times n^2}$ is a generic measurement matrix.

\begin{proposition}\label{prop:cauinf}(\textbf{\bl{$\ell_0^*-\ell_1^*$-equivalence condition (LRR)}} -- \textbf{symmetric}  $X$)
Consider a $\bU\in\mR^{n\times k}$ such that $\bU^T\bU=I_{k\times k}$ and a $\rankw-k$ \textbf{a priori known to be} symmetric matrix $X_{sol}=X\in\mR^{n\times n}$  with all of its columns belonging to the span of $\bU$. For concreteness, and without loss of generality, assume that $X$ has only positive nonzero eigenvalues. For a given matrix $A\in\mR^{m\times n^2}$ ($m\leq n^2$) assume that $\y=A\vecw(X)\in \mR^m$. If
\begin{equation}
(\forall W\in \mR^{n\times n} | A\vecw(W)=\0_{m\times 1},W=W^T\neq \0_{n\times n}) \quad  -\tr(\bU^TW\bU)< \ell_1^*((\bU^{\perp})^TW\bU^{\perp}),
\label{eq:genmcposthmcond1}
\end{equation}
then the solutions of~\eqref{eq:genmcl0posmmt} and~\eqref{eq:genmcl1posmmt} coincide. Moreover, if
\begin{equation}
(\exists  W\in \mR^{n\times n} | A\vecw(W)=\0_{m\times 1},W=W^T\neq \0_{n\times n}) \quad  -\tr(\bU^TW\bU)\geq \ell_1^*((\bU^{\perp})^TW\bU^{\perp}),
\label{eq:genmcposthmcond2}
\end{equation}
then there is an $X$ from the above set of the symmetric matrices with columns belonging to the span of $\bU$  such that the solutions of~\eqref{eq:genmcl0posmmt}) and~\eqref{eq:genmcl1posmmt} are different.
\label{thm:genmcthmregposcond}
\end{proposition}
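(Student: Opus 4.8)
The plan is to prove both implications through a perturbation analysis around the matrix $X=\bU B\bU^T$, where $B\triangleq\bU^TX\bU\in\mR^{k\times k}$ is symmetric positive definite --- the only form that a symmetric $\rankw$-$k$ matrix with columns in the span of $\bU$ and positive eigenvalues can take (by symmetry and the column-span hypothesis, $X=\bU\bU^TX\bU\bU^T$, and the positive-eigenvalue assumption forces $B\succ0$). Because $X$ is \emph{a priori} known to be symmetric, both~\eqref{eq:genmcl0posmmt} and~\eqref{eq:genmcl1posmmt} range over symmetric matrices, so every competitor of the relaxation has the form $\hat X=X+W$ with $W=W^T$ and $A\vecw(W)=\0_{m\times1}$. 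Writing the blocks $W_{11}\triangleq\bU^TW\bU$ and $W_{22}\triangleq(\bU^{\perp})^TW\bU^{\perp}$ (both symmetric), the central object is the one-sided directional derivative of $\ell_1^*$ at $X$, which for this positive-definite case equals
\[
D_W\ell_1^*(X)=\max_{G}\langle G,W\rangle=\tr(W_{11})+\ell_1^*(W_{22}),
\]
the maximization being over the subdifferential $G=\bU\bU^T+\bU^{\perp}S(\bU^{\perp})^T$ with $S=S^T$, $\lambda_{max}(S^2)\le1$, attained by matching $S$ to the sign pattern of the eigenvalues of $W_{22}$.

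For the \emph{``if'' part}, I would fix any feasible $W\neq\0_{n\times n}$ and exploit convexity of $t\mapsto\ell_1^*(X+tW)$: its chord slope on $[0,1]$ dominates the right derivative at $0$, so
\[
\ell_1^*(X+W)\ \ge\ \ell_1^*(X)+\tr(\bU^TW\bU)+\ell_1^*((\bU^{\perp})^TW\bU^{\perp}).
\]
Hypothesis~\eqref{eq:genmcposthmcond1} asserts precisely that the last two terms sum to a strictly positive number for every admissible $W\neq\0_{n\times n}$, whence $\ell_1^*(X+W)>\ell_1^*(X)$. Thus $X$ is the unique minimizer of the relaxation, so the $\ell_1^*$ solution coincides with the ground-truth ($\ell_0^*$-optimal) matrix $X$.

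For the \emph{``only if'' part}, I would convert the offending $W$ of~\eqref{eq:genmcposthmcond2} into an explicit failure instance by taking $X=\bU(RI_{k\times k})\bU^T$ and letting $R\to\infty$ (this $X$ is symmetric, of rank $k$, with columns in the span of $\bU$ and positive eigenvalues, hence a legitimate member of the stated set). Conjugating $X+W$ by the orthogonal matrix $\begin{bmatrix}\bU&\bU^{\perp}\end{bmatrix}$ leaves the spectrum invariant; for large $R$ the block $RI_{k\times k}+W_{11}$ is strongly dominant, and a Weyl/Lidskii eigenvalue estimate (with the Schur complement $W_{22}-W_{12}^T(RI_{k\times k}+W_{11})^{-1}W_{12}$ governing the $n-k$ small eigenvalues) yields
\[
\ell_1^*(X+W)=\tr(RI_{k\times k})+\tr(W_{11})+\ell_1^*(W_{22})+o(1),\qquad \ell_1^*(X)=\tr(RI_{k\times k}).
\]
Condition~\eqref{eq:genmcposthmcond2}, i.e. $-\tr(W_{11})\ge\ell_1^*(W_{22})$, then forces $\ell_1^*(X+W)\le\ell_1^*(X)+o(1)$. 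Since $X+W$ is feasible and distinct from $X$, this exhibits a competitor of no larger objective, so the relaxation does not single out $X$ and the two solutions differ.

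The step I expect to be the main obstacle is the quantitative control in the last display: one must verify that the off-diagonal coupling $W_{12}$ shifts the nuclear norm only at order $O(1/R)$, so that the leading comparison is genuinely $\tr(W_{11})+\ell_1^*(W_{22})$ against $0$. The delicate boundary is the equality case $-\tr(W_{11})=\ell_1^*(W_{22})$, where the first-order comparison is a tie; there one does not obtain a strictly cheaper point but argues non-uniqueness directly (the construction produces a competing minimizer in the limit, so $X$ is not pinned down), which is exactly why the reverse condition is phrased with ``$\ge$''.
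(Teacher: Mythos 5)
Your ``if'' direction is correct and is, in substance, the paper's own argument dressed in subdifferential language: the inequality $\ell_1^*(X+W)\ge \ell_1^*(X)+\tr(W_{11})+\ell_1^*(W_{22})$ that you obtain from convexity and the certificate $G=\bU\bU^T+\bU^{\perp}S(\bU^{\perp})^T$ is exactly what the paper derives by restricting the variational characterization $\ell_1^*(\cdot)=\max_{\Lambda_*}\tr(\Lambda_*\,\cdot\,)$ to block-diagonal $\Lambda_*$ and then replacing $\ell_1^*$ by $\tr$ on the dominant block. Your ``only if'' direction also follows the paper's strategy --- take $X=\bU\Lambda_x\bU^T$ with arbitrarily large spectrum so that it ``overwhelms'' $W$ --- but controls the off-diagonal coupling $W_{12}$ by eigenvalue perturbation rather than by the paper's dual argument. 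This works whenever the inequality in \eqref{eq:genmcposthmcond2} is strict, since the fixed negative first-order term then beats the $O(1/R)$ error.

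The genuine gap is the equality case $-\tr(W_{11})=\ell_1^*(W_{22})$ with $W_{12}\neq \0_{k\times(n-k)}$, which you flag but do not close. Your own expansion shows why the proposed fix fails: the $k$ large eigenvalues of the conjugated matrix sum to $Rk+\tr(W_{11})+\tfrac{1}{R}\|W_{12}\|_F^2+O(R^{-2})$, while the small ones are those of $W_{22}-\tfrac{1}{R}W_{12}^TW_{12}+O(R^{-2})$; by the triangle inequality $\ell_1^*\lp W_{22}-\tfrac{1}{R}W_{12}^TW_{12}\rp\ge \ell_1^*(W_{22})-\tfrac{1}{R}\|W_{12}\|_F^2$, so $\ell_1^*(X+W)\ge \ell_1^*(X)+O(R^{-2})$, and the order-$1/R$ correction is in general \emph{strictly positive} (for instance, if $W_{22}\preceq 0$ it equals $+\tfrac{2}{R}\|W_{12}\|_F^2$). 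Hence for every finite $R$ the competitor $X+W$ is strictly worse than $X$, ``passing to the limit'' produces no admissible instance, and rescaling $W$ cannot restore strictness (for $t>0$ both terms scale linearly, and $t<0$ flips the trace term the wrong way). The paper closes this case by a different device: it argues that when the optimal block-diagonal dual certificate has identity upper block, the restricted and unrestricted dual maximizations coincide, so that $\ell_1^*(X+W)$ equals the block-diagonal lower bound exactly (deferring to auxiliary lemmas); to complete your route you would need either that duality step or a genuinely new argument (e.g., perturbing $W$ inside the null space of $A$, or choosing $\Lambda_x$ not proportional to the identity) to recover a usable inequality in the boundary case.
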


\begin{proof}
The proof can be obtained by a step-by-step adaptation of Theorem 2 in \cite{StojnicICASSP09} to matrices or their singular/eigenvalues. %For experts in the field this adaptation is highly likely to be viewed as trivial and certainly doesn't need to be as detailed as we will make it to be. Nonetheless, to ensure a perfect clarity of all arguments we provide a step-by-step instructional derivation.
For concreteness and without loss of generality we also assume that the eigen-decomposition of $X$ is

\begin{eqnarray}\label{eq:genmcrec1}
X=U\Lambda U^T=\begin{bmatrix}
	\bU & \bU^{\perp}
\end{bmatrix}
\begin{bmatrix}
	\bar{\Lambda}_X & \0_{k\times (n-k)} \\
	\0_{(n-k)\times k} & \bar{\Lambda}_X^{\perp}
\end{bmatrix}
\begin{bmatrix}
	\bU & \bU^{\perp}
\end{bmatrix}^T.
\end{eqnarray}

\bl{ \underline{\textbf{(i) $\Longrightarrow$ (the if part):}}} Following step-by-step the proof of Theorem $2$ in \cite{StojnicICASSP09}, we start by  assuming that  $\hat{X}$ is the solution of~\eqref{eq:genmcl1posmmt}. Then we want to show that if~ \eqref{eq:genmcposthmcond1} holds then $\hat{X}=X$. As usual, we instead of that, assume opposite, i.e. we assume that (\ref{eq:genmcposthmcond1}) holds but $\hat{X}\neq X$. Then since $\y=A\vecw(\hat{X})$ and $\y=A\vecw(X)$ must hold simultaneously there must exist $W$ such that $\hat{X} =X+W$ with $W\neq 0$, $A\vecw(W)=0$. Moreover, since $\hat{X}$ is the solution of~\eqref{eq:genmcl1posmmt} one must also have
\begin{eqnarray}
\begin{array}{r r r l@{\ }}
	& \ell_1^*(X+W) = \ell_1^*(\hat{X}) & \leq & \ell_1^*(X) \\
	\Longleftrightarrow \hspace{.3in} $ $ & \ell_1^*(\begin{bmatrix}
		\bU & \bU^{\perp}
	\end{bmatrix}^T(X+W)\begin{bmatrix}
		\bU & \bU^{\perp}
	\end{bmatrix}) & \leq & \ell_1^*(X) \\
	\Longrightarrow \hspace{.3in} $ $ &\ell_1^*(\bU^T (X+W)\bU)+\ell_1^*((\bU^{\perp})^T (X+W)\bU^{\perp}) & \leq & \ell_1^*(X).
\end{array}\nonumber \\
\label{eq:genmcproof1}
\end{eqnarray}
The last implication follows after one trivially notes
\begin{eqnarray}
\ell_1^*(\begin{bmatrix}
	\bU & \bU^{\perp}
\end{bmatrix}^T(X+W)\begin{bmatrix}
	\bU & \bU^{\perp}
\end{bmatrix}) & = & \max_{\Lambda_*=\Lambda_*^T\in {\cal L}_*}
\tr(\Lambda_*
\begin{bmatrix}
	\bU & \bU^{\perp}
\end{bmatrix}^T(X+W)\begin{bmatrix}
	\bU & \bU^{\perp}
\end{bmatrix})\nonumber \\
& \geq & \max_{\Lambda_*=\Lambda_*^T\in {\cal L}_{*}^0}
\tr(\Lambda_*
\begin{bmatrix}
	\bU & \bU^{\perp}
\end{bmatrix}^T(X+W)\begin{bmatrix}
	\bU & \bU^{\perp}
\end{bmatrix})\nonumber \\
& = & \ell_1^*(\bU^T (X+W)\bU)+\ell_1^*((\bU^{\perp})^T (X+W)\bU^{\perp}),\label{eq:genmcproof1a}
\end{eqnarray}
where
\begin{eqnarray}
{\cal L}_{*}^0 &\triangleq& \left \{\Lambda_*\in\mR^{n\times n} | \Lambda_*=\Lambda_*^T,\Lambda_*\Lambda_*^T\leq I, \Lambda_*=\begin{bmatrix}
	\Lambda_{*,1} & 0_{k\times (n-k)} \\
	0_{(n-k)\times k} & \Lambda_{*,2}
\end{bmatrix} \right \} \nonumber \\
&\subseteq& \left \{\Lambda_*\in\mR^{n\times n} | \Lambda_*=\Lambda_*^T,\Lambda_*\Lambda_*^T\leq I\right \} \triangleq   {\cal L}_{*}.\label{eq:genmcproof1b}
\end{eqnarray}

%\begin{center}
\tcbset{beamer,lower separated=false, fonttitle=\bfseries,width=3.4in, coltext=white,
colback=yellow!70!orange!40!white,title style={left color=cyan!40!black!80!purple, right color=red!60!yellow!40!orange!80!white},
width=(\linewidth-4pt)/4, equal height group=AT,before=,after=\hfill,fonttitle=\bfseries}
%\end{center}
%\begin{center}
\tcbset{colback=red!25!white!70!green!15!yellow,colframe=red!95!white,width=(\linewidth-4pt)/4,
equal height group=AT,before=,after=\hfill,fonttitle=\bfseries,
interior style={left color=cyan!40!black!80!purple, right color=red!60!yellow!40!orange!80!white}}
%\tcbset{colback=red!25!white!70!green!15!yellow,colframe=red!95!white,width=(\linewidth-4pt)/4,
%equal height group=AT,before=,after=\hfill,fonttitle=\bfseries}
%	\noindent\begin{tcolorbox}[width=4.3in, height=.25in]
%		\vspace{-.27in}
%		\begin{equation*}
	%			\hspace{-.0in}\mbox{\textbf{The key observation -- \bl{``\emph{Removing the absolute values}"}:}}
	%			\vspace{-0in}
	%		\end{equation*}
%\vspace{.2in}
%	\end{tcolorbox}
%	%\end{center}

%	\vspace{.1in}

%	\noindent
Next, the key observation is to note that the absolute values can be removed in the nonzero part and that the $\ell_1^*(\cdot)$ can be ``\emph{replaced}" by $\tr(\cdot)$.  Such a simple observation is the most fundamental reason for all the success of the \bl{\textbf{RDT}} when used for the \textbf{exact} performance characterization of the structured objects' recovery. From (\ref{eq:genmcproof1}) we then have
\begin{eqnarray}
\begin{array}{r r r l@{\ }}
	& \ell_1^*(\bU^T (X+W)\bU)+\ell_1^*((\bU^{\perp})^T (X+W)\bU^{\perp}) & \leq & \ell_1^*(X)\\
	\Longrightarrow   \hspace{.3in} $ $ & \tr(\bU^T (X+W)\bU)+\ell_1^*((\bU^{\perp})^T (W)\bU^{\perp}) & \leq & \ell_1^*(X)\\
	\Longleftrightarrow   \hspace{.3in} $ $ & \tr(\bU^T W\bU)+\ell_1^*((\bU^{\perp})^T W\bU^{\perp}) & \leq & 0.
\end{array}\label{eq:genmcproof2}
\end{eqnarray}
We have arrived at a contradiction as the last inequality in (\ref{eq:genmcproof2}) is exactly the opposite of (\ref{eq:genmcposthmcond1}). This implies that our initial assumption $\hat{X}\neq X$ cannot hold and we therefore must have $\hat{X}=X$. This is precisely the claim of the first part of the theorem.

\bl{ \underline{ \textbf{ (ii) $\Longleftarrow$ (the only if part):}}}  We now assume that (\ref{eq:genmcposthmcond2}) holds, i.e.
\begin{equation}
(\exists  W\in \mR^{n\times n} | A\vecw(W)=\0_{m\times 1},W\neq \0_{n\times n}) \quad  -\tr((\bU)^TW\bU)\geq \ell_1^*((\bU^{\perp})^TW\bU^{\perp})\label{eq:genmcproof3}
\end{equation}
and would like to show that for such a $W$ there is a symmetric rank-$k$ matrix $X$ with the columns belonging to the span of $\bU$ such that $\y=A\vecw(X)$,  and the following holds
\begin{equation}
\ell_1^*(X+W)<\ell_1^*(X).\label{eq:genmcproof4}
\end{equation}

Existence of such an $X$ would ensure that it both, satisfies all the constraints in (\ref{eq:genmcl1posmmt})  and is not the solution of (\ref{eq:genmcl1posmmt}) . One can reverse all the above steps from (\ref{eq:genmcproof3}) to (\ref{eq:genmcproof1}) with strict inequalities and arrive at the first inequality in (\ref{eq:genmcproof1}) which is exactly (\ref{eq:genmcproof4}). There are two implications that cause problems in such a reversal process, the one in (\ref{eq:genmcproof3}) and the one in (\ref{eq:genmcproof1}). If these implications were equivalences everything would be fine. We address these two implications separately.

\underline{1) \textbf{the implication in (\ref{eq:genmcproof2}) -- particular $X$ to ``overwhelm" $W$:}} Assume $X=\bU\Lambda_x\bU^T$ with $\Lambda_x>0$ being a diagonal matrix with arbitrarily large elements on the main diagonal (here it is sufficient even to choose diagonal of $\Lambda_x$ so that its smallest element is larger than the maximum eigenvalue of $\bU^TW\bU$). Now one, of course, sees the main idea behind the ``removing the absolute values" concept from \cite{StojnicICASSP09}. Namely, for such an $X$ one has that $\ell_1^*(\bU^TX+W)\bU)=tr(\ell_1^*(\bU^TX+W)\bU))$ since for symmetric matrices  the $\ell_1^*(\cdot)$ (as the sum of the argument's \emph{absolute} eigenvalues) and $\tr(\cdot)$ (as the sum of the argument's eigenvalues) are equal. That basically means that when going backwards the second inequality in (\ref{eq:genmcproof2}) not only follows from the first one but also implies it as well. In other words, for $X=\bU\Lambda_x\bU^T$ (with $\Lambda_x>0$ and arbitrarily large)
\begin{eqnarray}
\begin{array}{r r r l@{\ }}
	& \tr(\bU^T W\bU)+\ell_1^*((\bU^{\perp})^T W\bU^{\perp}) & \leq & 0 \\
	\Longleftrightarrow  \hspace{.3in}  $ $ & \tr(\bU^T (X+W)\bU^)+\ell_1^*((\bU^{\perp})^T (W)\bU^{\perp}) & \leq & \ell_1^*(X)\\
	\Longleftrightarrow \hspace{.3in} $ $ & \ell_1^*(\bU^T (X+W)\bU)+\ell_1^*((\bU^{\perp})^T (X+W)\bU^{\perp}) & \leq & \ell_1^*(X),
\end{array}\label{eq:genmcproof5}
\end{eqnarray}
which basically mans that there is an $X$ that can ``overwhelm" $W$ (in the span of $\bU$) and ensures that the \bl{``\textbf{\emph{removing the absolute values}}"} is not only a \textbf{\emph{sufficient}} but also a \textbf{\emph{necessary}} concept for creating the relaxation  equivalence condition.

\underline{2) \textbf{the implication in (\ref{eq:genmcproof1}):}} One would now need to somehow show that the third inequality in (\ref{eq:genmcproof1}) not only follows from the second one but also implies it as well. This boils down to showing that inequality in (\ref{eq:genmcproof1a}) can be replaced with an equality or, alternatively, that ${\cal L}^0$ and ${\cal L}$ are provisionally equivalent. Neither of these statements is generically true. However, since we have a set of $X$ at our disposal there might be an $X$ for which they actually hold. We continue to assume $X=\bU\Lambda_x\bU^T$ with $\Lambda_x>0$ being a diagonal matrix with arbitrarily large entries on the main diagonal. Then the last equality in (\ref{eq:genmcproof1a}) gives
\begin{eqnarray}
\begin{array}{r r r l@{\ }}
	$ $ & \ell_1^*(\bU^T (X+W)\bU)+\ell_1^*((\bU^{\perp})^T (X+W)\bU^{\perp}) & \leq & \ell_1^*(X) \\
	\Longleftrightarrow \hspace{.3in} $ $ & \max_{\Lambda_*=\Lambda_*^T\in {\cal L}_{*}^0}
	\tr(\Lambda_*
	\begin{bmatrix}
		\bU & \bU^{\perp}
	\end{bmatrix}^T(X+W)\begin{bmatrix}
		\bU & \bU^{\perp}
	\end{bmatrix}) & \leq & \ell_1^*(X).
\end{array}\label{eq:genmcproof6}
\end{eqnarray}
Also, one has
\begin{eqnarray}
\begin{array}{r r r l@{\ }}
	& \hspace{.3in} \max_{\Lambda_*=\Lambda_*^T\in {\cal L}_{*}^0}
	\tr(\Lambda_*
	\begin{bmatrix}
		\bU & \bU^{\perp}
	\end{bmatrix}^T(X+W)\begin{bmatrix}
		\bU & \bU^{\perp}
	\end{bmatrix}) & \leq & \ell_1^*(X)\\
	\Longleftrightarrow \hspace{.0in} $ $ & \max_{\Lambda_{*,i}=\Lambda_{*,i}^T,\Lambda_{*,i}\Lambda_{*,i}^T\leq I,i\in\{1,2\}}
	\tr(\Lambda_{*,1}\bU^T X\bU +\Lambda_{*,2}(\bU^{\perp})^T W\bU^{\perp}) & \leq & \ell_1^*(X) \\
	\Longleftrightarrow \hspace{.0in} $ $ & \max_{\Lambda_{*,i}=\Lambda_{*,i}^T,\Lambda_{*,i}\Lambda_{*,i}^T\leq I,i\in\{1,2\}}
	\tr(\Lambda_{*,1}\Lambda_x +\Lambda_{*,2}(\bU^{\perp})^T W\bU^{\perp}) & \leq & \tr(\Lambda_x).
\end{array}\nonumber \\ \label{eq:genmcproof6}
\end{eqnarray}
Now, if at least one of the elements on the main diagonal of $\Lambda_{*,1}$, $\diag(\Lambda_{*,1})$, is smaller than 1, then the corresponding element on the diagonal of $\Lambda_x$ can be made arbitrarily large compared to the other elements of $\Lambda_x$ and one would have
\begin{eqnarray}
\begin{array}{r r r l@{\ }}
	& \max_{\Lambda_{*,i}=\Lambda_{*,i}^T,\Lambda_{*,i}\Lambda_{*,i}^T\leq I,i\in\{1,2\}}
	\tr(\Lambda_{*,1}\Lambda_x +\Lambda_{*,2}(\bU^{\perp})^T W\bU^{\perp}) & < & \tr(\Lambda_x) \\
	\Longleftrightarrow \hspace{.3in} $ $   & \max_{\Lambda_*=\Lambda_*^T\in {\cal L}_{*}^0}
	\tr(\Lambda_*
	\begin{bmatrix}
		\bU & \bU^{\perp}
	\end{bmatrix}^T(X+W)\begin{bmatrix}
		\bU & \bU^{\perp}
	\end{bmatrix}) & < & \ell_1^*(X)\\
	\Longleftrightarrow \hspace{.3in} $ $   & \max_{\Lambda_*=\Lambda_*^T\in {\cal L}_{*}}
	\tr(\Lambda_*
	\begin{bmatrix}
		\bU & \bU^{\perp}
	\end{bmatrix}^T(X+W)\begin{bmatrix}
		\bU & \bU^{\perp}
	\end{bmatrix}) & < & \ell_1^*(X),
\end{array}\nonumber \\ \label{eq:genmcproof7}
\end{eqnarray}
where the last equivalence holds since the difference of the terms on the left-hand side in the last two inequalities is bounded independently of $X$. Also, the last inequality in (\ref{eq:genmcproof7}) together with the first equality in (\ref{eq:genmcproof1a}) and the first inequality in (\ref{eq:genmcproof1}) produces (\ref{eq:genmcproof4}). Therefore the only scenario that is left as potentially not producing (\ref{eq:genmcproof4}) is when all the elements on the main diagonal are larger than or equal to 1. However, the two lemmas preceding the theorem show that in such a scenario ${\cal L}^0={\cal L}$ and one consequently has an equality instead of the inequality in (\ref{eq:genmcproof1a}) which then, together with
(\ref{eq:genmcproof1}), implies (\ref{eq:genmcproof4}). This completes the proof of the second (``the only if") part of the theorem and therefore of the entire theorem.
\end{proof}

The above derivations can be repeated for asymmetric matrices $X_{sol}$ as well. The final result differs in a cosmetic change. Namely, the right $\bar{U}$ should be replaced by a corresponding $\bar{V}$. Also, the condition in the theorem relates matrix $W$ to the null-space of matrix $A$ and as such belongs to the class of so-called vector matrix terminology (VMT) based conditions. In the MC and C-inf cases that are of our interest here, it is more convenient to deal with its a masking matrix terminology (MMT) analogue. Recalling on the proof of Theorem \ref{thm:thm1} and the role of matrix $W$ within that proof, one has that stating that $W$ belongs to the null-space of $A$ is basically equivalent to stating that $M\circ W=\0_{n\times n}$. In other words, one has the equivalence between the following two sets
\begin{equation}
(W\in \mR^{n\times n} | A\vecw(W)=\0_{m\times 1},W\neq \0_{n\times n})
\Longleftrightarrow (W\in \mR^{n\times n} | M\circ W=\0_{n\times n},W\neq \0_{n\times n}).
\label{eq:cinfanl1}
\end{equation}

Continuing further in the spirit of the RDT, the following corollary of the above theorem can be established as well.
\begin{corollary}(\textbf{\bl{$\ell_0^*-\ell_1^*$-equivalence condition}} -- \textbf{general}  $X$)
Assume the setup of Proposition \ref{prop:cauinf} with $X_{sol}$ being the unique solution of (\ref{eq:genmcl0posmmt}). Let the masking matrix $M\in\mR^{n\times n}$ have $m$ ones and $(n^2-m)$ zeros and let $A$ be generated via $M$, i.e. let $A$ be the matrix obtained after removing all the zero rows from $\diag^{-1}(\vecw(M))I_{n^2\times n^2}$. If and only if
\begin{equation}
\min_{W,W^TW=1,M\circ W=\0_{n\times n}}  \tr(\bU^TW\bV)+\ell_1^*((\bU^{\perp})^TW\bV^{\perp})\geq 0,
\label{eq:cinfcor1}
\end{equation}
then
\begin{equation}
\ell_0^*\Longleftrightarrow \ell_1^* \quad \mbox{and}\quad  \textbf{\emph{RMSE}}=\|\hat{X}-X_{sol}\|_F=0,\label{eq:cinfcor1a}
\end{equation}
and the solutions of (\ref{eq:genmcl0posmmt}) and (\ref{eq:genmcl1posmmt}) coincide.
\label{cor:cinfcor1}
\end{corollary}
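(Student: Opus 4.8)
The plan is to obtain the corollary from Proposition~\ref{prop:cauinf} through three successive reductions, none of which reopens the core RDT argument. First I would pass from the symmetric setting of Proposition~\ref{prop:cauinf} to a general (asymmetric) $X_{sol}$. As the remark following that proposition indicates, the only change is cosmetic: the right-hand copy of $\bU$ is replaced by $\bV$ throughout, so that the dual characterization $\ell_1^*(Z)=\max_{\Lambda^T\Lambda\leq I}\tr(\Lambda Z)$ is applied to the rectangular block $(\bU^{\perp})^TW\bV^{\perp}$ while the ``removing the absolute values'' device is applied to $\tr(\bU^TW\bV)$ via an overwhelming choice of $X$. This yields the asymmetric analogue of the equivalence condition: the solutions of~\eqref{eq:genmcl0posmmt} and~\eqref{eq:genmcl1posmmt} coincide precisely when $-\tr(\bU^TW\bV)\leq\ell_1^*((\bU^{\perp})^TW\bV^{\perp})$ for every admissible perturbation $W$.

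Second, I would convert from the vector--matrix (null-space) description to the masking (Hadamard) description. By construction $A$ is exactly the row-selection matrix that reads off the observed entries of $\vecw(W)$, so $A\vecw(W)=\0_{m\times 1}$ holds if and only if every observed entry of $W$ vanishes, i.e. $M\circ W=\0_{n\times n}$. This is the set identity~\eqref{eq:cinfanl1}, and it lets me replace the constraint $A\vecw(W)=\0_{m\times 1}$ by $M\circ W=\0_{n\times n}$ without altering the admissible family of perturbations. Third, I would rephrase the universally quantified inequality as a single optimization. The map $W\mapsto \tr(\bU^TW\bV)+\ell_1^*((\bU^{\perp})^TW\bV^{\perp})$ is continuous and positively homogeneous of degree one, and $M\circ W=\0_{n\times n}$ cuts out a linear subspace; hence ``$\tr(\bU^TW\bV)+\ell_1^*((\bU^{\perp})^TW\bV^{\perp})\geq 0$ for all admissible $W\neq\0_{n\times n}$'' is equivalent to the same inequality restricted to the compact sphere $\{W:\,W^TW=1,\ M\circ W=\0_{n\times n}\}$, on which the infimum is attained. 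This is exactly condition~\eqref{eq:cinfcor1}. Combining the three reductions turns the asymmetric, masking-terminology equivalence condition into~\eqref{eq:cinfcor1}, whence the ``if'' and ``only if'' halves of the extended proposition deliver the equivalence of the two programs and $\mathbf{RMSE}=0$ asserted in~\eqref{eq:cinfcor1a}.

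The main obstacle I anticipate is the behavior at the boundary, where the minimum in~\eqref{eq:cinfcor1} equals $0$ rather than being strictly positive. The ``if'' half of Proposition~\ref{prop:cauinf} is stated with a strict inequality, whereas~\eqref{eq:cinfcor1} admits equality, so the two do not line up verbatim at the phase-transition threshold. Reconciling them requires the overwhelming-$X$ construction ($X=\bU\Lambda_x\bV^T$ with $\Lambda_x>0$ arbitrarily large) together with the two lemmas invoked in the proof of Proposition~\ref{prop:cauinf} that force ${\cal L}_{*}^0={\cal L}_{*}$ when the relevant dual block is the identity; these show that an admissible $W$ attaining equality cannot yield a strictly cheaper competitor, so $X_{sol}$ stays optimal and the uniqueness hypothesis on the $\ell_0^*$ solution secures $\hat{X}=X_{sol}$. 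I would therefore treat this equality case explicitly and route all strictly-feasible parameter choices through the strict ``if'' and the $\geq$ ``only if'' statements already established.
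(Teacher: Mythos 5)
Your proposal is correct and follows essentially the same route as the paper: the paper's own proof simply combines conditions (\ref{eq:genmcposthmcond1}) and (\ref{eq:genmcposthmcond2}) with the null-space/masking equivalence (\ref{eq:cinfanl1}), relying on the preceding remark for the asymmetric extension and on the remark after the corollary for the strict-versus-nonstrict boundary issue that you handle via the overwhelming-$X$ construction and the uniqueness hypothesis. Your write-up merely makes explicit the homogeneity/normalization step and the boundary case that the paper leaves implicit.
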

\begin{proof}
Follows immediately as a combination of \ref{eq:genmcposthmcond1}, \ref{eq:genmcposthmcond2}, and (\ref{eq:cinfanl1}).
\end{proof}

\noindent \textbf{Remark:} Carefully comparing the conditions in (\ref{eq:genmcposthmcond1}) and (\ref{eq:cinfcor1}) one can observe that a strict inequality is loosened up a bit at the expense of the uniqueness assumption. With a little bit of extra effort one may avoid this. However, to make writings below substantially easier we will work with a non-strict inequality.

One also has the following alternative to the above corollary.

\begin{corollary}(\textbf{\bl{$\ell_0^*-\ell_1^*$-equivalence condition via masking matrix}} -- \textbf{general}  $X$)
Assume the setup of Proposition \ref{prop:cauinf} and Corollary \ref{cor:cinfcor1}. Let $I^{(l)}$ be as in (\ref{eq:cinfanl2a}). Then
\begin{center}
% \tcbset{beamer,lower separated=false, fonttitle=\bfseries,
	%coltext=black , interior style={top color=orange!10!yellow!30!white, bottom color=yellow!80!yellow!50!white}, title style={left color=black, right %color=red!70!orange!30!blue}}
\tcbset{beamer,lower separated=false, fonttitle=\bfseries,
	coltext=black , interior style={top color=orange!10!yellow!30!white, bottom color=yellow!80!yellow!50!white}, title style={left color=orange!10!cyan!30!blue, right color=green!70!blue!20!black}}
\begin{tcolorbox}[beamer,title=\textbf{\yellow{$\ell_0^*\Longleftrightarrow \ell_1^* \quad \mbox{and}\quad  \textbf{\emph{RMSE}}=\|\hat{X}-X_{sol}\|_F=0$}},fonttitle=\bfseries,width=.98\linewidth]
	\begin{equation}
		\mbox{ If and only if} \quad \exists \Lambda| \Lambda^T\Lambda\leq I \quad \mbox{and} \quad (I^{(l)})^T \lp \bV\bU^T+\bV^{\perp}\Lambda(\bU^{\perp})^T\rp I^{(l)}=0.\label{eq:cinfcor2eq1}
	\end{equation}
\end{tcolorbox}
\end{center}
\label{cor:cinfcor2}
\end{corollary}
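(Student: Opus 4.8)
The plan is to show that the single scalar inequality of Corollary~\ref{cor:cinfcor1}, namely
\[
\min_{W,\, W^TW=1,\, M\circ W=\0_{n\times n}}\tr(\bU^TW\bV)+\ell_1^*((\bU^{\perp})^TW\bV^{\perp})\geq 0,
\]
is equivalent to the existence of a dual certificate $\Lambda$ satisfying (\ref{eq:cinfcor2eq1}). The engine is the variational (dual) description of the nuclear norm already exploited in (\ref{eq:cinfanl3}), $\ell_1^*(Z)=\max_{\Lambda^T\Lambda\leq I}\tr(\Lambda Z)$, together with a minimax exchange. Since the whole argument proceeds through a chain of equivalences, both directions of the ``if and only if'' come for free.

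First I would substitute the dual form of $\ell_1^*$ and use cyclicity of the trace to collapse the objective into a single bilinear form,
\[
\tr(\bU^TW\bV)+\ell_1^*((\bU^{\perp})^TW\bV^{\perp})=\max_{\Lambda^T\Lambda\leq I}\tr\lp(\bV\bU^T+\bV^{\perp}\Lambda(\bU^{\perp})^T)W\rp,
\]
and abbreviate $G(\Lambda)\triangleq \bV\bU^T+\bV^{\perp}\Lambda(\bU^{\perp})^T$. Because the objective is positively homogeneous of degree one in $W$, the normalization $W^TW=1$ may be relaxed to the convex constraint $\|W\|_F\leq 1$ without affecting the sign of the optimum: Corollary~\ref{cor:cinfcor1}'s inequality holds if and only if $\min_{\|W\|_F\leq1,\,M\circ W=\0}\max_{\Lambda^T\Lambda\leq I}\tr(G(\Lambda)W)=0$, the value being automatically $\leq0$ since $W=\0$ is admissible.

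Next I would swap the order of optimization. The set $\{\Lambda:\Lambda^T\Lambda\leq I\}$ and the set $\{W\in\mathcal{W}:\|W\|_F\leq1\}$, where $\mathcal{W}\triangleq\{W:M\circ W=\0_{n\times n}\}$, are both convex and compact, and the objective is bilinear, so Sion's minimax theorem permits exchanging $\min_W$ and $\max_\Lambda$. The inner minimum is a projection computation: for fixed $\Lambda$, $\min_{W\in\mathcal{W},\|W\|_F\leq1}\tr(G(\Lambda)W)=-\|P_{\mathcal{W}}G(\Lambda)\|_F$, where $P_{\mathcal{W}}$ is the Frobenius-orthogonal projector onto $\mathcal{W}$ (the minimizer being the suitably normalized $-P_{\mathcal{W}}G(\Lambda)$). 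Hence the condition is equivalent to $\min_{\Lambda^T\Lambda\leq I}\|P_{\mathcal{W}}G(\Lambda)\|_F=0$, and by compactness this minimum is attained, so it vanishes precisely when there exists $\Lambda$ with $\Lambda^T\Lambda\leq I$ and $P_{\mathcal{W}}G(\Lambda)=\0$, i.e. $G(\Lambda)\in\mathcal{W}^{\perp}$.

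Finally I would identify $\mathcal{W}^{\perp}$ explicitly using the block form of $M=M^{(l)}$ from (\ref{eq:cinfanl2a}). Since $\mathcal{W}$ consists of the matrices supported on the zero-set of $M$, its orthogonal complement consists of the matrices vanishing off the support of $M$; and the zero-set of $M^{(l)}$ is precisely the bottom-right $(n-l)\times(n-l)$ block carved out by $I^{(l)}(I^{(l)})^T$. Thus $G(\Lambda)\in\mathcal{W}^{\perp}$ is exactly the statement that the bottom-right block of $G(\Lambda)$ vanishes, namely $(I^{(l)})^T\lp\bV\bU^T+\bV^{\perp}\Lambda(\bU^{\perp})^T\rp I^{(l)}=\0$, which is (\ref{eq:cinfcor2eq1}). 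The main obstacle I anticipate is legitimizing the minimax exchange: one must pass correctly from the nonconvex sphere $W^TW=1$ to the unit ball via homogeneity, invoke Sion's theorem on the resulting convex compact sets, and separately verify the projection identity for the inner minimum (taking care of the cosmetic transpose in the Frobenius pairing $\tr(G(\Lambda)W)=\langle G(\Lambda)^T,W\rangle$, which does not alter the symmetric block condition). Once these are in place, translating orthogonality to $\mathcal{W}$ into the block equation through $I^{(l)}$ is routine bookkeeping.
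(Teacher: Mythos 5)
Your argument is correct and, at its core, it is the same duality mechanism the paper uses: the variational identity $\ell_1^*(Z)=\max_{\Lambda^T\Lambda\leq I}\tr(\Lambda Z)$, the collapse of the objective into $\tr\lp(\bV\bU^T+\bV^{\perp}\Lambda(\bU^{\perp})^T)W\rp$, a min--max exchange, and the identification of orthogonality to the feasible set with the vanishing of the bottom-right block extracted by $I^{(l)}$ --- this is precisely the chain (\ref{eq:cinfanl2})--(\ref{eq:cinfanl5a}) that the paper's two-line proof points to. Where you differ is in execution: the paper introduces a Lagrange multiplier $\Theta$ for the constraint $M\circ W=\0_{n\times n}$ and settles the ``only if'' direction with a bare appeal to convexity and strong duality, whereas you eliminate $\Theta$ by working directly in the subspace of matrices supported on the missing block, justify the exchange via Sion's theorem on two compact convex sets, and evaluate the inner minimum as an explicit Frobenius projection. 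Your route buys a self-contained and fully rigorous ``only if'' direction: the positive-homogeneity argument passing from the normalization $W^TW=1$ to the unit ball, the attainment of the minimum over $\Lambda$ by compactness, and the transpose bookkeeping in the pairing $\tr(G(\Lambda)W)=\langle G(\Lambda)^T,W\rangle$ are exactly the points the paper glosses over. One detail worth recording explicitly when you write this up is that the equivalence between the vanishing of the projected $G(\Lambda)^T$ and the block condition on $G(\Lambda)$ itself uses that the missing block is the symmetric bottom-right square (the $l_1=l_2=l$ case), so the support of $M$ is closed under transposition; you flag this, and it is indeed harmless here.
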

\begin{proof}
The ``if" part follows from Corollary \ref{cor:cinfcor1} and (\ref{eq:cinfanl5a}). The ``only if" part follows after noting that all the inequalities in (\ref{eq:cinfanl2})-(\ref{eq:cinfanl5a}) are written for generic instructional purposes. Due to the underlying convexity and the strong duality they all actually can be replaced with equalities as well.
\end{proof}

Proposition \ref{thm:genmcthmregposcond} and Corollaries \ref{cor:cinfcor1} and \ref{cor:cinfcor2} consider the so-called idealized low rank scenario. In other words, they consider scenarios where the underlying matrix $X_{sol}$ is of rank $k<n$. Corollary \ref{cor:cinfcor1} particulary states that in such a scenario one has that the nuclear norm optimization in (\ref{eq:genmcl1posmmt}) (when it comes to its equivalence with the corresponding $\ell_0^*$ from (\ref{eq:genmcl0posmmt})) effectively boils down to the residual optimization problem given in (\ref{eq:cinfcor1}). Also, in the idealized context, it is possible to find scenarios when the optimal $W$ in (\ref{eq:cinfcor1}) is zero. In the non-idealized ones, i.e. when $X_{sol}$ has full rank that is not possible and one has the residual error $W=\hat{X}-X_{sol}\neq 0$. The following theorem (basically Theorem \ref{thm:thm1}) is the key upgrade of the above corollaries from idealized low rank to approximately low rank scenario, that ultimately characterizes the residual error.

\begin{theorem}(Algebraic characterization of $W$)
Consider a $\bU\in\mR^{n\times k}$ such that $\bU^T\bU=I_{k\times k}$ and a $\bV\in\mR^{n\times k}$ such that $\bV^T\bV=I_{k\times k}$ and an approximately  rank $k$ matrix $X_{sol}=X\in\mR^{n\times n}$, such that $X_{sol}=U\bar{\sigma}V^T, \bar{\sigma}=\diag(\sigma,\epsilon_\sigma),\sigma\in\mR^k,\epsilon_\sigma\in\mR^{n-k},\sigma\gg \sigma_\epsilon,\epsilon_\sigma\leq \sigma_\epsilon$.
Also, let the orthogonal spans $\bU^{\perp}\in\mR^{n\times (n-k)}$ and $\bV^{\perp}\in\mR^{n\times (n-k)}$ be such that $U\triangleq \begin{bmatrix}
\bU & \bU^{\perp}
\end{bmatrix}$ and $V\triangleq \begin{bmatrix}
\bV & \bV^{\perp}
\end{bmatrix}$ and
\begin{equation}\label{eq:cinfthm0}
U^TU\triangleq \begin{bmatrix}
	\bU & \bU^{\perp}
\end{bmatrix}^T\begin{bmatrix}
	\bU & \bU^{\perp}
\end{bmatrix}=I_{n\times n} \quad \mbox{and} \quad
V^TV \triangleq\begin{bmatrix}
	\bV & \bV^{\perp}
\end{bmatrix}^T\begin{bmatrix}
	\bV & \bV^{\perp}
\end{bmatrix}=I_{n\times n}.
\end{equation}
With $M\in\mR^{n\times n}$ as in (\ref{eq:cinfanl2a}), assume that $Y= M\circ X_{sol}$ and let $\hat{X}$ be the solution of (\ref{eq:genmcl1posmmt}). Set
\begin{equation}
\hat{W} = \arg \min_{W,M\circ W=\0_{n\times n}}  \tr(\bU^TW\bV)+\ell_1^*(\diag(\epsilon_\sigma)+(\bU^{\perp})^TW\bV^{\perp}).
\label{eq:Acinfcor1}
\end{equation}
Then for any $X_{sol}$ that satisfies the above setup one has
\begin{equation}
\textbf{\emph{RMSE}}_{(ns)}=\|\hat{X}-X_{sol}\|_F\leq\|\hat{W}\|_F.\label{eq:Acinfcor1a}
\end{equation}
Moreover, there exists an $X_{sol}$ such that
\begin{equation}
\textbf{\emph{RMSE}}_{(ns)}=\|\hat{X}-X_{sol}\|_F=\|\hat{W}\|_F.\label{eq:Acinfcor1a1}
\end{equation}
\label{thm:thmA1}
\end{theorem}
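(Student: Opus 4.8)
The plan is to reduce the statement to the $\ell_0^*$–$\ell_1^*$ machinery already set up in Proposition~\ref{prop:cauinf} and Corollaries~\ref{cor:cinfcor1}–\ref{cor:cinfcor2}, upgrading it from the idealized ($\epsilon_\sigma=0$) to the approximately low rank ($\epsilon_\sigma>0$) regime. First I would observe that, since both $\hat X$ and $X_{sol}$ reproduce the observations $Y=M\circ X_{sol}$, the true residual $W^*\triangleq\hat X-X_{sol}$ lies in the feasible cone $\{W:M\circ W=\0_{n\times n}\}$, and that optimality of $\hat X$ for (\ref{eq:genmcl1posmmt}) forces $W^*$ to minimize $W\mapsto g(W)\triangleq\ell_1^*(X_{sol}+W)$ over that cone. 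The whole task is therefore to compare this genuine nuclear-norm objective $g$ with the surrogate functional appearing in (\ref{eq:Acinfcor1}).

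Next I would rotate into the $U,V$ bases. Using unitary invariance of $\ell_1^*$ together with the identities $\bU^TX_{sol}\bV=\diag(\sigma)$, $(\bU^{\perp})^TX_{sol}\bV^{\perp}=\diag(\epsilon_\sigma)$ and $\bU^TX_{sol}\bV^{\perp}=(\bU^{\perp})^TX_{sol}\bV=\0$, I would write $U^T(X_{sol}+W)V$ as the $2\times2$ block matrix whose diagonal blocks are $\diag(\sigma)+\bU^TW\bV$ and $\diag(\epsilon_\sigma)+(\bU^{\perp})^TW\bV^{\perp}$ and whose off-diagonal blocks are $\bU^TW\bV^{\perp}$ and $(\bU^{\perp})^TW\bV$. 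The key inequality is the block-diagonal lower bound $\ell_1^*\geq\ell_1^*(\text{top-left})+\ell_1^*(\text{bottom-right})$, followed by the ``removing-the-absolute-values'' step of the RDT used in Proposition~\ref{prop:cauinf}: because $\sigma\gg\sigma_\epsilon$, the dominant block is positive semidefinite, so $\ell_1^*(\diag(\sigma)+\bU^TW\bV)\geq\tr(\diag(\sigma)+\bU^TW\bV)=\|\sigma\|_1+\tr(\bU^TW\bV)$. Combining these yields $g(W)\geq\|\sigma\|_1+f_{pr}(W)$, where $f_{pr}(W)\triangleq\tr(\bU^TW\bV)+\ell_1^*(\diag(\epsilon_\sigma)+(\bU^{\perp})^TW\bV^{\perp})$ is precisely the functional minimized in (\ref{eq:Acinfcor1}); this is the approximately-low-rank analogue of the reduction carried out for Corollary~\ref{cor:cinfcor1}, and its minimizer is the surrogate $\hat W$.

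For the worst-case equality (\ref{eq:Acinfcor1a1}) I would exhibit the extremal $X_{sol}$ explicitly: taking $\sigma=\sigma_{mag}\1$ with $\sigma_{mag}$ arbitrarily large makes the dominant block positive semidefinite for every relevant $W$, so the trace step is an exact equality, while the two auxiliary lemmas invoked in the proof of Proposition~\ref{prop:cauinf} (the regime $\diag(\Lambda_{*,1})\geq1$ forcing ${\cal L}^0={\cal L}$) make the block-diagonal restriction tight as well. Then $g(W)=\|\sigma\|_1+f_{pr}(W)$ identically on the relevant domain, the chain $\|\sigma\|_1+f_{pr}(\hat W)\leq g(W^*)\leq g(\hat W)=\|\sigma\|_1+f_{pr}(\hat W)$ collapses, and $W^*=\hat W$, giving $\|W^*\|_F=\|\hat W\|_F$. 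For the general inequality (\ref{eq:Acinfcor1a}) I would argue that this extremal construction is the worst case over all admissible $X_{sol}$: as $\sigma$ shrinks toward $\sigma_\epsilon$ the dominant block acquires slack, the surrogate bound loosens, and the attained residual can only be less extreme, so $\|W^*\|_F\leq\sup_{X_{sol}}\|W^*\|_F=\|\hat W\|_F$.

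The main obstacle I anticipate is exactly this last passage from the objective-value inequality $g(W)\geq\|\sigma\|_1+f_{pr}(W)$ to the Frobenius-norm inequality $\|W^*\|_F\leq\|\hat W\|_F$: the two functionals have different minimizers for finite $\sigma$, so I must control how the minimizer's norm depends on the gap between $g$ and its surrogate, and in particular verify the monotonicity in $\sigma$ that pins the worst case at $\sigma_{mag}\to\infty$. Establishing the exact tightness of the block-diagonal restriction in that limit (through the two auxiliary lemmas handling the $\diag(\Lambda_{*,1})\geq1$ regime) is the delicate technical point that turns the worst-case statement into an equality rather than a mere bound.
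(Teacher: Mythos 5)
Your proposal follows essentially the same route as the paper: the paper's own proof is a short reduction to Proposition \ref{prop:cauinf} and Corollary \ref{cor:cinfcor1}, repeating the block-decomposition and ``removing the absolute values'' arguments while propagating the extra term $\bU^{\perp}\diag(\epsilon_\sigma)(\bV^{\perp})^T$ into the second term of the surrogate objective, with achievability coming from the tightness of that earlier analysis --- exactly the structure you lay out, in somewhat more detail than the paper gives. The obstacle you flag at the end (passing from the objective-value inequality $g(W)\geq\|\sigma\|_1+f_{pr}(W)$ to the Frobenius-norm inequality on the minimizers) is precisely the step the paper's one-paragraph proof also leaves implicit, so you have not missed anything the paper itself supplies.
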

\begin{proof}
Follows directly from Corollary \ref{cor:cinfcor1} by repeating all arguments and accounting for an additional term $\bar{U}^\perp\epsilon_\sigma( \bar{V}^\perp)^T$ in the representation of $X_{sol}$ (that term is propagated as an addition to $W$ and eventually appears only in the second term of (\ref{eq:Acinfcor1})). Effectively, (\ref{eq:Acinfcor1a}) states that the residual error will be no worse (larger in $\|\cdot\|_F$ norm) than the solution of the optimization in (\ref{eq:Acinfcor1}). Moreover, the tightness of the analyses leading to Proposition \ref{thm:genmcthmregposcond} and Corollary \ref{cor:cinfcor1} ensures that such an error is also achievable (as stated in (\ref{eq:Acinfcor1a1})).
\end{proof}

%%%%%%%%%%%%%%%%%%%%%%%%%%%%%%%%%%%%%%%%%%%%%%%%%%%%%%%%%%%%%%%%%%%%%%%%%%%%%%%%%%%%%%%%%%%%%%%%%%%%%%%%%%%%%%%%%%%%%%%%%
%%%%%%%%%%%%%%%%%%%%%%%%%%%%%%%%%%%%%%%%%%%%%%%%%%%%%%%%%%%%%%%%%%%%%%%%%%%%%%%%%%%%%%%%%%%%%%%%%%%%%%%%%%%%%%%%%%%%%%%%%
%%%%%%%%%%%%%%%%%%%%%%%%%%%%%%%%%%%%%%%%%%%%%%%%%%%%%%%%%%%%%%%%%%%%%%%%%%%%%%%%%%%%%%%%%%%%%%%%%%%%%%%%%%%%%%%%%%%%%%%%%
\section{Proof of Lemma \ref{lemma:lemma1}}
%%%%%%%%%%%%%%%%%%%%%%%%%%%%%%%%%%%%%%%%%%%%%%%%%%%%%%%%%%%%%%%%%%%%%%%%%%%%%%%%%%%%%%%%%%%%%%%%%%%%%%%%%%%%%%%%%%%%%%%%%
%%%%%%%%%%%%%%%%%%%%%%%%%%%%%%%%%%%%%%%%%%%%%%%%%%%%%%%%%%%%%%%%%%%%%%%%%%%%%%%%%%%%%%%%%%%%%%%%%%%%%%%%%%%%%%%%%%%%%%%%%
%%%%%%%%%%%%%%%%%%%%%%%%%%%%%%%%%%%%%%%%%%%%%%%%%%%%%%%%%%%%%%%%%%%%%%%%%%%%%%%%%%%%%%%%%%%%%%%%%%%%%%%%%%%%%%%%%%%%%%%%%

%%%%%%%%%%%%%%%%%%%%%%%%%%%%%%%%%%%%%%%%%%%%%%%%%%%%%%%%%%%%%%%%%
\subsection{Basics of free probability theory (FPT) -- random matrix variables}
\label{sec:fptprelmatrices}
%%%%%%%%%%%%%%%%%%%%%%%%%%%%%%%%%%%%%%%%%%%%%%%%%%%%%%%%%%%%%%%%%

We find it useful to first recall on some FPT basics. To start things off we consider two symmetric matrices, $A=A^T\in\mR^{n\times n}$ and $B=B^T\in\mR^{n\times n}$, as random variables. We also assume large $n$ regime and that the eigenspaces of these matrices are Haar distributed. Moreover, we will assume that their individual respective spectral laws are $f_A(\cdot)$ and $f_B(\cdot)$. Furthermore, we recall on the so-called Stieltjes transform (or as we will often call it G-transform) of a pdf $f(\cdot)$
\begin{eqnarray}
G(z) & \triangleq & \int_{I_f} \frac{f(x)}{z-x} dx, \quad z\in\mC\setminus I_f,  \label{eq:typwcanl6}
\end{eqnarray}
where $I_f$ is the domain of $f(\cdot)$. One then also has the inverse relation (somewhat analogous to the above relation between the inverse Fourier and the underlying pdf of the sum of scalar random variables)
\begin{eqnarray}
f(x) =  \lim_{\epsilon\rightarrow 0^+} \frac{G(x-i\epsilon)-G(x+i\epsilon)}{2i\pi}
\quad \mbox{or} \quad    f(x) =  -\lim_{\epsilon\rightarrow 0^+} \frac{\mbox{imag}(G(x+i\epsilon))}{\pi}.   \label{eq:typwcanl7}
\end{eqnarray}
For the above to hold it makes things easier to implicitly assume that $f(x)$ is continuous. We will, however, utilize it even in discrete (or semi-discrete) scenarios since the obvious asymptotic translation to continuity would make it fully rigorous. A bit later though, when we see some concrete examples where things of this nature may appear, we will say a few more words and explain more thoroughly what exactly can
be discrete and how one can deal with such a discreteness. In the meantime we proceed with general principles not necessarily worrying about all the underlying technicalities that may appear in scenarios deviating from the typically seen ones and potentially requiring additional separate addressing. To that end we continue by considering the $R(\cdot)$- and $S(\cdot)$-transforms that satisfy the following
\begin{eqnarray}
R(G(z))+\frac{1}{G(z)}=z,  \label{eq:typwcanl8}
\end{eqnarray}
and
\begin{eqnarray}
S(z)=\frac{1}{R(zS(z))} \quad \mbox{and}\quad R(z)=\frac{1}{S(zR(z))}.\label{eq:typwcanl9}
\end{eqnarray}
Let $f_A(\cdot)$ and $f_B(\cdot)$ be the spectral distributions of $A$ and $B$ and let $R_A(z)$/$S_A(z)$ and $R_B(z)$/$S_B(z)$ be their associated $R(\cdot)$-/$S(\cdot)$-transforms. One then has the following
\begin{center}
\tcbset{beamer,lower separated=false, fonttitle=\bfseries,
coltext=black , interior style={top color=orange!10!yellow!30!white, bottom color=yellow!80!yellow!50!white}, title style={left color=black, right color=red!70!orange!30!blue}}
\begin{tcolorbox}[beamer,title=\textbf{Key Voiculescu's FPT concepts \cite{Voic86,Voic87}:},fonttitle=\bfseries,width=5in]
\vspace{-.1in}\begin{eqnarray}
	\begin{array}{r c l l r c l}
		C & = & A+B & \quad \Longrightarrow \quad $ $ & R_C(z) & = & R_A(z)+R_B(z)\\
		C & = & AB  & \quad \Longrightarrow \quad $ $ & S_C(z) & = & S_A(z)S_B(z).
	\end{array}\label{eq:typwcanl10}
\end{eqnarray}
\end{tcolorbox}
\end{center}
\noindent Now it is relatively easy to see that (\ref{eq:typwcanl6})-(\ref{eq:typwcanl10}) are sufficient to determine the spectral distribution of the sum or the product of two independent matrices with given spectral densities and the Haar distributed bases of eigenspaces. The above is, of course, a generic principle. It can be applied pretty much always as long as one has access to the statistics of the underlying matrices $A$ and $B$. In the following section, we raise the bar a bit higher and show that, in the cases of our interest in this paper, one can use all of the above in such a manner that eventually all the relevant quantities are explicitly determined. Moreover, although the methodology may, on occasion, seem a bit involved, the final results will turn out to be presentable in fairly neat and elegant closed forms.

%%%%%%%%%%%%%%%%%%%%%%%%%%%%%%%%%%%%%%%%%%%%%%%%%%%%%%%%%%%%%%%%%
\subsection{Spectral considerations}
\label{sec:fptprelmatrices1}
%%%%%%%%%%%%%%%%%%%%%%%%%%%%%%%%%%%%%%%%%%%%%%%%%%%%%%%%%%%%%%%%%

We start with a trivial observation. Let $f_{\calV}(\cdot)$ be the spectral distribution of $\calV$. Then
\begin{eqnarray}
f_\calV(x)=(1-\beta)\delta(1-x)+\beta\delta(x), \label{eq:typwcanl22}
\end{eqnarray}
where $\delta(\cdot)$ stands for the standard delta function with nonzero value only when its argument takes value zero. Using the definition of the $G$-transform from (\ref{eq:typwcanl6}) we can find
\begin{eqnarray}
G_\calV(z) = \int \frac{f_\calV(x)}{z-x} dx = \int \frac{(1-\beta)\delta(1-x)+\beta\delta(x)}{z-x} dx=
\frac{1-\beta}{z-1}+\frac{\beta}{z}=\frac{z-\beta}{z^2-z}.  \label{eq:typwcanl23}
\end{eqnarray}
Also, we have
\begin{eqnarray}
R_\calV(y)=z-\frac{1}{y} \quad \mbox{with} \quad y=G_\calV(z) \quad \mbox{and} \quad z=G_\calV^{-1}(y).  \label{eq:typwcanl24}
\end{eqnarray}
From (\ref{eq:typwcanl23}) and (\ref{eq:typwcanl24}) we further find
\begin{eqnarray}
G_\calV(z)=y \quad  \Longleftrightarrow  \quad \frac{z-\beta}{z^2-z}=y \quad \Longleftrightarrow \quad  z^2y-z(y+1)+\beta=0.  \label{eq:typwcanl25}
\end{eqnarray}
Solving for $z$ gives
\begin{eqnarray}
z=\frac{y+1\pm\sqrt{(y+1)^2-4\beta y}}{2y}.  \label{eq:typwcanl26}
\end{eqnarray}
Combining (\ref{eq:typwcanl24}) and (\ref{eq:typwcanl26}) we obtain for the $R$-transform
\begin{eqnarray}
R_\calV(y)=z-\frac{1}{y}= \frac{y-1\pm\sqrt{(y+1)^2-4\beta y}}{2y},  \label{eq:typwcanl27}
\end{eqnarray}
where we for the completeness adopt the strategy to keep both $\pm$ signs. To determine the $S$-transform we start by combining (\ref{eq:typwcanl9}) and (\ref{eq:typwcanl27})
\begin{eqnarray}
S_\calV(z)=\frac{1}{R_\calV(zS_\calV(z))}= \frac{1}{\frac{zS_\calV(z)-1\pm\sqrt{(zS_\calV(z)+1)^2-4\beta zS_\calV(z)}}{2zS_\calV(z)}}.  \label{eq:typwcanl28}
\end{eqnarray}
After a bit of algebraic transformations we have
{\small \begin{eqnarray}
\begin{array}{c r c l}
	& zS_\calV(z)-1-2z &  =  & \mp\sqrt{(zS_\calV(z)+1)^2-4\beta zS_\calV(z)} \\
	\Longleftrightarrow & (zS_\calV(z)-1-2z)^2 &  =  & (zS_\calV(z)+1)^2-4\beta zS_\calV(z) \\
	\Longleftrightarrow & (zS_\calV(z))^2-2(2z+1)zS_\calV(z)+(2z+1)^2 &  =  & (zS_\calV(z))^2+2zS_\calV+1-4\beta zS_\calV(z) \\
	\Longleftrightarrow & -2(2z+1)zS_\calV(z)+4z^2+4z &  =  &  2zS_\calV(z)-4\beta zS_\calV(z) \\
	\Longleftrightarrow & 4z^2+4z &  =  &  (4z^2+4z)S_\calV(z)-4\beta zS_\calV(z) \\
	\Longleftrightarrow & z+1 &  =  &  S_\calV(z)(z+1-\beta).
\end{array}\nonumber \\\label{eq:typwcanl29}
\end{eqnarray}}From (\ref{eq:typwcanl29}) we finally have
\begin{eqnarray}
S_\calV(z)=\frac{z+1}{z+1-\beta}.\label{eq:typwcanl30}
\end{eqnarray}
As this is a very generic result it is useful to have it formalized in the following lemma.
\begin{lemma}
Let $\bV^{\perp}\in\mR^{n\times (n-k)}$ be Haar distributed unitary basis of an $(n-k)$-dimensional subspace of $\mR^n$. Let $\calV$ be defined as
\begin{eqnarray}
{\cal V}  \triangleq  \bV^{\perp}(\bV^{\perp})^T.\label{eq:typwclemma1eq1}
\end{eqnarray}
In the large $n$ linear regime, with $\beta\triangleq\lim_{n\rightarrow\infty}\frac{k}{n}$, the $S$-transform of the spectral density of $\calV$, $f_\calV(\cdot)$, is
\begin{eqnarray}
S_\calV(z)=\frac{z+1}{z+1-\beta}.\label{eq:typwclemma1eq2}
\end{eqnarray}\label{lemma:typwclemma1}
\end{lemma}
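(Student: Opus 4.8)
The plan is to run the standard free-probability pipeline recalled in (\ref{eq:typwcanl6})--(\ref{eq:typwcanl9}): starting from the (deterministic) spectrum of the projector $\calV$, I would compute in turn its Stieltjes transform $G_\calV$, its $R$-transform $R_\calV$, and finally its $S$-transform $S_\calV$, using at each stage the functional relations linking these transforms.

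First I would record the spectrum. Since $\bV^{\perp}$ has orthonormal columns, $\calV=\bV^{\perp}(\bV^{\perp})^T$ is the orthogonal projector onto an $(n-k)$-dimensional subspace, so its eigenvalues are exactly $1$ with multiplicity $n-k$ and $0$ with multiplicity $k$. In the large-$n$ linear regime this yields the two-atom limiting law $f_\calV(x)=(1-\beta)\delta(1-x)+\beta\delta(x)$ of (\ref{eq:typwcanl22}); notably, no concentration argument is required, because for every $n$ the spectrum is deterministic and only the multiplicity ratio $k/n\to\beta$ enters. I would also note in passing that $\int x\,f_\calV(x)\,dx=1-\beta\neq 0$, which is exactly what guarantees that the $S$-transform is well defined. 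Substituting the two-atom law into the definition (\ref{eq:typwcanl6}) of $G$ then gives the rational form $G_\calV(z)=\frac{1-\beta}{z-1}+\frac{\beta}{z}=\frac{z-\beta}{z^2-z}$ of (\ref{eq:typwcanl23}).

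Next I would pass to the $R$-transform via $R_\calV(y)=z-\tfrac1y$ with $y=G_\calV(z)$, as in (\ref{eq:typwcanl24}). Inverting $G_\calV$ amounts to solving $G_\calV(z)=y$, which rearranges to the quadratic $z^2y-z(y+1)+\beta=0$ of (\ref{eq:typwcanl25}) with roots $z=\frac{y+1\pm\sqrt{(y+1)^2-4\beta y}}{2y}$; subtracting $1/y$ produces the $R$-transform (\ref{eq:typwcanl27}), and I would carry both branches forward.

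The final and most delicate step is converting $R_\calV$ into $S_\calV$ through the implicit relation $S(z)=1/R(zS(z))$ of (\ref{eq:typwcanl9}). Substituting the $R$-transform and clearing the radical is the main algebraic obstacle: one isolates the square root, squares it (this is where the $\pm$ ambiguity is annihilated, since both signs yield the same squared identity), and then observes that the quadratic $(zS_\calV(z))^2$ terms cancel, so that the relation collapses to the linear equation $z+1=S_\calV(z)(z+1-\beta)$ of (\ref{eq:typwcanl29}). The one point that genuinely needs care is that squaring can introduce a spurious root, so I would confirm that the resulting $S_\calV(z)=\frac{z+1}{z+1-\beta}$ is consistent with the correct branch of the unsquared relation (equivalently, with the branch of $G_\calV$ analytic near $z=\infty$, where it behaves like $1/z$); once that is checked, (\ref{eq:typwcanl30}) and hence the lemma follow.
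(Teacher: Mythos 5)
Your proposal is correct and follows essentially the same route as the paper's proof: record the two-atom spectral law of the projector, compute $G_\calV$, invert to get $R_\calV$, and then solve the implicit relation $S(z)=1/R(zS(z))$ by isolating and squaring the radical to arrive at $S_\calV(z)=\frac{z+1}{z+1-\beta}$. Your added remarks (determinism of the spectrum, nonvanishing mean ensuring the $S$-transform is well defined, and the branch check after squaring) are sensible refinements of the same argument rather than a different approach.
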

\begin{proof}
Follows from the above discussion.
\end{proof}

Since $\calV$ and $\calU$  are  structurally identical (with the only difference being one of their dimensions) we easily have
\begin{eqnarray}
S_\calU(z)=\frac{z+1}{z+1-\eta}.\label{eq:typwcanl31}
\end{eqnarray}
A combination of (\ref{eq:typwcanl10}), (\ref{eq:typwcanl30}), and (\ref{eq:typwcanl31}) gives
\begin{eqnarray}
S_{\tilde{D}}(z)=\frac{(z+1)^2}{(z+1-\beta)(z+1-\eta)}.\label{eq:typwcanl32}
\end{eqnarray}
From (\ref{eq:typwcanl9}) we also have
\begin{equation}
R_{\tilde{D}}(z)=\frac{1}{S_{\tilde{D}}(zR_{\tilde{D}}(z))}=
\frac{1}{\frac{(zR_{\tilde{D}}(z)+1)^2}{(zR_{\tilde{D}}(z)+1-\beta)(zR_{\tilde{D}}(z)+1-\eta)}}=
\frac{(zR_{\tilde{D}}(z)+1-\beta)(zR_{\tilde{D}}(z)+1-\eta)}{(zR_{\tilde{D}}(z)+1)^2}.\label{eq:typwcanl33}
\end{equation}
Moreover, (\ref{eq:typwcanl7}) gives
\begin{eqnarray}
R_{\tilde{D}}(G_{\tilde{D}}(z))+\frac{1}{G_{\tilde{D}}(z)}=z,\label{eq:typwcanl34}
\end{eqnarray}
and
\begin{eqnarray}
G_{\tilde{D}}(z)R_{\tilde{D}}(G_{\tilde{D}}(z))=zG_{\tilde{D}}(z)-1,\label{eq:typwcanl35}
\end{eqnarray}
From (\ref{eq:typwcanl33}) one further finds
\begin{eqnarray}
R_{\tilde{D}}(G_{\tilde{D}}(z))=
\frac{(G_{\tilde{D}}(z)R_{\tilde{D}}(G_{\tilde{D}}(z))+1-\beta)(G_{\tilde{D}}(z)R_{\tilde{D}}(G_{\tilde{D}}(z))+1-\eta)}
{(G_{\tilde{D}}(z)R_{\tilde{D}}(G_{\tilde{D}}(z))+1)^2}.\label{eq:typwcanl36}
\end{eqnarray}
After plugging (\ref{eq:typwcanl35}) in (\ref{eq:typwcanl36}), we have
\begin{eqnarray}
R_{\tilde{D}}(G_{\tilde{D}}(z))&=&
\frac{(zG_{\tilde{D}}(z)-1+1-\beta)(zG_{\tilde{D}}(z)-1+1-\eta)}
{(zG_{\tilde{D}}(z)-1+1)^2}\nonumber \\
&=&\frac{(zG_{\tilde{D}}(z)-\beta)(zG_{\tilde{D}}(z)-\eta)}
{(zG_{\tilde{D}}(z))^2}.
\label{eq:typwcanl37}
\end{eqnarray}
A combination of (\ref{eq:typwcanl34}) and (\ref{eq:typwcanl37}) further gives
\begin{eqnarray}
z-\frac{1}{G_{\tilde{D}}(z)}=
\frac{(zG_{\tilde{D}}(z)-\beta)(zG_{\tilde{D}}(z)-\eta)}
{(zG_{\tilde{D}}(z))^2}.\label{eq:typwcanl38}
\end{eqnarray}
From (\ref{eq:typwcanl38})  we quickly find
\begin{eqnarray}
z^3(G_{\tilde{D}}(z))^2- z^2G_{\tilde{D}}(z)=
z^2(G_{\tilde{D}}(z))^2-(\beta+\eta)zG_{\tilde{D}}(z)+\beta\eta,\label{eq:typwcanl39}
\end{eqnarray}
and
\begin{eqnarray}
(G_{\tilde{D}}(z))^2(z^3-z^2)- G_{\tilde{D}}(z)(z^2-z(\beta+\eta))-\beta\eta=0.\label{eq:typwcanl40}
\end{eqnarray}
Solving for $G_{\tilde{D}}(z)$ finally gives
\begin{eqnarray}
G_{\tilde{D}}^{\pm}(z)=\frac{z^2-z(\beta+\eta)\pm\sqrt{(z^2-z(\beta+\eta))^2+4\beta\eta(z^3-z^2)}}{2(z^3-z^2)},\label{eq:typwcanl41}
\end{eqnarray}
or
\begin{eqnarray}
G_{\tilde{D}}^{\pm}(z)=\frac{z-(\beta+\eta)\pm\sqrt{(z-(\beta+\eta))^2+4\beta\eta(z-1)}}{2(z^2-z)}.\label{eq:typwcanl42}
\end{eqnarray}

The above is sufficient to establish the following lemma.
\begin{lemma}
Let $\bV^{\perp}\in\mR^{n\times (n-k)}$ and $\bU_D^{\perp}\in\mR^{n\times (n-k)}$ be Haar distributed unitary bases of $(n-k)$-dimensional subspaces of $\mR^n$. Let $\calV$, $\calU$, and $\tilde{D}$ be defined as
\begin{eqnarray}
{\cal V} & \triangleq & \bV^{\perp}(\bV^{\perp})^T \nonumber \\
{\cal U} & \triangleq & \bU_D^{\perp}(\bU_D^{\perp})^T\nonumber \\
\tilde{D} & \triangleq & {\cal V}{\cal U}.\label{eq:typwclemma1aeq1}
\end{eqnarray}
In the large $n$ linear regime, with $\beta\triangleq\lim_{n\rightarrow\infty}\frac{k}{n}$, the $G$-transform of the spectral density of $\tilde{D}$, $f_{\tilde{D}}(\cdot)$, is
\begin{eqnarray}
G_{\tilde{D}}^{\pm}(z)=\frac{z-(\beta+\eta)\pm\sqrt{(z-(\beta+\eta))^2+4\beta\eta(z-1)}}{2(z^2-z)}.\label{eq:typwclemma1aeq2}
\end{eqnarray}\label{lemma:typwclemma1a}
\end{lemma}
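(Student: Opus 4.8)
The plan is to obtain $G_{\tilde{D}}$ entirely through Voiculescu's free-probability calculus, exploiting that $\calV$ and $\calU$ are independent orthogonal projections whose eigenbases are Haar distributed. First I would record the spectral law of each factor. Since $\calV=\bV^{\perp}(\bV^{\perp})^T$ is the orthogonal projection onto an $(n-k)$-dimensional subspace, its eigenvalues are $1$ (asymptotic fraction $1-\beta$) and $0$ (asymptotic fraction $\beta$), so $f_\calV(x)=(1-\beta)\delta(x-1)+\beta\delta(x)$ as in (\ref{eq:typwcanl22}); identically, $\calU$ projects onto an $(n-l)$-dimensional subspace and its spectral law is the same expression with $\beta$ replaced by $\eta$.

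Next I would pass to $S$-transforms. Feeding $f_\calV$ through the chain $G\to R\to S$ furnished by (\ref{eq:typwcanl6})--(\ref{eq:typwcanl9}) yields the closed form $S_\calV(z)=\frac{z+1}{z+1-\beta}$ of Lemma \ref{lemma:typwclemma1}, and by the structural identity of the two factors $S_\calU(z)=\frac{z+1}{z+1-\eta}$. Because $\bV^{\perp}$ and $\bU_D^{\perp}$ are independently Haar-rotated, $\calV$ and $\calU$ are asymptotically freely independent, so the multiplicative product line of Voiculescu's concepts (\ref{eq:typwcanl10}) applies and gives $S_{\tilde{D}}(z)=S_\calV(z)S_\calU(z)=\frac{(z+1)^2}{(z+1-\beta)(z+1-\eta)}$, matching (\ref{eq:typwcanl32}).

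Finally I would invert back to the $G$-transform. Using $S(z)=1/R(zS(z))$ from (\ref{eq:typwcanl9}) turns $S_{\tilde{D}}$ into the implicit relation (\ref{eq:typwcanl33}) for $R_{\tilde{D}}$; substituting $z\mapsto G_{\tilde{D}}(z)$ and invoking $R_{\tilde{D}}(G_{\tilde{D}}(z))+1/G_{\tilde{D}}(z)=z$ from (\ref{eq:typwcanl8}), together with the bookkeeping identity (\ref{eq:typwcanl35}), collapses everything to the single quadratic (\ref{eq:typwcanl40}) in $G_{\tilde{D}}(z)$. Solving that quadratic produces the two branches displayed in (\ref{eq:typwcanl42}), which is exactly the claimed formula, so the lemma follows.

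I expect the genuine content, and hence the main obstacle, to lie not in this algebra but in legitimizing the multiplicative $S$-transform law: one must justify that independent Haar-conjugated projections are \emph{asymptotically free}, and that the non-Hermitian product $\tilde{D}=\calV\calU$ nonetheless carries a bona fide limiting real spectral distribution. For the latter I would observe that $\tilde{D}=\calV\calU$ shares its nonzero eigenvalues with the symmetric positive semidefinite matrix $\calV\calU\calV$, so the spectrum is real and the inversion (\ref{eq:typwcanl7}) is meaningful; the atoms at $0$ and $1$ isolated in Lemma \ref{lemma:lemma1} would then have to be tracked separately, since they sit outside the continuous branch recovered from (\ref{eq:typwcanl42}).
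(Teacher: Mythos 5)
Your proposal is correct and follows essentially the same route as the paper's proof: the projection spectral laws, the $G\to R\to S$ chain yielding $S_\calV(z)=\frac{z+1}{z+1-\beta}$ and $S_\calU(z)=\frac{z+1}{z+1-\eta}$, Voiculescu multiplicativity for the free product, and inversion back through the $R$- and $G$-transform identities to the quadratic whose roots give the two branches. Your closing remarks on legitimizing asymptotic freeness and on symmetrizing via $\calV\calU\calV$ flag points the paper leaves implicit, but they supplement rather than alter the argument.
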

\begin{proof}
Follows from the above discussion. The ``$+$/$-$" signs are taken for negative/positive imaginary part under the root.
\end{proof}

One then relies on (\ref{eq:typwcanl7}) to determine $f_{\tilde{D}}(x)$ as
\begin{eqnarray}
f_{\tilde{D}}(x) =  -\lim_{\epsilon\rightarrow 0^+} \frac{\mbox{imag}(G_{\tilde{D}}(x+i\epsilon))}{\pi}.   \label{eq:typwcanl43}
\end{eqnarray}
The above is a generic procedure and we in Figure \ref{fig:cinfspecGplusG} show the results that one can get for two concrete values $\beta=0.2$ and $\eta=0.6$. One should note that it is not clear \emph{a priori} which of the two $\pm$ signs should be used. As Figure \ref{fig:cinfspecGplusG} indicates one most definitely has to be fairly careful and account for both signs. From Figure \ref{fig:cinfspecGplusG} one further observes that there are four critical points in the spectrum itself: the locations of the two delta functions, zero and one, and two edges of the spectrum's bulk, $x_l$ and $x_u$. The values of these points are shown in the plots on the right hand side. In general one can actually determine their closed forms as well. Moreover, it turns out that one can determine the closed form of the entire spectral function. The section that follows analyzes the spectrum of $\tilde{D}$ in more details and eventually provides the closed form expressions for all the relevant spectral features.

\begin{figure}[htb]
%\begin{minipage}[b]{1\linewidth}
\centerline{\includegraphics[width=.8\columnwidth]{cinfspecplotsftildeDGplusPAP}}
%\centering
%\centerline{\epsfig{figure=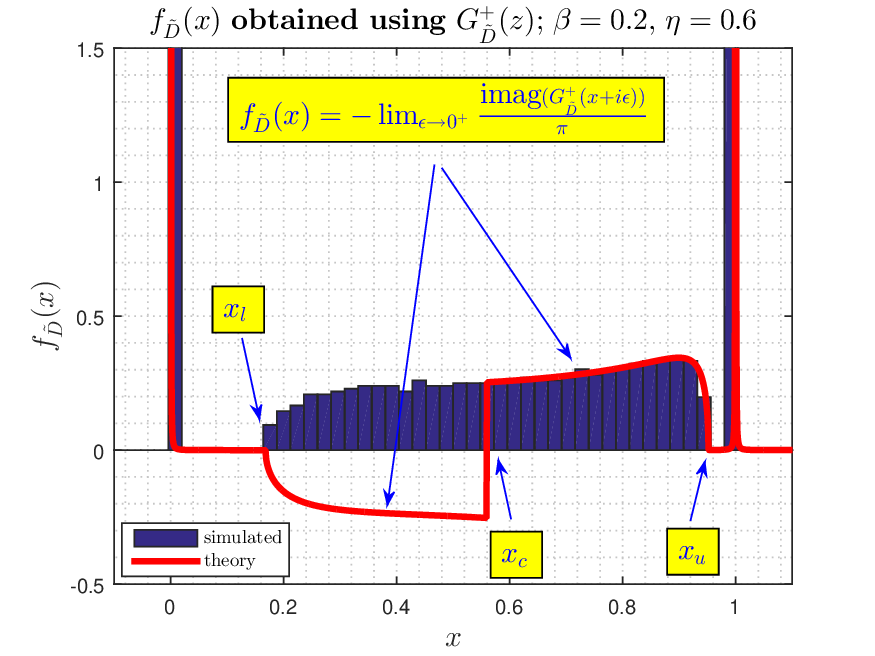,width=9cm,height=7cm}}
%\end{minipage}
%\begin{minipage}[b]{.5\linewidth}
%\centering
%\centerline{\epsfig{figure=finprerral08.eps,width=9cm,height=6.5cm}}
%\end{minipage}
\caption{Both $G_{\tilde{D}}^+(z)$ and $G_{\tilde{D}}^-(z)$ need to be taken into account}
\label{fig:cinfspecGplusG}
\end{figure}
\begin{figure}[htb]
%\begin{minipage}[b]{1\linewidth}
\centerline{\includegraphics[width=.8\columnwidth]{cinfspecplotsftildeDGPAP}}
%\centering
%\centerline{\epsfig{figure=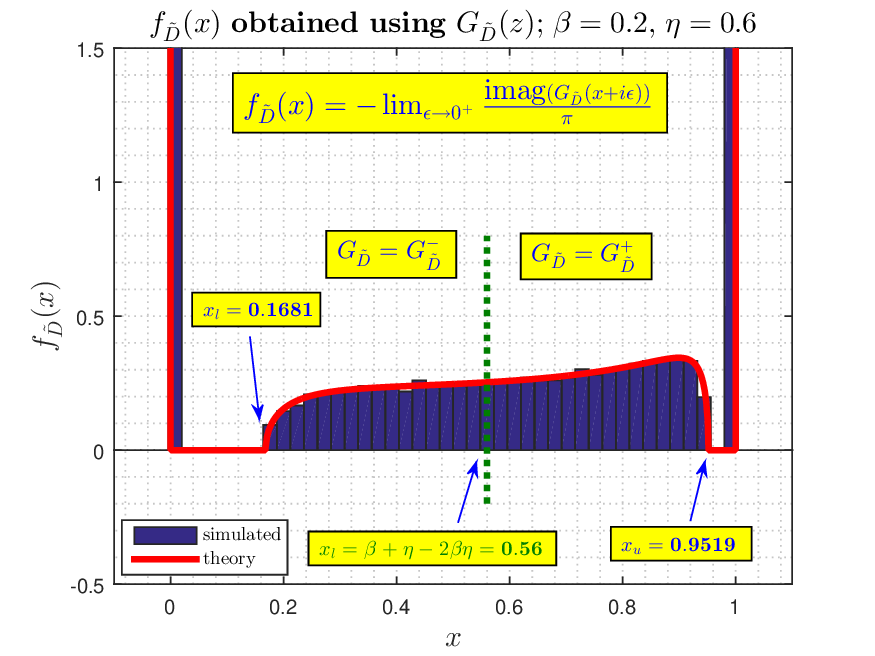,width=9cm,height=7cm}}
%\end{minipage}
%\begin{minipage}[b]{.5\linewidth}
%\centering
%\centerline{\epsfig{figure=finprerral08.eps,width=9cm,height=6.5cm}}
%\end{minipage}
\caption{Both $G_{\tilde{D}}^+(z)$ and $G_{\tilde{D}}^-(z)$ need to be taken into account}
\label{fig:cinfspecGplusG}
\end{figure}

%%%%%%%%%%%%%%%%%%%%%%%%%%%%%%%%%%%%%%%%%%%%%%%%%%%%%%%%%%%%%%%%%
\subsubsection{The spectrum of $\tilde{D}$ -- closed form expressions}
\label{sec:fpteqvspecwctildeD}
%%%%%%%%%%%%%%%%%%%%%%%%%%%%%%%%%%%%%%%%%%%%%%%%%%%%%%%%%%%%%%%%%

As one of our main concerns in this paper is the utilization of the final results that we will get in this section and not necessarily the presentation of the tiny details needed to get them, we will sketch all the key arguments and leave out all the unnecessary minute details. However, we do emphasize that the sketch will contain all the key pointers so that with a little bit of effort one, if in a need, can fill in all the missing pieces of the overall mosaic.

As mentioned above, looking at the denominator of (\ref{eq:typwcanl42}) and keeping in mind the $f\longleftrightarrow G$ connection from (\ref{eq:typwcanl43}) one observes that the pdf of interest, $f_{\tilde{D}}(x)$, potentially has two delta functions, one at zero and the other one at one. Moreover, the bulk of the spectrum will be in the range where the real part under the root is negative. It also goes almost without saying that the entire spectrum will be located between zero and one. Finally, the breaking point, $x_c$, where one needs to switch from $G_{\tilde{D}}^+(z)$ to $G_{\tilde{D}}^-(z)$ in (\ref{eq:typwcanl43}) is determined as the value where the imaginary part under the root changes its sign. Equipped with these observations one can then proceed to actually concretely determine some of the relevant quantities.

Based on what we have just observed above, we first express $f_{\tilde{D}}(x)$ as the sum of its three key constitutive parts (two delta functions and the bulk)
\begin{eqnarray}
f_{\tilde{D}}(x) = f_0\delta (x-0) +f_{\tilde{D}}^{(b)}(x)+f_1\delta(x-1) .   \label{eq:typwcanl44}
\end{eqnarray}
From (\ref{eq:typwcanl44}) one has that $f_{\tilde{D}}(x)$ will be fully specified if one can determine the delta multipliers $f_0$ and $f_1$, and the bulk pdf $f_{\tilde{D}}^{(b)}(\cdot)$.

\underline{\bl{\textbf{ 1) Finding $f_0$:}}} To determine $f_0$ we start by observing from (\ref{eq:typwcanl43}) for $x=0$
\begin{eqnarray}
f_{\tilde{D}}(0) =  -\lim_{\epsilon\rightarrow 0^+} \frac{\mbox{imag}(G_{\tilde{D}}(i\epsilon))}{\pi}.   \label{eq:typwcanl45}
\end{eqnarray}
Utilizing (\ref{eq:typwcanl42}) we further have
\begin{eqnarray}
f_{\tilde{D}}(0) &  =  & -\frac{1}{\pi}\lim_{\epsilon\rightarrow 0^+} \mbox{imag}\lp \frac{i\epsilon-(\beta+\eta)\pm\sqrt{(i\epsilon-(\beta+\eta))^2+4\beta\eta(i\epsilon-1)}}{2((i\epsilon)^2-i\epsilon)}\rp \nonumber \\
&  =  & -\frac{1}{\pi}\lim_{\epsilon\rightarrow 0^+} \mbox{imag}\lp \frac{-(\beta+\eta)\pm\sqrt{-\epsilon^2+(\beta+\eta)^2-4\beta\eta -2i\epsilon (\beta+\eta-2\beta\eta)}}{2(-\epsilon^2-i\epsilon)}\rp \nonumber \\
&  =  & -\frac{1}{\pi}\lim_{\epsilon\rightarrow 0^+} \mbox{imag}\lp \frac{-(\beta+\eta)\pm\sqrt{(\beta-\eta)^2  }}{2(-i\epsilon)}\rp \nonumber \\
&  =  & -\frac{1}{\pi}\lim_{\epsilon\rightarrow 0^+} \mbox{imag}\lp \frac{-(\beta+\eta)-|\beta-\eta| }{-2i\epsilon}\rp \nonumber \\
&  =  & \lp \beta+\eta+|\beta-\eta|  \rp \lp -\frac{1}{\pi}\lim_{\epsilon\rightarrow 0^+} \mbox{imag}\lp \frac{1}{i\epsilon}\rp \rp\nonumber \\
&  =  & \max(\beta,\eta)\delta(0),   \label{eq:typwcanl46}
\end{eqnarray}
where the fourth equality (the choice of the ``$-$" sign in $\pm$) follows since $0\leq x_c$ (the spectrum belongs to the interval $[0,1]$ and $x_c$ must be in the spectrum) and the last equality follows since by convention
\begin{eqnarray}
\delta(0)=\lp -\frac{1}{\pi}\lim_{\epsilon\rightarrow 0^+} \mbox{imag}\lp \frac{1}{i\epsilon}\rp \rp.   \label{eq:typwcanl47}
\end{eqnarray}
To see the rationale behind (\ref{eq:typwcanl47}) we briefly digress and start with
\begin{equation}
g(x)=\delta(x). \label{eq:typwcanl48}
\end{equation}
Then from (\ref{eq:typwcanl6})
\begin{equation}
G(z)=\int_x \frac{\delta(x)dx}{z-x}=\frac{1}{z}, \label{eq:typwcanl49}
\end{equation}
and from (\ref{eq:typwcanl7})
\begin{equation}
\delta(x)=-\lim_{\epsilon\rightarrow 0^+}\frac{\mbox{imag}(G(x+i\epsilon))}{\pi}. \label{eq:typwcanl50}
\end{equation}
For $x=0$ then
\begin{equation}
\delta(0)=-\lim_{\epsilon\rightarrow 0^+}\frac{\mbox{imag}(G(i\epsilon))}{\pi}
=-\lim_{\epsilon\rightarrow 0^+}\mbox{imag}\lp\frac{1}{\pi i\epsilon}\rp
=-\frac{1}{\pi}\lim_{\epsilon\rightarrow 0^+}\mbox{imag}\lp\frac{1}{i\epsilon}\rp, \label{eq:typwcanl51}
\end{equation}
which is identical to (\ref{eq:typwcanl47}). The above description of the delta function may not necessarily be the most adequate one. However, for what we need here it is conceptually sufficient. Namely, we are here interested in determining the proportionality constants that multiply the delta functions rather than the functions' expressions themselves. One way to make everything more adequate would be to translate everything into the continuous domain by choosing a continuous function as an asymptotic replacement for $\delta(x)$. For example, one can use the Gaussian continual approximation
\begin{equation}
\delta(x)=\lim_{\sigma\rightarrow 0^+}\frac{e^{-\frac{x^2}{2\sigma^2}}}{\sqrt{2\pi\sigma^2}}. \label{eq:typwcanl52}
\end{equation}
Then all the above holds for small $\sigma=\epsilon\sqrt{\pi}/2$ and
\begin{equation}
\delta(x)\rightarrow \lim_{\sigma\rightarrow 0^+}\frac{e^{-\frac{x^2}{2\sigma^2}}}{\sqrt{2\pi\sigma^2}}
\rightarrow \lim_{\sigma\rightarrow 0^+}\frac{e^{-\frac{x^2}{\pi\epsilon^2}}}{\pi\epsilon}\quad \mbox{and} \quad
\delta(0)\rightarrow \lim_{\sigma\rightarrow 0^+}\frac{1}{\pi\epsilon}. \label{eq:typwcanl53}
\end{equation}
While computing and manipulating the transforms mentioned, it is necessary to consider the minor $\epsilon$-differences they induce. These differences are conceptually insignificant, and our results remain valid for small values of $\sigma$ or $\epsilon$. Including these $\epsilon$-modifications in our writing would lead to a more cumbersome presentation, demanding the addition of many minor details to demonstrate their marginal impact. Since, on the other hand, they contribute exactly nothing to the essence of the arguments and final results we chose to operate in a semi-discrete domain with the delta functions. As a consequence one has the expressions given in (\ref{eq:typwcanl47}) and (\ref{eq:typwcanl51}). We believe that a little bit of conventional inadequacy is better than to overwhelm the presentation with details which would make the overall content less accessible. % and potentially even less understandable.

\underline{\bl{\textbf{ 2) Finding $f_1$:}}} To determine $f_1$ we follow the above methodology and start by observing from (\ref{eq:typwcanl43}) for $x=1$
\begin{eqnarray}
f_{\tilde{D}}(1) =  -\lim_{\epsilon\rightarrow 0^+} \frac{\mbox{imag}(G_{\tilde{D}}(1+i\epsilon))}{\pi}.   \label{eq:typwcanl54}
\end{eqnarray}
Further utilization of (\ref{eq:typwcanl42}) gives
{\small \begin{eqnarray}
f_{\tilde{D}}(1) &  =  & \-\frac{1}{\pi}\lim_{\epsilon\rightarrow 0^+} \mbox{imag}\lp \frac{1+i\epsilon-(\beta+\eta)\pm\sqrt{(1+i\epsilon-(\beta+\eta))^2+4\beta\eta i\epsilon}}{2((1+i\epsilon)^2-1-i\epsilon)}\rp \nonumber \\
&  =  & -\frac{1}{\pi}\lim_{\epsilon\rightarrow 0^+} \mbox{imag}\lp \frac{1-(\beta+\eta)\pm\sqrt{-\epsilon^2+(1-(\beta+\eta))^2 -2i\epsilon (-1+\beta+\eta-2\beta\eta)}}{2(-\epsilon^2+i\epsilon)}\rp \nonumber \\
&  =  & -\frac{1}{\pi}\lim_{\epsilon\rightarrow 0^+} \mbox{imag}\lp \frac{1-(\beta+\eta)\pm\sqrt{(1-(\beta-\eta))^2  }}{2(i\epsilon)}\rp \nonumber \\
&  =  & -\frac{1}{\pi}\lim_{\epsilon\rightarrow 0^+} \mbox{imag}\lp \frac{1-(\beta+\eta)+|1-(\beta+\eta)| }{2i\epsilon}\rp \nonumber \\
&  =  & \lp 1-(\beta+\eta)+|1-(\beta+\eta)|  \rp \lp -\frac{1}{\pi}\lim_{\epsilon\rightarrow 0^+} \mbox{imag}\lp \frac{1}{i\epsilon}\rp \rp\nonumber \\
&  =  & \max(1-(\beta+\eta),0)\delta(0),   \label{eq:typwcanl55}
\end{eqnarray}}where the fourth equality (the choice of the ``$+$" sign in $\pm$) follows since now $x_c\leq 1$ and the last equality follows by the above discussed $\delta(0)$ convention.

\underline{\bl{\textbf{ 3) Finding $f_{\tilde{D}}^{(b)}(x)$:}}} To determine $f_{\tilde{D}}^{(b)}(x)$ for $x\notin \{0,1\}$ we again start with (\ref{eq:typwcanl43}) and, for a general $x$, we write the following % from the bulk of the spectrum
\begin{eqnarray}
f_{\tilde{D}}^{(b)}(x) =  -\lim_{\epsilon\rightarrow 0^+} \frac{\mbox{imag}(G_{\tilde{D}}(x+i\epsilon))}{\pi}.   \label{eq:typwcanl56}
\end{eqnarray}
Relying once again on (\ref{eq:typwcanl42}) we, for $x\notin \{0,1\}$, have%%
{\tiny \begin{eqnarray}
f_{\tilde{D}}^{(b)}(x) &  =  & -\frac{1}{\pi}\lim_{\epsilon\rightarrow 0^+} \mbox{imag}\lp \frac{x+i\epsilon-(\beta+\eta)\pm\sqrt{(x+i\epsilon-(\beta+\eta))^2+4\beta\eta (x+i\epsilon-1}}{2((x+i\epsilon)^2-x-i\epsilon)}\rp \nonumber \\
&  =  & -\frac{1}{\pi}\lim_{\epsilon\rightarrow 0^+} \mbox{imag}\lp \frac{x-(\beta+\eta)\pm\sqrt{-\epsilon^2+(x-(\beta+\eta))^2+4\beta\eta(x-1)-2i\epsilon (-x+\beta+\eta-2\beta\eta)}}{2(x^2-x-\epsilon^2+i\epsilon(2x-1))}\rp \nonumber \\
&  =  & -\frac{1}{\pi}\lim_{\epsilon\rightarrow 0^+} \mbox{imag}\lp \frac{x-(\beta+\eta)\pm\sqrt{(x-(\beta+\eta))^2+4\beta\eta(x-1)-2i\epsilon (-x+\beta+\eta-2\beta\eta)}}{2(x^2-x)}\rp \nonumber \\
&  =  & -\frac{1}{\pi}\lim_{\epsilon\rightarrow 0^+} \mbox{imag}\lp \frac{\pm\sqrt{(x-(\beta+\eta))^2+4\beta\eta(x-1)-2i\epsilon (-x+\beta+\eta-2\beta\eta)}}{2(x^2-x)}\rp.   \label{eq:typwcanl56}
\end{eqnarray}}
Now, since one is interested in the imaginary part of interest is the region of $x$ where the real part under the root is negative (outside that region, i.e in the region of $x$ where the real part under the root is nonnegative $f_{\tilde{D}}^{(b)}(x)$ is zero). To determine the region of interest we start by setting
\begin{eqnarray}
T_{\tilde{D}}  \triangleq   \{x\in\mR| (x-(\beta+\eta))^2+4\beta\eta(x-1) \leq 0\} \quad \mbox{and}\quad
x_c\triangleq  \beta+\eta-2\beta\eta.
\label{eq:typwcanl57}
\end{eqnarray}
To explicitly characterize $T_{\tilde{D}}$ we look at the following
\begin{eqnarray}
\begin{array}{r r c l}
$ $  & (x-(\beta+\eta))^2+4\beta\eta(x-1) & = &  0  \\
\Longleftrightarrow  &  x^2-2x(\beta+\eta-2\beta\eta)+(\beta+\eta)^2-4\beta\eta & = & 0 \\
\Longleftrightarrow  &  x^2-2x(\beta+\eta-2\beta\eta)+(\beta-\eta)^2 & = & 0.
\end{array}\label{eq:typwcanl58}
\end{eqnarray}
Solving for $x$ one finds
\begin{eqnarray}
x &  = &   \frac{2(\beta+\eta-2\beta\eta)\pm\sqrt{(2(\beta+\eta-2\beta\eta))^2-4(\beta-\eta)^2}}{2} \nonumber \\
& = & \beta+\eta-2\beta\eta \pm \sqrt{(\beta+\eta-2\beta\eta)^2-(\beta-\eta)^2}.
\label{eq:typwcanl59}
\end{eqnarray}
Setting
\begin{eqnarray}
x_l & \triangleq & \beta+\eta-2\beta\eta  -  \sqrt{(\beta+\eta-2\beta\eta)^2-(\beta-\eta)^2} \nonumber \\
x_u & \triangleq & \beta+\eta-2\beta\eta  +  \sqrt{(\beta+\eta-2\beta\eta)^2-(\beta-\eta)^2},
\label{eq:typwcanl60}
\end{eqnarray}
one has
\begin{eqnarray}
T_{\tilde{D}}=\{x\in\mR| x\in [x_l,x_u]\}.
\label{eq:typwcanl61}
\end{eqnarray}
Moreover, from (\ref{eq:typwcanl60}), one also has
\begin{eqnarray}
0\leq x_l \leq x_u\leq 1,
\label{eq:typwcanl62}
\end{eqnarray}
with
\begin{eqnarray}
x_l=0 \quad \mbox{if} \quad \beta=\eta \quad \mbox{and} \quad x_u=1  \quad \mbox{if} \quad \beta=\eta=0.5.
\label{eq:typwcanl63}
\end{eqnarray}
The first two inequalities in (\ref{eq:typwcanl60}) are trivial, whereas the third one follows after noting
\begin{eqnarray}
\beta+\eta-2\beta\eta\leq \max(\beta,1-\beta)\leq 1,
\label{eq:typwcanl64}
\end{eqnarray}
and observing the following sequence
\begin{eqnarray}
\begin{array}{r r c l}
&  \beta+\eta-2\beta\eta  +  \sqrt{(\beta+\eta-2\beta\eta)^2-(\beta-\eta)^2} & \leq & 1 \\
\Longleftrightarrow   &    (\beta+\eta-2\beta\eta)^2-(\beta-\eta)^2 & \leq & (1-(\beta+\eta-2\beta\eta))^2 \\
\Longleftrightarrow   &    -(\beta-\eta)^2 & \leq & 1-2(\beta+\eta-2\beta\eta) \\
\Longleftrightarrow   &    -(\beta-\eta)^2+2(\beta+\eta-2\beta\eta)-1 & \leq & 0 \\
\Longleftrightarrow   &    -(\beta+\eta)^2+2(\beta+\eta)-1 & \leq & 0 \\
\Longleftrightarrow   &    -(1-(\beta+\eta))^2 & \leq & 0.
\end{array}
\label{eq:typwcanl65}
\end{eqnarray}
Returning to (\ref{eq:typwcanl56}) we further have for $x\in T_{\tilde{D}}=[x_l,x_u]$
\begin{eqnarray}
f_{\tilde{D}}^{(b)}(x)
&  =  & -\frac{1}{\pi}\lim_{\epsilon\rightarrow 0^+} \mbox{imag}\lp \frac{\pm\sqrt{(x-(\beta+\eta))^2+4\beta\eta(x-1)-2i\epsilon (-x+\beta+\eta-2\beta\eta)}}{2(x^2-x)}\rp \nonumber \\
&  =  & -\frac{1}{\pi}\lim_{\epsilon\rightarrow 0^+} \mbox{imag}\lp \frac{\pm\sqrt{(x-(\beta+\eta))^2+4\beta\eta(x-1)-2i\epsilon (x_c-x)}}{2(x^2-x)}\rp \nonumber \\
&  =  & -\frac{1}{\pi}\lim_{\epsilon\rightarrow 0^+} \mbox{imag}\lp \frac{i\sqrt{-(x-(\beta+\eta))^2-4\beta\eta(x-1)}}{2(x^2-x)}\rp,   \label{eq:typwcanl66}
\end{eqnarray}
where the ``$+$" plus sign is chosen if $x_c\leq x\leq x_u$ and the ``$-$" sign is chosen if $x_l\leq x\leq x_c$. Finally, from (\ref{eq:typwcanl66}) one easily finds
\begin{eqnarray}
f_{\tilde{D}}^{(b)}(x)   &  =  &
\frac{\sqrt{-(x-(\beta+\eta))^2-4\beta\eta(x-1)}}{2\pi(x-x^2)} \quad \mbox{ if } \quad x_l\leq x\leq x_u.
\label{eq:typwcanl68}
\end{eqnarray}

The above is then sufficient to completely characterize the spectral distribution $f_{\tilde{D}}(x)$. We summarize the results in the following lemma (basically a mirrored version of Lemma \ref{lemma:lemma1}).

\begin{lemma}\label{lemma:typwclemma2}
Assume large $n$ linear regime with
\begin{eqnarray}
\beta\triangleq \lim_{n\rightarrow \infty}\frac{k}{n}\quad \mbox{and} \quad \eta\triangleq \lim_{n\rightarrow \infty}\frac{l}{n}. \label{eq:typwclemma2eq1}
\end{eqnarray}
Let $\bV^{\perp}\in\mR^{n\times (n-k)}$ be a Haar distributed basis of an $n-k$-dimensional subspace of $R^n$. Analogously, let $\bU_D^{\perp}\in\mR^{n\times (n-k)}$ be a Haar distributed basis of an $n-l$-dimensional subspace of $R^n$. Moreover, let $\bV^{\perp}\in\mR^{n\times (n-k)}$ and $\bU_D^{\perp}\in\mR^{n\times (n-k)}$ be independent of each other. Also, let ${\cal V}$, ${\cal U}$, and $\tilde{D}$ be defined as
\begin{eqnarray}
{\cal V} & \triangleq & \bV^{\perp}(\bV^{\perp})^T \nonumber \\
{\cal U} & \triangleq & \bU_D^{\perp}(\bU_D^{\perp})^T\nonumber \\
\tilde{D} & \triangleq & {\cal V}{\cal U}. \label{eq:typwclemma2eq2}
\end{eqnarray}
Set $x_l$ and $x_u$ as in (\ref{eq:typwcanl60}), i.e.
\begin{eqnarray}
x_l & \triangleq & \beta+\eta-2\beta\eta  -  \sqrt{(\beta+\eta-2\beta\eta)^2-(\beta-\eta)^2} \nonumber \\
x_u & \triangleq & \beta+\eta-2\beta\eta  +  \sqrt{(\beta+\eta-2\beta\eta)^2-(\beta-\eta)^2},
\label{eq:typwclemma2eq3}
\end{eqnarray}
Then the limiting spectral distribution of $\tilde{D}$, $f_{\tilde{D}}(x)$, is
\begin{eqnarray}
f_{\tilde{D}}(x)  &   = &  f_0\delta(x) +  f_{\tilde{D}}^{(b)}(x) + f_1f_0\delta(x-1) \nonumber \\
&  =  &
\max(\beta,\eta)\delta(x) + f_{\tilde{D}}^{(b)}(x)
+\max(1-(\beta+\eta),0)\delta(x-1),
\label{eq:typwclemma2eq4}
\end{eqnarray}
with
\begin{eqnarray}
f_{\tilde{D}}^{(b)}(x)   &  =  &
\begin{cases}
	\frac{\sqrt{-(x-(\beta+\eta))^2-4\beta\eta(x-1)}}{2\pi(x-x^2)}, & \mbox{if }  x_l\leq x\leq x_u. \\
	0, & \mbox{otherwise}.
\end{cases}
\label{eq:typwclemma2eq5}
\end{eqnarray}
\end{lemma}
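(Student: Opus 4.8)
The plan is to obtain $f_{\tilde{D}}$ by running the free-probability pipeline recalled in (\ref{eq:typwcanl6})--(\ref{eq:typwcanl10}) to produce a closed-form Stieltjes transform $G_{\tilde{D}}$, and then to invert it via the formula (\ref{eq:typwcanl7}), separating the measure into a continuous bulk and two atoms. The structural fact that makes this work is that $\bV^{\perp}$ and $\bU_D^{\perp}$ are \emph{independent} and Haar distributed, so the projectors $\calV$ and $\calU$ are asymptotically free; this is precisely the hypothesis under which Voiculescu's multiplicativity of the $S$-transform applies.

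First I would record the limiting spectral laws of the two projectors. Since $\bV^{\perp}$ and $\bU_D^{\perp}$ span subspaces of dimensions $n-k$ and $n-l$, the matrices $\calV$ and $\calU$ are orthogonal projectors whose empirical spectra converge to $(1-\beta)\delta(x-1)+\beta\delta(x)$ and $(1-\eta)\delta(x-1)+\eta\delta(x)$, as in (\ref{eq:typwcanl22}). Passing through the $G$-, $R$-, and $S$-transforms via (\ref{eq:typwcanl8})--(\ref{eq:typwcanl9}) gives $S_\calV(z)=\frac{z+1}{z+1-\beta}$ and $S_\calU(z)=\frac{z+1}{z+1-\eta}$ (Lemma \ref{lemma:typwclemma1}). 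Invoking asymptotic freeness together with the multiplicativity rule in (\ref{eq:typwcanl10}) yields $S_{\tilde{D}}(z)=S_\calV(z)S_\calU(z)$, and translating this product back into an $R$-transform and then into a functional equation for $G_{\tilde{D}}$ produces the quadratic (\ref{eq:typwcanl40}), whose solution is the closed form $G_{\tilde{D}}^{\pm}(z)$ of (\ref{eq:typwcanl42}) (Lemma \ref{lemma:typwclemma1a}).

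Next I would apply the inversion formula (\ref{eq:typwcanl7}) to $G_{\tilde{D}}^{\pm}$. The continuous bulk arises exactly where the quantity under the square root in (\ref{eq:typwcanl42}) acquires a negative real part; solving $(x-(\beta+\eta))^2+4\beta\eta(x-1)=0$ localizes this region to $[x_l,x_u]$ with $x_l,x_u$ as in (\ref{eq:typwcanl60}), and reading off the imaginary part gives the density $f_{\tilde{D}}^{(b)}$ in (\ref{eq:typwcanl68}). The two point masses come from the simple zeros of the denominator $z^2-z$ at $z=0$ and $z=1$: evaluating the $\epsilon\to 0^+$ limit of $\mbox{imag}(G_{\tilde{D}}(i\epsilon))$ and $\mbox{imag}(G_{\tilde{D}}(1+i\epsilon))$ against the normalized-delta convention (\ref{eq:typwcanl47}) yields the weights $\max(\beta,\eta)$ at $0$ and $\max(1-(\beta+\eta),0)$ at $1$, as computed in (\ref{eq:typwcanl46}) and (\ref{eq:typwcanl55}). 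Assembling these three pieces gives (\ref{eq:typwclemma2eq4}).

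The main obstacle I anticipate is the careful bookkeeping of the square-root branch in $G_{\tilde{D}}^{\pm}$. The correct sign flips across the breakpoint $x_c=\beta+\eta-2\beta\eta$ defined in (\ref{eq:typwcanl57}), and the sign in each subinterval must be fixed by the requirement that the extracted density and atom masses be nonnegative (and that the total mass be one); choosing the wrong branch at $x=0$ or $x=1$ would corrupt the atom weights. A secondary technical point is that the inversion formula (\ref{eq:typwcanl7}) is really stated for continuous densities, so extracting the atoms rigorously calls for regularizing each $\delta$ as a limit of continuous approximations, e.g. the Gaussian family (\ref{eq:typwcanl52}); I would argue that the associated $\epsilon$-corrections vanish in the limit and leave the extracted weights unchanged, allowing the whole computation to be carried out in the convenient semi-discrete domain.
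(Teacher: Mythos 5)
Your proposal is correct and follows essentially the same route as the paper's own proof: projector spectral laws $\rightarrow$ $S$-transforms via Lemma \ref{lemma:typwclemma1}, Voiculescu multiplicativity under asymptotic freeness, the quadratic (\ref{eq:typwcanl40}) yielding $G_{\tilde{D}}^{\pm}$ in Lemma \ref{lemma:typwclemma1a}, and Stieltjes inversion with branch tracking across $x_c$ to extract the bulk (\ref{eq:typwcanl68}) and the atom weights (\ref{eq:typwcanl46}), (\ref{eq:typwcanl55}). The two technical caveats you flag (branch selection and the semi-discrete $\delta$ convention) are exactly the ones the paper handles.
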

\begin{proof}
Follows through a combination of (\ref{eq:typwcanl44}), (\ref{eq:typwcanl46}),  (\ref{eq:typwcanl55}),  (\ref{eq:typwcanl68}), and the above discussion.
\end{proof}

In Figure \ref{fig:ftildeDbeta01eta08} we show the spectral function obtained based on the above lemma for $\beta=0.1$ and $\eta=0.8$. We observe a very strong agreement between the simulated results and the above theoretical predictions. Simulation results were obtained using moderately large $n=4000$.

\begin{figure}[h]
%\begin{minipage}[b]{.5\linewidth}
\centerline{\includegraphics[width=.7\columnwidth]{cinfspecplotsftildeDbeta01eta08PAP}}
%\centering
%\centerline{\epsfig{figure=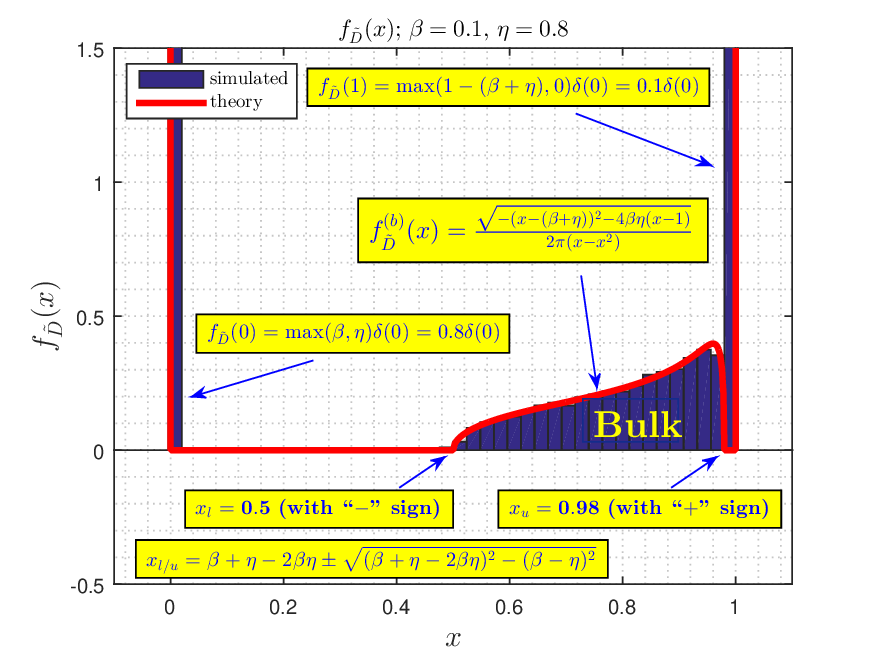,width=13.5cm,height=8cm}}
%\end{minipage}
%\begin{minipage}[b]{.5\linewidth}
%\centering
%\centerline{\epsfig{figure=finprerral08.eps,width=9cm,height=6.5cm}}
%\end{minipage}
\caption{$f_{\tilde{D}}(x)$ -- spectral function of $\tilde{D}$; $\beta=0.1$ and $\eta=0.8$}
\label{fig:ftildeDbeta01eta08}
\end{figure}

In Figure \ref{fig:ftildeDbeta02eta09} we show the spectral function obtained based on the above lemma for $\beta=0.2$ and $\eta=0.9$. Due to a remarkable property of the underlying functions the spectrum is identical as in Figure \ref{fig:ftildeDbeta01eta08} apart from the fact that the multiplier of the delta function at zero is increased from $0.8$ to $0.9$ at the expense of removing the delta function at one. We also again observe a very strong agreement between the simulated results and the theoretical predictions. As in Figure \ref{fig:ftildeDbeta01eta08}, Simulation results were again obtained for $n=4000$.

\begin{figure}[h]
%\begin{minipage}[b]{.5\linewidth}
\centerline{\includegraphics[width=.7\columnwidth]{cinfspecplotsftildeDbeta02eta09PAP}}
%\centerline{\epsfig{figure=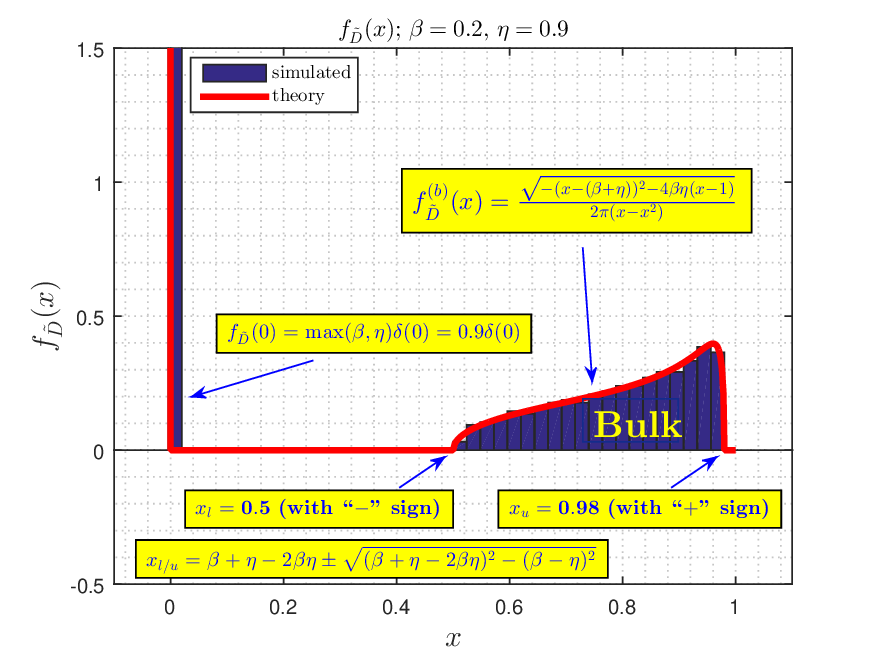,width=13.5cm,height=8cm}}
%\end{minipage}
%\begin{minipage}[b]{.5\linewidth}
%\centering
%\centerline{\epsfig{figure=finprerral08.eps,width=9cm,height=6.5cm}}
%\end{minipage}
\caption{$f_{\tilde{D}}(x)$ -- spectral function of $\tilde{D}$; $\beta=0.2$ and $\eta=0.9$}
\label{fig:ftildeDbeta02eta09}
\end{figure}

Various other proofs of the lemma are possible. We, however, found the above one as very instructive. %  and we believe that its methodological nature will turn out to be useful in studying other MC MNAR missingness forms.

%%%%%%%%%%%%%%%%%%%%%%%%%%%%%%%%%%%%%%%%%%%%%%%%%%%%%%%%%%%%%%%%%%%%%%%%%%%%%%%%%%%%%%%%%%%%%%%%%%%%%%%%%%%%%
%%%%%%%%%%%%%%%%%%%%%%%%%%%%%%%%%%%%%%%%%%%%%%%%%%%%%%%%%%%%%%%%%%%%%%%%%%%%%%%%%%%%%%%%%%%%%%%%%%%%%%%%%%%%%
%%%%%%%%%%%%%%%%%%%%%%%%%%%%%%%%%%%%%%%%%%%%%%%%%%%%%%%%%%%%%%%%%%%%%%%%%%%%%%%%%%%%%%%%%%%%%%%%%%%%%%%%%%%%%
\section{Real data properties}
%%%%%%%%%%%%%%%%%%%%%%%%%%%%%%%%%%%%%%%%%%%%%%%%%%%%%%%%%%%%%%%%%%%%%%%%%%%%%%%%%%%%%%%%%%%%%%%%%%%%%%%%%%%%%
%%%%%%%%%%%%%%%%%%%%%%%%%%%%%%%%%%%%%%%%%%%%%%%%%%%%%%%%%%%%%%%%%%%%%%%%%%%%%%%%%%%%%%%%%%%%%%%%%%%%%%%%%%%%%
%%%%%%%%%%%%%%%%%%%%%%%%%%%%%%%%%%%%%%%%%%%%%%%%%%%%%%%%%%%%%%%%%%%%%%%%%%%%%%%%%%%%%%%%%%%%%%%%%%%%%%%%%%%%%

As we have mentioned in the introduction, it is very important to properly choose the model for the analyzed system. Since the algorithmic methodology that we analyzed applies to a wide range of causal inference settings (and especially so in the block causal inference), we want to ensure that the underlying model is properly selected so to represent a good fit to real data. The approximately low rank model that we consider is typically the preferred choice when the deviation from the ideal low rankness is caused by internal data properties rather by the errors in observations. A typical example where that happens is the so-called California tobacco study from \cite{ADHsynth10,AbaCat2018} (a golden standard, most often used in real data studies in recent years within the synthetic control methods).

Namely, as we discussed in the paper, in this example one postulates that it is likely that over years there is a similar trend in tobacco consumption across a set of different states in North America. This in turn implies that a matrix containing (in each row) annual consumption data of each state should be approximately low rank. In Figure
\ref{fig:CSEsingval}, we plot the spectrum of the resulting tobacco consumption matrix. As it can be seen from the figure, the spectrum clearly indicates the approximately low rank properties. Magnitudes of several largest singular values are substantially larger than magnitudes of the remaining ones (the matrix also happens to have a strong spiked component that doesn't change anything regarding the overall approximate low rankness; in fact it actually strengthens it).

For completeness, we do mention that the matrix of interest is of size $38\times 31$ and that it contains the annual average cigarette consumption for $38$ states over a span of $31$ years (this basically implies that the size of $X_{sol}$ is $38\times 31$ and the total number of singular values is $31$ as the figure indicates). We once again point out that one should observe that the approximate low rankness comes as a consequence of the internal structure of the data rather than as a consequence of incorrectly recorded data.

\begin{figure}[ht]
\vskip 0.2in
\begin{minipage}[b]{.7\linewidth}
\begin{center}
	%\centerline{\includegraphics[width=\columnwidth]{icml_numpapers}}
	\centerline{\includegraphics[width=1.0\columnwidth]{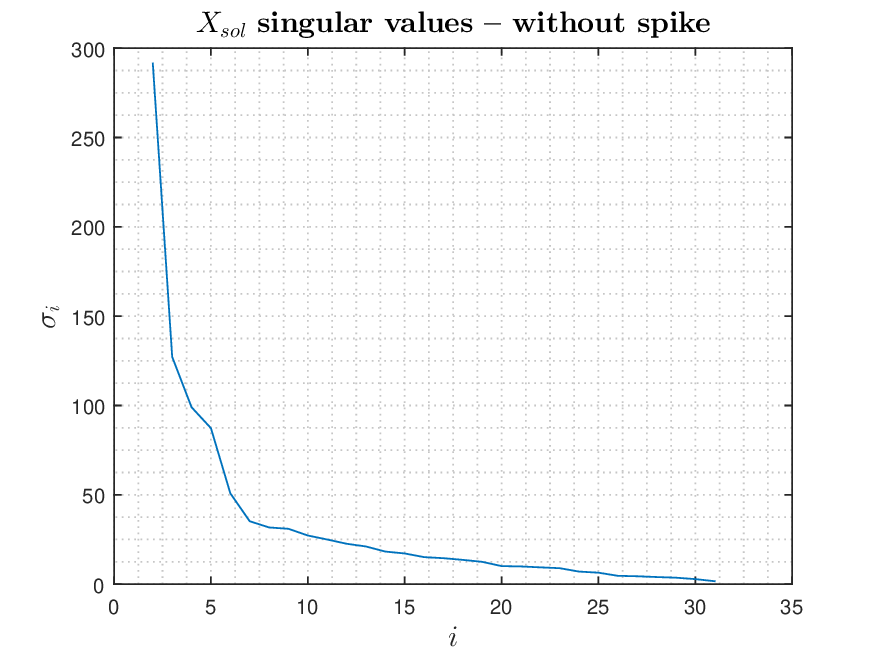}}
	%\centerline{\includegraphics[width=\columnwidth]{cinfspecMmatrixblockcinfPAP-eps-converted-to}}
	%\vspace{-.0in}\caption{Block C-inf $\ell_1^{*}$'s phase transition (PT) -- simulations; non-square scenario $\kappa=1.2$}
	%\vspace{-.0in}\caption{Block C-inf}
	%\label{fig:numrestypacnsqbetaetaptsim}
\end{center}
\end{minipage}
\begin{minipage}[b]{.7\linewidth}
\begin{center}
	%\centerline{\includegraphics[width=\columnwidth]{icml_numpapers}}
	\centerline{\includegraphics[width=1.0\columnwidth]{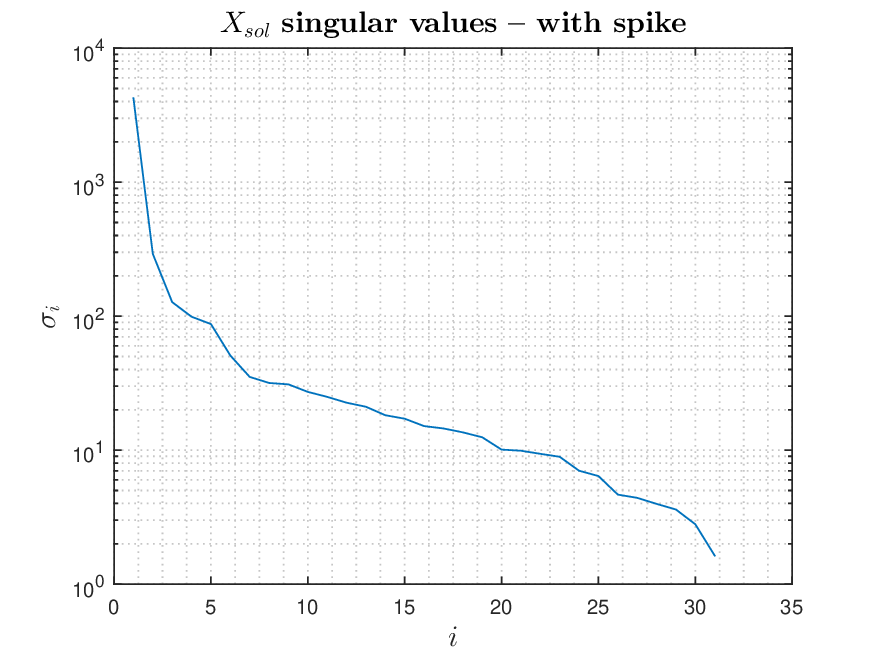}}
	%\centerline{\includegraphics[width=\columnwidth]{cinfspecMmatrixblockcinfPAP-eps-converted-to}}
	%\vspace{-.0in}\caption{Block C-inf}
	%\label{fig:numrestypacbetaetaptsim}
\end{center}
\end{minipage}
\vspace{-.35in}\caption{California smoking data set. We plot the spectrum of $X_{sol}$ (without and with the largest component) -- (upper part) \underline{without} rank one spike; (lower part) \underline{with} rank one spike. One of the singular values is substantially larger than any of the remaining ones (the largest singular value is above $4000$, whereas all others are smaller than $300$).}
\vskip -0.05in
\label{fig:CSEsingval}
\end{figure}

%\newpage
% Bibliography
%\bibliographystyle{ACM-Reference-Format}
%\bibliographystyle{achicago}
%\bibliographystyle{abbrvnat}
%\bibliography{cinfspecidealRefs1v2}

%%%%

\bibliographystyle{imsart-number} % Style BST file (imsart-number.bst or imsart-nameyear.bst)
% Bibliography
%\bibliographystyle{ACM-Reference-Format}
%\bibliographystyle{abbrvnat}
\bibliography{cinfspecidealRefs1v2}       % Bibliography file (usually '*.bib')

%% or include bibliography directly:
% \begin{thebibliography}{}
% \bibitem{b1}
% \end{thebibliography}

\end{document}